\definecolor{mygray}{gray}{.5}
\definecolor{mypink}{rgb}{.99,.91,.95}
\definecolor{mycyan}{cmyk}{.3,0,0,0}
\newtheorem{example}{Example}
\newtheorem{definition}{Definition}
\newtheorem{lemma}{Lemma}
\newenvironment{proof}{{\it Proof}.}{\hfill $\square$\par}
\newtheorem{observation}{Observation}
\def\BibTeX{{\rm B\kern-.05em{\sc i\kern-.025em b}\kern-.08em
    T\kern-.1667em\lower.7ex\hbox{E}\kern-.125emX}}
\begin{document}

% \title{Maximizing the Influence of Bichromatic Reverse Geo-Social and Textual $k$ Nearest Neighbors}
\title{Maximizing the Influence of Bichromatic Reverse $k$ Nearest Neighbors in Geo-Social Networks}

%\thanks{Identify applicable funding agency here. If none, delete this.}
%}

\author{
Pengfei Jin$^{\dagger}$, Lu Chen$^{\dagger}$, Yunjun Gao$^{\dagger}$, Xueqin Chang$^{\dagger}$, Zhanyu Liu$^{\dagger}$, Christian S. Jensen$^{\sharp}$\\
\normalsize $^{\dagger}$\emph{College of Computer Science, Zhejiang University, Hangzhou, China} \\
\normalsize $^{\sharp}$\emph{Department of Computer Science, Aalborg University, Denmark}\\
\emph{\{jinpf, luchen, gaoyj, changxq, zhanyuliu\}@zju.edu.cn \quad csj@cs.aau.dk} \\
}

\maketitle

\begin{abstract}
Geo-social networks offer opportunities for the marketing and promotion of geo-located services. In this setting, we explore
a new problem, called \underline{Max}imizing the \underline{Inf}luence of \underline{B}ichromatic \underline{R}everse \underline{$k$} \underline{N}earest \underline{N}eighbors (MaxInfBR$k$NN). The objective is to find a set of points of interest (POIs), which are geo-textually and socially attractive to social influencers who are expected to largely promote the POIs through online influence propagation. In other words, the problem is to detect an optimal set of POIs with the largest word-of-mouth (WOM) marketing potential. This functionality is useful in various real-life applications, including social advertising, location-based viral marketing, and personalized POI recommendation. However, solving MaxInfBR$k$NN with theoretical guarantees is challenging, because of the prohibitive
overheads on BR$k$NN retrieval in geo-social networks, and the NP
and \#P-hardness in searching the optimal POI set. To achieve practical solutions, we present a framework with carefully designed indexes, efficient batch BR$k$NN processing algorithms, and
alternative POI selection policies that support both approximate and heuristic solutions. Extensive experiments on real and synthetic datasets demonstrate the good performance of our proposed methods.
\end{abstract}

%the major challenge come 
% Geo-social networks offer opportunities for the marketing and promotion of geo-located services. In this setting, we explore
% a new problem, called \underline{Max}imizing the \underline{Inf}luence of \underline{B}ichromatic \underline{R}everse \underline{$k$} \underline{N}earest \underline{N}eighbors (MaxInfBR$k$NN). The objective is to find a set of points of interest (POIs), which are geo-textually and socially attractive to social influencers who are expected to largely promote the POIs through online influence propagation. In other words, the problem is to detect an optimal set of POIs with the largest word-of-mouth (WOM) marketing potential. This functionality is useful in various real-life applications, including social advertising, location-based viral marketing, and personalized POI recommendation. However, solving MaxInfBR$k$NN with theoretical guarantees is challenging, because BR$k$NNs retrieval in geo-social networks is unscalable and often incurs prohibitive
% overheads, meanwhile, even the exact BR$k$NNs are  efficiently computed, finding the optimal POI set with the largest social influence is also with NP and \#P-hardness, which is still computationally costly. To achieve practical solutions, we present a framework with carefully designed indexes, efficient batch BR$k$NN processing algorithms, and
% alternative POI selection policies that support both approximate and heuristic solutions. Extensive experiments on real and synthetic datasets demonstrate the good performance of our proposed methods.
% \end{abstract}

\begin{IEEEkeywords}
Geo-social networks, Bichromatic reverse $k$ nearest neighbors, Social influencers, Word-of-mouth, Algorithms
\end{IEEEkeywords}

\section{Introduction}
\label{sec:intro}

The problem of bichromatic reverse $k$ nearest neighbor (BR$k$NN) search and maximizing BR$k$NN (MaxBR$k$NN) have received attention due to their importance in a wide range of applications~\cite{choudhury2016vldb,ks2000sigmod,raymond2009vldb}. Given two sets $\mathcal{P}$ and $\mathcal{U}$ representing points of interest (POIs) and users, respectively, the BR$k$NN query is issued for a data point $p\in\mathcal{P}$, to find all users $u\in \mathcal{U}$ having $p$ as one of their $k$ nearest neighbors under a given distance definition~\cite{ks2000sigmod}. The set of BR$k$NN results is also referred as the influence set of $p$, and can be utilized for finding prospective customers~\cite{zhao2017icde}. Based on BR$k$NN, the MaxBR$k$NN query and its variants are proposed to maximize the size of BR$k$NN results via optimal location selection~\cite{raymond2009vldb} or geo-textual tags recommendation~\cite{choudhury2018vldbj,choudhury2016vldb}, and thus target the marketing and promotion of geo-located services.

The increasing popularity of geo-social networks offers more opportunities for location-based marketing. By leveraging the word-of-mouth effect, sellers can advertise sales locations and services online to attract more customers. In social media platforms, influencers, also called cyber-celebrities, play an important role in information dissemination. It is reported that in the USA, more than half of social media users prefer purchasing products or services recommended by the influencers they follow and about 66\% of users believe recommendations by influencers are credible\footnote{\footnotesize https://edition.cnn.com/business/newsfeeds/globenewswire/\\7812666.html}. While influencers and the underlying influence propagation widely exist in the real scenarios, existing BR$k$NN and MaxBR$k$NN studies~\cite{choudhury2016vldb,raymond2009vldb} rarely take this into account. Rather, they assign equal importance to all users in BR$k$NN results, and focus on maximizing the size of BR$k$NNs. The following example illustrates this situation.
%However, in real scenarios, users would have distinct impact in online platforms, and their influence can still be further propagated among users via the underlining social networks. 
%The following example illustrates our motivations.

\vspace{1mm}
\begin{figure}[t]
\centering
%\vspace*{1.5mm}
\hspace{-2mm}
\subfigure[Users and POIs]{
 \includegraphics[width=1.08in]{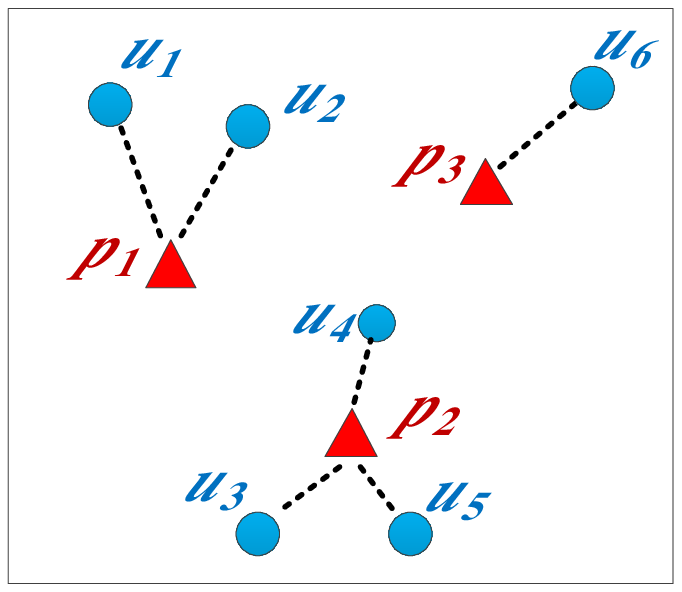}
}\hspace*{0.2mm}
\hspace{3mm}
\subfigure[Social relationships]{
 \includegraphics[width=1.08in]{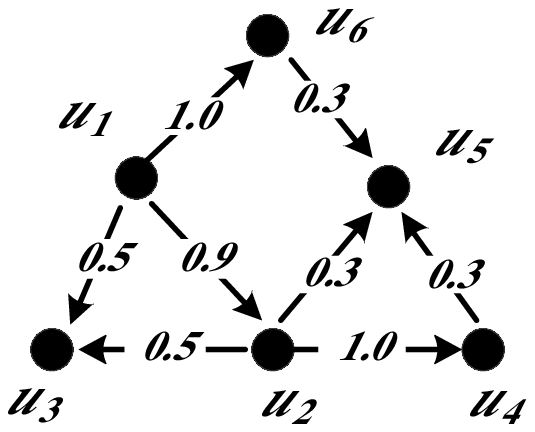}
}
\vspace{-2.5mm}
\caption{A motivating example}
\label{fig:motivating-example}
\vspace{-5mm}
\end{figure}

% \vspace{-1mm}
% \begin{example}\label{example:scenarios}
% Walmart plans to launch a shopping festival to celebrate New Year. In order to attract more  potential customers, the company attempts to advertise the shopping festival online. Given a limited budget (e.g., the limited number of characters in a single social media post), only a subset of stores can be advertised to the public. To maximize the influence, rather than simply attracting potential customers, the company instead aims to attract influencers, whose influence can subsequently reach many more potential customers.
% %instead of only attracting normal potential users, managers would prefer attracting more influencers, who have great impact in social networks, such that information can be effectively disseminated and finally influence more users online.
% \end{example}
% \vspace{-1mm}

\begin{example}\label{example:scenarios}
Fig.~\ref{fig:motivating-example}(a) depicts six users (blue circles) and three convenience stores (red triangles), where the dotted line between a user $u$ and a store $p$ means that $u$ has $p$ as one of their $k$ nearest neighbors. Fig~\ref{fig:motivating-example}(b) shows a social network where edges represent the social links between users, with edge labels indicating the probability of one user influencing another user.  Suppose stores $p_1$ and $p_2$ belonging to the same chain, and the manager plans to select the optimal one to launch a marketing campaign. A traditional MaxBR$k$NN would return  $p_2$, because $p_2$ has the largest BR$k$NN result set. However, as shown in Fig.~\ref{fig:motivating-example}(b), the users with $p_2$ as nearest store have little influence and cannot propagate their influence to other users, and thus, the influence is constrained within only three users. To fully enhance the online influence, a better choice may be $p_1$, as users $u_1$ and $u_2$ attracted by $p_1$ embody substantial influence, and they may be able to spread their influence throughout the social network.
\end{example}

We also note that existing related studies \cite{choudhury2018vldbj,choudhury2016vldb,gkorgkas2015sstd,jin2021ickg} aim at enhancing the BR$k$NN results for a single target. Whereas, for sellers with multiple stores or services, a combined promotion for multiple targets is likely to be preferable over a single promotion target.
To this end, we formulate a problem in geo-social networks called \underline{Max}imizing the \underline{Inf}luence of \underline{B}ichromatic \underline{R}everse  \underline{$k$} \underline{N}earest \underline{N}eighbors (MaxInfBR$k$NN). Given a query set $\mathcal{P}_c \subseteq \mathcal{P}$, and two integers $b$ and $k$, MaxInfBR$k$NN aims to find an optimal subset $\mathcal{P}_s$ of $\mathcal{P}_c$, such that $|\mathcal{P}_s|=b$ and the elements in $\mathcal{P}_s$ are highly relevant (i.e., become top-$k$ results) to influencers who can promote POIs effectively online. We assume that users
and POIs are located in a road network, since movement in
real-life settings is constrained to a spatial road network~\cite{zhao2017icde}. When measuring similarities between users and POIs, we consider both geo-textual and social relevance scores, as social media users are more likely to visit nearby relevant places that are also favored by their friends~\cite{gkorgkas2015sstd}. Advertisements in online platforms have normally the limited space or budgets, meaning that $b$ ($=|\mathcal{P}_s|\ll|\mathcal{P}_c|$) should be reasonably small.

Answering MaxInfBR$k$NN queries poses new challenges. A straightforward solution could be to first retrieve BR$k$NN results for each $p\in\mathcal{P}_s$, and then enumerate all the possible size-$b$ POI combinations and return the optimal one whose combined BR$k$NN results has the largest social influence. Although this approach is intuitive, it might be infeasible in practice. The major challenage comes from the prohibitive BR$k$NN retrieval costs in geo-social networks. Existing solutions~\cite{zhao2017icde} tackled this with a time complexity of  $O(\mathcal{|P|}\cdot\mathcal{|U|}^2)$, which fails to scale to larger datasets. Moreover, even though the exact BR$k$NN results are efficiently computed, finding the optimal POI set whose BR$k$NNs have the largest influence is NP-hard, and also inherits \#P-hardness in computing social influence~\cite{kempe2003kdd}. Fortunately, this process can be accelerated by sampling-based techniques~\cite{guo2020sigmod,tang2018sigmod}, but it is still costly and not sufficiently robust across different query inputs (as confirmed by experiments in Section~\ref{sec:exp}).

% Motivated by the practical demands of the MaxInfBR$k$NN query and the lack of effective solutions, we present a framework with carefully designed indexes, efficient and scalable algorithms for batch BR$k$NN processing, as well as a suit of alternative POI selection policies to support
% both approximate and heuristic solutions. Extensive experiments offer insight
% into the good performance of our proposed methods. To sum up, this paper makes  the following contributions:

Motivated by the potential benefits of the MaxInfBR$k$NN query and the lack of effective solutions, this paper makes the following contributions:

\vspace{1mm}

\vspace{0.5mm}
\begin{itemize}\setlength{\itemsep}{-\itemsep}
\item{} We formalize the MaxInfBR$k$NN query/problem in geo-social networks. To the best of our knowledge, this is the first attempt to systematically tackle this problem.
\item{} We prove that the problem is NP-hard, and we present a non-trivial baseline solution with theoretical guarantees.
% \item{} To efficiently retrieve BR$k$NNs for multiple POIs (i.e., the prerequisite in answering MaxInfBR$k$NN with theoretical guarantees), we proposed efficient batch BR$k$NN processing based on carefully designed index scheme. In this way, the time complexity is reduced to $O(\xi_1 \cdot\mathcal{|U|}+\xi_2\cdot |\mathcal{B}|)$ (i.e., $\xi_1 (\xi_2) \ll \mathcal{|P|}$, $|\mathcal{B}|$ is the number of border vertices in the road network), thus is more scalable in practice.

\item{} We develop efficient batch processing algorithms to retrieve BR$k$NNs for multiple query objects, whose complexity is $O(\xi_1 \cdot\mathcal{|U|}+\xi_2)$ ($\xi_1 (\xi_2) \ll \mathcal{|P|}$). This ensures the scalability of our proposals, and is also a contribution to existing BR$k$NN studies.

\item{} We propose several robust and alternative POI selection policies to answer MaxInfBR$k$NN either approximately or heuristically without excessive sampling costs.

\item{}  We report on a comprehensive empirical study to offer insights into the effectiveness and efficiency of our proposed indexes and algorithms.%We give theoretical analysis and comprehensive experimental evaluation to verify the  effectiveness and efficiency of proposed methods.
\vspace{-0mm}
\end{itemize}

The rest of this paper is organized as follows. Section~\ref{sec:related} reviews the related work. Section~\ref{sec:preli} formalizes the problem. Section~\ref{sec:base} presents a non-trivial baseline solution. Section~\ref{sec:framework_overview} gives a high-level overview of our framework. Section~\ref{sec:batch} covers our indexing schemes and algorithms for batch BR$k$NN processing. Section~\ref{sec:poi} presents alternative POI selection policies to efficiently answer MaxInfBR$k$NN 
queries. We report on an experimental study in Section~\ref{sec:exp}. Finally, Section~\ref{sec:conclu} concludes the paper, and provides future work suggestions.

\vspace{1.0mm}

\vspace{-2mm}
\section{Related Work}
\label{sec:related}
This section briefly overviews the existing studies related to our problem, including BR$k$NN and MaxBR$k$NN queries, influence maximization, and geo-social keyword queries.

\vspace{0mm}

\subsection{BR$k$NN and MaxBR$k$NN Queries}

The bichromatic reverse $k$ nearest neighbor (BR$k$NN) query was first studied by Korn et al. \cite{ks2000sigmod}. Given two sets $\mathcal{U}$ and $\mathcal{P}$ of data points, and an object $p\in\mathcal{P}$, the query finds all the points in $\mathcal{U}$ that have $p$ as one of their $k$ nearest neighbors in $\mathcal{P}$. Many variants of this query have been studied, such as R$k$NN in large graphs \cite{yiu2006tkde}, the Reverse Spatial and Textual $k$ Nearest Neighbor (RST$k$NN)  \cite{lu2011sigmod,lu2014acmtrans}, and Reverse Top-$k$ Geo-Social Keyword Query (R$k$GSKQ) \cite{zhao2017icde}. Recently, inspired by profile-based marketing~\cite{raymond2009vldb}, the MaxBR$k$NN query that maximizes the result size of BR$k$NNs was studies. Many variants exist, e.g., MaxBR$k$NN for trajectories \cite{rahat2018adc}, MaxBRST$k$NN for geo-textual data \cite{choudhury2016vldb,gkorgkas2015sstd}, and MaxBR$k$NN for streaming geo-data~\cite{luo2018dasfaa}, to name just a few. These studies all assign equal importance to the data points in BR$k$NN, and focus on maximizing the result size. However, in real scenarios, different data points in BR$k$NNs may have different influence. This motivates our problem. Huang et al.~\cite{hung2014tsas} consider social influence in BR$k$NN, but only find a single optimal location, while we find an optimal set of multiple geo-social targets, which is NP-hard. Moreover, they define the similarity by the Euclidean distance, whereas our similarity notion considers road network distance and social relevance, which is more complex and realistic.

\vspace{-1mm}
\subsection{Influence Maximization}
%\textbf{Influence Maximization in social networks.}
Influence maximization (IM) was first studied by Domingos and Richardson \cite{domingos2001kdd,richardson2002kdd}. It is to find a set of users called seeds to trigger the largest expected influence propagation in social networks. Kempe et al.~\cite{kempe2003kdd} formulated it as a discrete optimization problem, proved that it is NP-hard~\cite{kempe2003kdd}, and proposed an algorithm with a $(1-1/e)$-approximation ratio by Monte-Carlo simulation. Since then, substantial research has been devoted to developing more efficient and scalable IM algorithms~\cite{chen2010kdd,cheng2014sigir,guo2020sigmod,tang2018sigmod,tang2017asonam,tang2015sigmod}. Among them, the reverse influence sampling (RIS) techniques (first presented in~\cite{borgs2014soda} and then optimized by~\cite{guo2020sigmod,tang2018sigmod,tang2015sigmod}) are widely considered as the state-of-the-arts in  addressing IM problem with theoretical guarantees. Recently, RIS has also been extended to address many IM variants~\cite{bian2020vldb,huang2020vldbj,wang2017tkde}. Although IM and many of its variants select seed users freely and only return identical results when the seed set size is fixed, our problem selects target POIs, so as to attract more influencers for online propagation. Thus, our problem is query-dependent and more flexible in personalized applications.

\vspace{-1mm}
\subsection{Geo-Social Keyword Queries}
%\subsection{Spatial Keyword Queries}
Geo-Social Keyword Queries (GSKQ) consider geographical, textual, and social similarities when quantifying the relevance of objects. Existing studies include i) the Top-$k$ Geo-Social Keyword Query (T$k$GSKQ) \cite{ahuja2015sstd} that finds $k$ most relevant objects for a query user; ii) the Reverse Top-$k$ Geo-Social Keyword Query (R$k$GSKQ) \cite{jin2020dasfaa,zhao2017icde} that detects potential customers in social networks; and iii) the Why-not Questions on Top-$k$ Geo-social Keyword Query (WNGSKQ)~\cite{zhao2018icde} that is designed to refine and improve the query results. Many variants of geo-social queries also exist, such as social-aware top-$k$ spatial keyword search~\cite{wu2012mdm}, geo-social ranking~\cite{armenatzoglou2015vldbj}, and community search in geo-social networks~\cite{guo2019icde,guo2021icde}, to name but a few. All these studies aim at different problems, and hence, their solutions do not handle our problem.

\begin{table}[t]\small
\centering
\setlength{\tabcolsep}{2pt}
\caption{Symbols and description}
\label{tab:symbol}
\begin{tabular}{|p{1.5cm}|p{7.0cm}|}
\hline
\textbf{Notation} & \textbf{Description} \\ \hline
$\mathcal{G}_r$($\mathcal{G}_s$)  & a spatial road (social) network \\ \hline
%$\mathcal{G}_H$  & a heterogeneous group\\ \hline
$\mathcal{P}$($\mathcal{U}$)  & a set of POIs (users) located on $\mathcal{G}_r$\\ \hline
$dist(p,u)$ & the shortest path distance from a POI $p$ to a user $u$ \\ \hline
$F_{GST}(u,p)$ & the geo-social and textual similarity score of $p$ to $u$ \\ \hline
%$\mathcal{S}_{\textit{knn}}(u)$  & the $k$NN result set of $u$  \\ \hline
%$\mathcal{S}_{\textit{br}}(p)$  & the %bichromatic reverse $k$NN result set of $p$  \\ \hline
$\mathcal{S}_{t}(u)$  & the top-$k$ ($k$NN) result set of $u$  \\ \hline
$\mathcal{S}_{r}(p)$  & the  bichromatic reverse $k$NN result set of $p$ (the potential users of $p$) \\ \hline
$\mathcal{P}_{c}$  & the candidate POI set in the query inputs\\ \hline
$\mathcal{P}_{s}$  & the POI set selected for market promotion\\ \hline
$\mathcal{L}_{\textit{SB}}^{\downarrow}(u)$  & the scoring lower bound list of a user $u$  \\ \hline
$\mathcal{M}_{\textit{SB}}^{\downarrow}(u_{ps})$  & the scoring lower bound map of a pseudo user $u_{ps}$  \\ \hline
%$\mathcal{P}^{t}$  & the set of POIs which contain keyword $t$  \\ \hline
%$\textit{NVD}^{t}$  & the network voronoi diagram generated for $\mathcal{P}^{t}$  \\ \hline
%$\mathcal{NVD}^{t}$  & the network voronoi diagrams generated for $\mathcal{P}^{t}$  \\ \hline
%$\textit{NVP}^{t}(p)$  & the network voronoi partition of a POI $p$ in $NVD^{t}$\\ \hline
%$B(\mathcal{N}_i)$  & the set of border vertex of $\mathcal{N}_i$ \\ \hline

$\mathcal{P}_{b}^{*}$  & a size-$b$ POI set retrieved by the greedy algorithm\\ \hline

$\mathcal{P}_{b}^{\textit{opt}}$  & the optimal size-$b$ subset selected from $\mathcal{P}_{c}$\\ \hline

$\mathbb{I}_{p}(\mathcal{P}_{i})$  & the influence of a POI set $\mathcal{P}_{i}$ \\ \hline

%$\mathcal{G}_{H}$  & the heterogeneous influence graph with POIs and users\\ \hline
%$OPT$  & the number of MaxRkGSKQ results$u$
\end{tabular}
\label{tab:symb}
\vspace{-0.15in}
\end{table}

\section{Preliminaries}
\label{sec:preli}
Table~\ref{tab:symbol} lists the frequently used notation. In our problem, a geo-social network is a heterogeneous network structure as shown in Fig.~\ref{fig:LBSN-dataset-example}(a). Specifically, a geo-social network is composed by a road network $\mathcal{G}_{r}=(\mathcal{V}_r, \mathcal{E}_r)$ and a social network $\mathcal{G}_{s}=(\mathcal{V}_s, \mathcal{E}_s)$, where $\mathcal{V}_r$ ($\mathcal{V}_s$) and $\mathcal{E}_r$ ($\mathcal{E}_s$) denote the vertex set and the edge set respectively. Two types of geo-textually and socially tagged objects are located in $\mathcal{G}_{r}$, i.e., a set $\mathcal{P}$ of POIs and a set $\mathcal{U}$ of users that have reviews on POIs and correspond to the nodes in $\mathcal{G}_{r}$. Each POI $p\in \mathcal{P}$ is a triple $\left(loc, \mathit{key}, \mathcal{CK}\right)$, where $p.loc=( v_{\textit{NN}},dis )$ is a geo-spatial position in $\mathcal{G}_{r}$, given by the nearest vertex  $v_{\textit{NN}}$ of $p$ in $\mathcal{V}_r$ and a distance $dis$ to that vertex; $p.key$ is a set of weighted  keywords describing $p$; $p.\mathcal{CK}$ is a set of IDs of users who have checked into $p$. Each user $u \in \mathcal{U}$ is denoted by a triple $\left(loc, key, F(u)\right)$, where $u.loc$ is the location descriptor similar as $p.loc$, $u.key$ is a set of keywords capturing the users' interests, and $F(u)$ is the friend set of $u$ in $\mathcal{G}_{s}$. The detailed information of POIs and users in Fig.~\ref{fig:LBSN-dataset-example}(a) is depicted in Figs.~\ref{fig:LBSN-dataset-example}(c) and~\ref{fig:LBSN-dataset-example}(d). Note that users $u_9$ to $u_{12}$ in Fig.~\ref{fig:LBSN-dataset-example}(a) have empty keyword sets and thus are not shown in Fig.~\ref{fig:LBSN-dataset-example}(d).

\begin{figure*}[t]
\centering
\raisebox{-2cm}{ \includegraphics[width=6.9in]{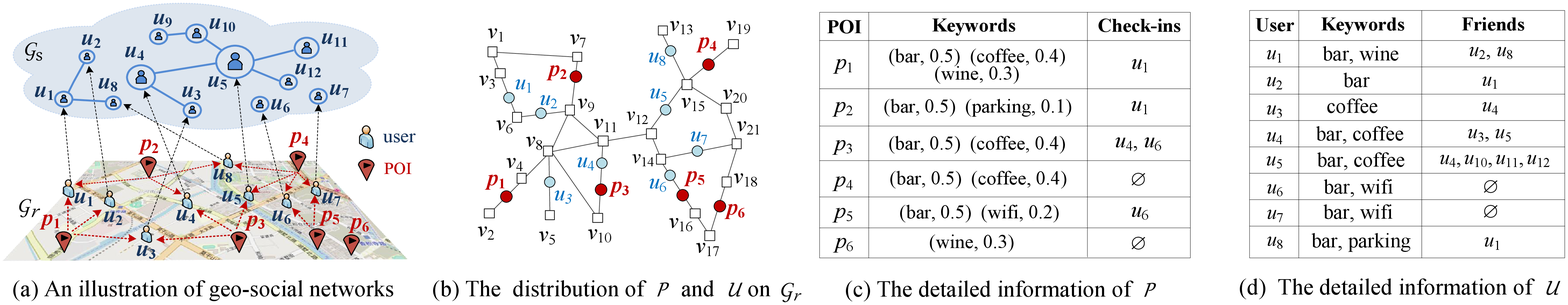}}
\vspace*{-1.5mm}
\caption{{Example of MaxInfBR$k$NN problem setting}}
\label{fig:LBSN-dataset-example}
\vspace*{-4mm}
\end{figure*}

\vspace{1mm}
\begin{definition}\label{defn:tkgsk}
{\bf \textit{(Top-$k$ Geo-Social Keyword  (T$k$GSK) queries)}}. Given a POI set $\mathcal{P}$ and a user $u$, a T$k$GSK query is issued by $u$ to find a set $S_t(u)$ with $k$ POIs $p$ in $\mathcal{P}$, which are most relevant to $u$ based on the scoring function as below:% \cite{rocha2012edbt}:
\end{definition}

\vspace{-3mm}
\begin{small}
\begin{equation}
\small
\label{equ:gsk}
\hspace{8mm} F_{GST}(u,p)= \frac{\alpha \cdot f_s (u,p)+(1-\alpha)\cdot f_t(u,p)}{f_g(u,p)}
\end{equation}
\end{small}

\noindent In Eq.~\ref{equ:gsk}, $f_{g}(u,p)$ denotes the geographical proximity measured by the shortest path distance between $u$ and $p$ in $\mathcal{G}_r$, $f_{t}(u,p)$ represents the textual similarity computed by the TF-IDF metric \cite{choudhury2018vldbj,salton1988inf}, and $f_{s}(u,p)$ is the social relevance whose formulation follows the literature~\cite{zhao2018icde,zhao2017icde}. The parameter $\alpha\in$ [0, 1] balances the importance of social relevance and textual similarity. Here, a large score of $p$ means that $p$ has high relevance to $u$. Note that Eq.~$1$ is formulated as a ratio rather than as a linear function which is commonly used in extensive studies \cite{ahuja2015sstd,zhao2017icde}. This is done to avoid expensive normalization of the geographical score with prior knowledge~\cite{rocha2012edbt} (i.e., the largest shortest path distance). %(i.e., computing the largest shortest path distance for any pair of vertexes in $\mathcal{G}_r$).
Nonetheless, our methods can be easily extended to support the linear function as well.

\vspace{1mm}
\begin{definition}
\label{dfn:rkgsk}
{\bf \textit{(Bichromatic Reverse $k$ Nearest Neighbor (BR$k$NN) in the geo-social network)}}. Given a set $\mathcal{P}$ of POIs and a set $\mathcal{U}$ of users, in a geo-social network, a BR$k$NN query is issued for a POI $p \in \mathcal{P}$, and returns all the users $u \in \mathcal{U}$ having $p$ in their T$k$GSK query results. Thus, if the result is denoted as $\mathcal{S}_r(p)$, we have $\mathcal{S}_r(p) = \{u | u\in \mathcal{U} \wedge p \in S_t(u)\}$.
\end{definition}

\begin{example}
\vspace{0.5mm}
\label{examp-reverse}
In Fig.~\ref{fig:LBSN-dataset-example}(a), the T2GSK query result set $S_t(u_1)$ for $u_1$ is $\{p_1, p_2\}$, as $p_1$ and $p_2$ are most relevant to $u_1$ among all $p\in\mathcal{P}$ according to Eq.~\ref{equ:gsk}. Similarly, $S_t(u_2)$ = $\{p_1,p_2\}$ and $S_t(u_3)$ = $\{p_1,p_3\}$. Thus,  $S_r(p_1)$ = $\{u_1,u_2,u_3\}$.
\end{example}

\vspace{0.5mm}
For a POI set $\mathcal{P}_{i}$, we denote $\mathcal{S}_{r}(\mathcal{P}_i)$ as the union of BR$k$NN results for all $p \in \mathcal{P}_{i}$, i.e., $\mathcal{S}_r(\mathcal{P}_i) = \bigcup_{p\in \mathcal{P}_i}S_r(p)$. As the POIs in $\mathcal{P}_i$ are highly relevant to the users in $\mathcal{S}_{r}(\mathcal{P}_i)$, these users tend to be interested in $\mathcal{P}_i$, and become the potential customers. For simplicity,  we also refer to the BR$k$NNs as potential users. We next introduce the Independent Cascade (IC) model~\cite{kempe2003kdd}, a classic and dominant information diffusion model, to capture the influence of users in BR$k$NNs.

\vspace{1mm}
{\bf IC Model}. Given a social network $\mathcal{G}_s=(\mathcal{V}_s,\mathcal{E}_s)$, the IC model assigns a weight $w(u,v)\in[0,1]$ to each edge $(u,v)\in\mathcal{E}_s$ that denotes the probability that $v$ can be influenced by $u$. The influence propagation over $\mathcal{G}_s$ is modeled as an iterative stochastic process. Initially, a set of users called seeds (denoted as $\mathcal{S}$) are influenced. Then, seed users further spread their influence through the social network following randomized rules. Specifically, in each iteration, when a user $u$ is newly influenced/activated, the user has a single chance to activate each of inactive friend $v$ with  probability $w(u,v)$. Any influenced user remains in this state until the end. The propagation process proceeds until no additional users in $\mathcal{G}_s$ can be further influenced. Let $I(\mathcal{S})$ be the number of all influenced users when the propagation stops. Due to the randomness in the propagation process, the influence $\mathbb{I}(\mathcal{S})$ of $\mathcal{S}$ is evaluated by the expected number of influenced/activated users over all the possible propagation instances i.e., $\mathbb{I}(\mathcal{S})= E(I(\mathcal{S}))$.
%Next, we formally define MaxInf-BRGST$k$NN as follows.

\vspace{0.5mm}
\begin{definition}
\label{dfn:MaxInf}
{(\bf \textit{MaxInfBR$k$NN})}. Given a query set $\mathcal{P}_c$ ($\mathcal{P}_c\subset \mathcal{P})$, and an integer $b$ $(b\leq |\mathcal{P}_c|)$, the MaxInfBR$k$NN problem is to find a set $\mathcal{P}_{s} \subset \mathcal{P}_c$, such that $|\mathcal{P}_{s}|=b$ and the influence of $\mathcal{P}_{s}$ is the largest over all size-$b$ subsets of $\mathcal{P}_{c}$. Here, the influence of $\mathcal{P}_{s}$ $($i.e., $\mathbb{I}_p(\mathcal{P}_{s}))$ is given by $\mathbb{I}(\mathcal{S}_r(\mathcal{P}_{s}))$, which is the expected influence spread when users in $\mathcal{S}_{r}(\mathcal{P}_{s})$ tend to seed the influence propagation under the IC model. Formally, $\mathcal{P}_{s} =\arg \max_{\mathcal{P}_i \subseteq \mathcal{P}_c,|\mathcal{P}_i|= b}  \mathbb{I}_p(\mathcal{P}_i)$.
\end{definition}

\begin{example}
\vspace{-1mm}
\label{examp:maxinf}
In Fig.~\ref{fig:LBSN-dataset-example}, assume that a company owns three stores $p_1$, $p_3$, and $p_5$. When $k$ = 2, the potential users of each store are $S_r(p_1)=\{u_1,u_2,u_3\}, S_r(p_3)=\{u_3,u_{4},u_5\}$, and $S_r(p_5)=\{u_6,u_7\}$. For simplicity, we set the weight of each edge in $\mathcal{G}_s$ to 1. Given $b=2$, we obtain the optimal POI set $\mathcal{P}_s= \{p_1,p_3\}$ whose influence is 10, which is the total number of influenced users in $\mathcal{G}_s$, including potential users $($i.e., $u_1, u_2, u_3, u_4,$ and $u_5)$ of $\{p_1,p_3\}$ and the users $($i.e., $u_8,u_9,u_{10},u_{11},$ and $u_{12})$ further activated by the potential users via influence propagation.

\end{example}

\iffalse
\begin{figure*}[t]
\centering
\includegraphics[width=6.0in]{figure2-the-NVD-generation.eps}
\vspace*{-3mm}
\caption{{The NVD generated for $\mathcal{O}=\{p_1,p_2,p_3,p_4,p_5\}$ }}
\label{fig:nvd_example}
\vspace*{-4mm}
\end{figure*}
\fi

\vspace{1mm}
\begin{lemma}
\label{np-hard}
MaxInfBR$k$NN is NP-hard.
\end{lemma}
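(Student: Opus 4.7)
The plan is to establish NP-hardness by a polynomial-time reduction from the classical Influence Maximization (IM) problem under the Independent Cascade (IC) model, which Kempe et al.~\cite{kempe2003kdd} proved is NP-hard and which is precisely the diffusion model already adopted in Definition~\ref{dfn:MaxInf}. Because the objective $\mathbb{I}_p(\mathcal{P}_{s})=\mathbb{I}(\mathcal{S}_r(\mathcal{P}_{s}))$ explicitly wraps an IC spread around the BR$k$NN-induced set, the creative step is to wire up the geo-textual side so that the BR$k$NN map acts as the identity on a chosen family of users; once that is done, the outer problem reduces to a raw IM instance.

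Given an IM instance $(\mathcal{G}_s,b)$ on $\mathcal{G}_s=(\mathcal{V}_s,\mathcal{E}_s)$ with prescribed edge weights and seed budget $b$, I would build a MaxInfBR$k$NN instance that reuses $\mathcal{G}_s$ verbatim and, for each $v\in\mathcal{V}_s$, injects a dedicated POI $p_v$ engineered so that $\mathcal{S}_r(p_v)=\{v\}$. Concretely, I would attach a fresh road-network vertex for $p_v$ linked only to $v$'s vertex by a short edge, assign $p_v$ a private keyword that $v$ also carries, and populate every other user $u\neq v$ with $k$ colocated ``dummy'' POIs sharing $u$'s keywords, so that these dummies dominate $p_v$ in $F_{GST}(u,\cdot)$ and push $p_v$ out of $u$'s top-$k$ list. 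Taking $\mathcal{P}_c=\{p_v:v\in\mathcal{V}_s\}$, setting the MaxInfBR$k$NN budget to $b$, and fixing a small $k$ (say $k=1$) yields a bijection between size-$b$ subsets of $\mathcal{P}_c$ and size-$b$ seed sets of $\mathcal{V}_s$; by construction $\mathcal{S}_r(\mathcal{P}_s)=\{v:p_v\in\mathcal{P}_s\}$, so $\mathbb{I}_p(\mathcal{P}_s)$ coincides with the IC influence spread of the corresponding seed set, and any polynomial-time solver for MaxInfBR$k$NN would solve IM in polynomial time.

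The main obstacle will be rigorously verifying that the injected POIs really produce singleton BR$k$NN sets under the scoring function $F_{GST}$ of Eq.~\ref{equ:gsk}, which blends geographical, textual and social components through a ratio with $f_g$ in the denominator. Two conditions have to hold simultaneously: (i) $F_{GST}(v,p_v)$ must be large enough to keep $p_v$ in $v$'s top-$k$ list, and (ii) for every $u\neq v$, at least $k$ other POIs must dominate $p_v$ in $F_{GST}(u,\cdot)$. The degenerate case $f_g(v,p_v)\to 0$ forces me to choose a small but strictly positive separation, together with suitably scaled textual and social terms, so that the ratio remains well-defined and the dominance relations go the right way; once these parameters are balanced, the construction runs in time polynomial in $|\mathcal{V}_s|+|\mathcal{E}_s|$, the reduction is polynomial, and NP-hardness of MaxInfBR$k$NN follows.
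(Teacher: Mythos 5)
Your reduction is sound in outline, but it takes a genuinely different route from the paper. The paper reduces from Maximum Coverage: it neutralizes the diffusion component by restricting to instances where every user in $\mathcal{S}_r(\mathcal{P}_c)$ is socially isolated, so that $\mathbb{I}(\mathcal{S}_r(\mathcal{P}_s)) = |\mathcal{S}_r(\mathcal{P}_s)|$ and the problem collapses to picking $b$ of the sets $\mathcal{S}_r(p)$ with maximum union. You instead neutralize the coverage component: by engineering the geo-textual instance so that $\mathcal{S}_r(p_v)=\{v\}$ for each $v$, the union structure becomes trivial and the problem collapses to classical IM under the IC model. Each reduction isolates a different source of hardness --- the paper's shows the problem is hard even with no social edges (pure set coverage), yours shows it is hard even with pairwise-disjoint singleton BR$k$NN sets (pure diffusion) --- and both require an instance-realizability gadget: the paper must realize an arbitrary overlapping set system as BR$k$NN sets, while you only need singletons, which is in fact the easier of the two to make rigorous (a private keyword per user with $k=1$ and empty check-in sets essentially suffices, making your dummy-POI machinery heavier than necessary). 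Two minor cautions: since Kempe et al.\ themselves prove IM NP-hard via Set Cover, your chain ultimately bottoms out at the same source as the paper's; and you should make sure the check-in sets are empty (or otherwise controlled) so that the social term $f_s$ does not leak $p_v$ into the top-$k$ lists of $v$'s friends. Neither issue undermines the argument.
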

\vspace{0mm}

\begin{proof}
The problem of MaxInfBR$k$NN can be reduced from a well-known NP-hard problem, the Maximum Coverage (MC) problem~\cite{feige1998acmj}. Given a collection of subsets $\mathcal{S}=\{S_1,S_2,...,S_m\}$ of a ground set $\mathbb{G}=\{e_1,e_2,...,e_n\}$, and a positive integer $b$, the MC problem aims to find  $\mathcal{S}'\subset\mathcal{S}$, such that $|\mathcal{S}'|=b$ and $\mathcal{S}'$ covers the largest number of distint elements in $\mathbb{G}$. The problem of MaxInfBR$k$NN is a generalization of MC. If each user in $S_r(\mathcal{P}_c)$ has no social links in $\mathcal{G}_s$ (a.k.a. cold start users), then the problem of solving MaxInfBR$k$NN becomes equivalent to solving the MC problem, with $S_r(p)$ of each $p\in\mathcal{P}_c$ corresponding to each subset in $\mathcal{S}$ defined by the MC problem. Therefore, the problem of MaxInfBR$k$NN is at least as hard as the MC problem, and thus MaxInfBR$k$NN is NP-hard. \end{proof}

\section{Baseline Solution}
\label{sec:base}

 MaxInfBR$k$NN is NP-hard, and also inherits \#P-hardness in exactly computing social influence~\cite{chen2010kdd,kempe2003kdd}. To avoid brute force search, we can theoretically approximate the optimal solution based on the following lemma.

\begin{lemma}
\label{submodular}
The objective function $\mathbb{I}_{p}(\cdot)$ evaluating the influence of a set of POIs satisfies the following properties:

\vspace{0.5mm}
\textbf{P.1} {\rm (\textbf{Non-negative})} For any non-empty POI set $\mathcal{P}_i$, $\mathbb{I}_p(\mathcal{P}_i) \geq 0$, and for an empty POI set, $\mathbb{I}_p(\varnothing)=0$.

\textbf{P.2} {\rm (\textbf{Monotonic})} For any POI set $\mathcal{P}_i$ and any POI $p$ from the candidate POI set $\mathcal{P}_c$, $\mathbb{I}_p(\mathcal{P}_i) \leq \mathbb{I}_p(\mathcal{P}_i \cup p)$.

\textbf{P.3} {\rm (\textbf{Submodular})} {Given two POI sets $\mathcal{P}_i$ and $\mathcal{P}_j$ with $\mathcal{P}_i\subset\mathcal{P}_j\subset\mathcal{P}_c$, for each POI $p\in\mathcal{P}_c\backslash\mathcal{P}_j$}, we always have $\left(\mathbb{I}_p(\mathcal{P}_j\cup p)-\mathbb{I}_p(\mathcal{P}_j)\right) \leq \left(\mathbb{I}_p(\mathcal{P}_i\cup p)-\mathbb{I}_p(\mathcal{P}_i)\right)$.
%\vspace{-1mm}
\end{lemma}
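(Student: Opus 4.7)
The plan is to reduce all three properties to the well-known fact that the IC influence function $\mathbb{I}(\cdot)$ on seed sets is non-negative, monotone, and submodular \cite{kempe2003kdd}, by composing it with the monotone set-valued map $\mathcal{P}_i \mapsto \mathcal{S}_r(\mathcal{P}_i) = \bigcup_{p \in \mathcal{P}_i} S_r(p)$. Non-negativity and monotonicity survive such a composition automatically; submodularity does not in general, so that is where the main obstacle lies.

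For P.1, I would note that $\mathcal{S}_r(\varnothing) = \varnothing$, so no user is ever activated and $\mathbb{I}_p(\varnothing) = 0$, while for any non-empty $\mathcal{P}_i$ the number of activated users is a non-negative integer in every realization, making the expectation non-negative. For P.2, adding a POI $p$ only enlarges the potential-user set, $\mathcal{S}_r(\mathcal{P}_i) \subseteq \mathcal{S}_r(\mathcal{P}_i \cup \{p\})$, and $\mathbb{I}(\cdot)$ is monotone in its seed set, so P.2 follows by composition.

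The main obstacle is P.3, which I would handle via the live-edge representation of IC. Sample each social edge $(u,v)$ independently with probability $w(u,v)$ to obtain a random subgraph $g$, and let $R_g(\mathcal{S})$ denote the set of nodes reachable from $\mathcal{S}$ in $g$, so that $\mathbb{I}(\mathcal{S}) = \mathbb{E}_g[|R_g(\mathcal{S})|]$. For each candidate $p \in \mathcal{P}_c$ define the reach set $T_g(p) = \bigcup_{u \in S_r(p)} R_g(\{u\})$; then $R_g(\mathcal{S}_r(\mathcal{P}_i)) = \bigcup_{p \in \mathcal{P}_i} T_g(p)$, so conditional on $g$ the map $\mathcal{P}_i \mapsto |R_g(\mathcal{S}_r(\mathcal{P}_i))|$ is a coverage function on the ground set $\mathcal{P}_c$ with coverage sets $\{T_g(p)\}_{p \in \mathcal{P}_c}$. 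Coverage functions are classically monotone and submodular, so the diminishing-returns inequality of P.3 holds pointwise in $g$; taking expectation over $g$ preserves it and yields P.3 for $\mathbb{I}_p(\cdot)$.

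As a cross-check, the same conclusion can be derived by a direct telescoping argument on seed sets: writing $\Delta_i = S_r(p) \setminus \mathcal{S}_r(\mathcal{P}_i)$, the marginal $\mathbb{I}_p(\mathcal{P}_i \cup \{p\}) - \mathbb{I}_p(\mathcal{P}_i)$ equals the cumulative seed-set marginals of the elements of $\Delta_i$ added on top of $\mathcal{S}_r(\mathcal{P}_i)$; since $\mathcal{P}_i \subseteq \mathcal{P}_j$ implies both $\mathcal{S}_r(\mathcal{P}_i) \subseteq \mathcal{S}_r(\mathcal{P}_j)$ and $\Delta_j \subseteq \Delta_i$, iterating seed-set submodularity element by element recovers the inequality. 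I would lead with the coverage-function presentation in the write-up, however, because it handles all three properties within one unified framework and makes clear that the only randomness to average over is the live-edge sampling.
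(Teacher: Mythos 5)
Your proposal is correct and follows essentially the same route as the paper: both condition on the live-edge (possible-world) realization of the IC process, establish the diminishing-returns inequality pointwise for each realization, and then take expectations. The only cosmetic difference is that the paper verifies pointwise submodularity by a direct case analysis on per-user reachability indicators, whereas you obtain it by recognizing the conditional objective as a coverage function over the sets $T_g(p)$ — the same fact, packaged differently.
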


%\iffalse
\vspace{1mm}
\begin{proof}
P.1 and P.2 are obvious according to the definition of $\mathbb{I}_p(\cdot)$. Hence, we only focus on proving P.3.
In the IC model, the influence evaluation can also be measured by possible world semantics \cite{kempe2003kdd}. Specifically, let $\mathbb{G}_I$ denote a set of graph instances $\{\mathcal{G}_s^{1},\mathcal{G}_s^{2},...,\mathcal{G}_s^{n}\}$, where $\mathcal{G}_s^{t}=(\mathcal{V}_s,\mathcal{E}_{s}^{t})$ ($1\le t\le n$, $\mathcal{E}_{s}^{t}\subseteq\mathcal{E}_{s}$) is a possible world (i.e., a propagation instance), and $\mathcal{E}_{s}^{t}$ is independently sampled from $\mathcal{E}_{s}$. Then, the existence probability of $\mathcal{G}_{s}^{t}$ (i.e., $Pr(\mathcal{G}_s^{t})$ ) is computed as:
\vspace{1mm}
%possible world model computation
\begin{small}
\begin{equation}
\label{equa:possible}
%\begin{small}
    \operatorname{Pr}(\mathcal{G}_s^{t})=\prod_{(u,v) \in \mathcal{E}_{s}^{t}} w(u,v) \prod_{(u',v') \in \mathcal{E}_{s} \backslash \mathcal{E}_{s}^{t}}(1-w(u',v'))
%\end{small}
\end{equation}
\end{small}
\vspace{1mm}
{\hspace{-2mm} Let $R_{\mathcal{G}_s^{t}}(\mathcal{P}_i,u)$ be an indicator. If a user $u$ can be reached by at least one user from $\mathcal{S}_r(\mathcal{P}_i)$ in $\mathcal{G}_s^{t}$, we set $R_{\mathcal{G}_s^{t}}(\mathcal{P}_i,u)=1$; otherwise, $R_{\mathcal{G}_{s}^{t}}(\mathcal{P}_i,u)=0$. Therefore, the influence of POIs $\mathbb{I}_p(\mathcal{P}_i)$ can be calculated as follows:}
\begin{small}
\begin{equation}
\label{equa:inf_possible}
\vspace{0mm}
\begin{small}
\mathbb{I}_p(\mathcal{P}_i)=\mathbb{I}(\mathcal{S}_r(\mathcal{P}_i))=\sum_{\mathcal{G}_{s}^{t} \in \mathbb{G}_I } \sum_{u \in \mathcal{V}_{s}} R_{\mathcal{G}_{s}^{t}}(\mathcal{P}_i, u)  \operatorname{Pr}\left(\mathcal{G}_{s}^{t}\right)
\end{small}
\end{equation}
\end{small}
\hspace{-2.8mm} Given a POI $p\in\mathcal{P}_c$ and a possible world $\mathcal{G}_{s}^{t}$, if there exists at least one user $v$ in $\mathcal{S}_r(p)$ that can reach $u$ in $\mathcal{G}_{s}^{t}$,  $R_{\mathcal{G}_s^{t}}(\mathcal{P}_i\cup\{p\},u)$=$R_{\mathcal{G}_s^{t}}(\mathcal{P}_j\cup\{p\},u)$=1. Thus, we have ($R_ {\mathcal{G}_s^{t}}(\mathcal{P}_j\cup\{p\},u)-R_{\mathcal{G}_s^{t}}(\mathcal{P}_j,u))-(R_{\mathcal{G}_s^{t}}(\mathcal{P}_i\cup\{p\},u)-R_{\mathcal{G}_s^{t}}(\mathcal{P}_i,u))  = (R_{\mathcal{G}_s^{t}}(\mathcal{P}_i,u)-R_{\mathcal{G}_s^{t}}(\mathcal{P}_j,u)) \leq 0$.

Otherwise, the influence of $p$ cannot reach $u$, and we further have  $R_{\mathcal{G}_s^{t}}(\mathcal{P}_j\cup\{p\},u)-R_{\mathcal{G}_s^{t}}(\mathcal{P}_j,u)-(R_{\mathcal{G}_s^{t}}(\mathcal{P}_i\cup\{p\},u)-R_{\mathcal{G}_s^{t}}(\mathcal{P}_i,u))=0$. Hence, we can conclude that ($R_{\mathcal{G}_s^{t}}(\mathcal{P}_j\cup\{p\},u) - R_{\mathcal{G}_s^{t}}(\mathcal{P}_j,u)) \leq (R_{\mathcal{G}_s^{t}}(\mathcal{P}_i\cup\{p\},u)-R_{\mathcal{G}_s^{t}}(\mathcal{P}_i,u))$.

Based on this, we can derive that ($\mathbb{I}_{p}\left(\mathcal{P}_j\cup\{p\}\right)-\mathbb{I}_{p}\left(\mathcal{P}_j \right)) \\ \le (\mathbb{I}_p(\mathcal{P}_i\cup\{p\})-\mathbb{I}_p(\mathcal{P}_i))$ according to Eq.~\ref{equa:inf_possible}, which proves P.3. The proof completes. \end{proof}

As is proved in \cite{feige1998acmj}, submodular optimization problems can be approximated by a ratio no worse than $1-1/e\approx0.632$. Thus we present a non-trivial baseline (\textbf{denoted as BA}) by extending state-of-the-art techniques. The workflow of BA includes three major procedures:

\textbf{BR$k$NNs retrieval.} First, we perform the state-of-the-art GIM-Tree based algorithm~\cite{jin2020dasfaa,zhao2017icde} to retrieve the exact BR$k$NNs result for each $p\in\mathcal{P}_c$. Then, we add a directed edge with weight 1 from each POI $p\in\mathcal{P}_{c}$ to a user $u$ if $u\in\mathcal{S}_r(p)$. This yields a heterogeneous graph $\mathcal{G}_{H}$, whose nodes contain both users and POIs. To avoid unnecessary computations, we remove all users in $\mathcal{G}_{H}$ that cannot reached by any POI in $\mathcal{P}_c$.

\textbf{Reverse influence sampling.}
Second, we extend the state-of-the-art reverse influence sampling (RIS) technique~\cite{guo2020sigmod, tang2018sigmod} to support POI selection. The core idea of RIS is to generate random reverse reachable (RR) sets \cite{borgs2014soda} to estimate the social influence. However, our RR set generation differs to previous methods~\cite{guo2020sigmod, tang2018sigmod}, since we only generate RR sets particularly for POIs. More specifically, we implement RR set generation by uniformly sampling a user $u_i$ from $\mathcal{G}_{H}$ and performing stochastic reverse breadth first search (BFS) from $u_i$~\cite{tang2015sigmod}. The difference is that our BFS is conducted on $\mathcal{G}_{H}$ rather than $\mathcal{G}_{s}$, and only adds POIs in $\mathcal{P}_c$ reached by $u_i$ into the RR set. More optimizations for RR set generation are available in~\cite{guo2020sigmod}. Following previous studies~\cite{guo2020sigmod, tang2018sigmod}, we generate two collections of RR sets $\mathcal{R}_{1}$ and $\mathcal{R}_{2}$ in a sequential manner, and utilize $\mathcal{R}_{1}$ and $\mathcal{R}_{2}$ to determine selected POIs.

\textbf{POI selection and qualification.} 
Whenever $|\mathcal{R}_{1}|=|\mathcal{R}_{2}|=2^{i}$, we apply the greedy algorithm to select $b$ POIs from $\mathcal{P}_c$ covering the largest number of RR sets in $\mathcal{R}_{1}$. Let $\mathcal{P}_b^{*}$ be the POI set selected by the greedy algorithm. Then, we utilize $\mathcal{R}_{2}$ to determine an influence lower bound $\mathbb{I}^{-}_{p}(\mathcal{P}_b^{*})$ on $\mathbb{I}_{p}(\mathcal{P}_b^{*})$. Let $\mathcal{P}_b^{opt}$ denote the POI set with the largest influence among all size-$b$ subsets of $\mathcal{P}_c$. Next, we further use $\mathcal{R}_{1}$ to derive an influence upper bound $\mathbb{I}^{+}_{p}(\mathcal{P}_b^{opt})$ on $\mathbb{I}_{p}(\mathcal{P}_b^{opt})$. According to $\mathbb{I}^{-}_{p}(\mathcal{P}_b^{*})$ and $\mathbb{I}^{+}_{p}(\mathcal{P}_b^{opt})$, the worst approximation ratio of $\mathcal{P}_b^{*}$ is computed by $\mathbb{I}^{-}_{p}(\mathcal{P}_b^{*})/\mathbb{I}^{+}_{p}(\mathcal{P}_b^{opt})$. Given an error threshold $\epsilon$, if the derived worst approximation ratio is no less than $1-1/e-\epsilon$, $\mathcal{P}_b^{*}$ is returned. Otherwise, we repeat procedures 2 and 3 until we obtain a qualified $\mathcal{P}_b^{*}$. The bounds formulation of both $\mathbb{I}^{-}_{p}(\mathcal{P}_b^{*})$ and $\mathbb{I}^{+}_{p}(\mathcal{P}_b^{opt})$ are covered in Lemma~\ref{lemma:poi-inf-upper-lower-inf}.

\begin{lemma}
 \label{lemma:poi-inf-upper-lower-inf}
Given $\mathcal{R}_1$, $\mathcal{R}_2$, $\mathcal{P}_{b}^{*}$, and $\mathcal{P}_{b}^{opt}$ defined as above, let $\Lambda_{\mathcal{R}_{2}}(\mathcal{P}_{b}^{*})$ be the number of RR sets covered by $\mathcal{P}_{b}^{*}$ in $\mathcal{R}_{2}$ (the coverage of $\mathcal{P}_{b}^{*}$), and $\Lambda^{u}_{\mathcal{R}_{1}}(\mathcal{P}_{b}^{opt})$ be the upper bound of the coverage of $\mathcal{P}_{b}^{opt}$ in $\mathcal{R}_{1}$. Then, we have \\
\begin{small}
\begin{equation}
\label{equa:inf_lower}
\mathbb{I}^{-}_{p}\left(\mathcal{P}_{b}^{*}\right)=\left(\left(\sqrt{\Lambda_{\mathcal{R}_{2}}\left(\mathcal{P}_{b}^{*}\right)+\frac{2 \eta _{l}}{9}}-\sqrt{\frac{\eta_{l}}{2}}\right)^{2}-\frac{\eta_{l}}{18}\right) \cdot \frac{|\mathcal{V}_s|}{|\mathcal{R}_{2}|}
\end{equation}
\begin{equation}
\label{equa:inf_upper}
\mathbb{I}^{+}_{p}\left(\mathcal{P}_{b}^{opt}\right)=\left(\sqrt{\Lambda_{\mathcal{R}_{1}}^{u}\left(\mathcal{P}_{b}^{opt}\right)+\frac{\eta _{u}}{2}}+\sqrt{\frac{\eta _{u}}{2}}\right)^{2} \cdot \frac{|\mathcal{V}_s|}{|\mathcal{R}_{1}|}
\end{equation}
\end{small}

\noindent where $\eta_{l}$=$\ln(1/\delta_{l})$ {\rm(}$\eta_{u}$=$\ln(1/\delta_{u})${\rm)}, $\delta_{l}$ {\rm(}$\delta_{u}${\rm)} is the  probability when the lower {\rm(}upper{\rm)} bound estimation in Eq.~\ref{equa:inf_lower} {\rm(}Eq.~\ref{equa:inf_upper}{\rm)} fails. In ~\cite{guo2020sigmod,tang2018sigmod},  $\delta_{l}$=$\delta_{u}$=$1/2\cdot\delta$, with $\delta$ as the total error probability whose value is tunable according to users' requirements.
% \noindent \textcolor{red}{here $\eta_{l}=\ln(1/\delta_{1})$ {\rm(}$\eta_{u}=\ln(1/\delta_{u})${\rm)}, with $\delta_{1}$ ($\delta_{2}$) is the  probability when the lower (upper) bound estimation in Eq.~\ref{equa:inf_lower}(Eq.~\ref{equa:inf_upper}) fails.}

\end{lemma}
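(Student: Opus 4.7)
The plan is to derive both bounds from the standard RIS identity combined with concentration inequalities applied to coverage counts, mirroring the OPIM/SUBSIM style bounds of~\cite{tang2018sigmod,guo2020sigmod} but specialized to the POI-to-user heterogeneous graph $\mathcal{G}_H$ constructed in procedure~1. The cornerstone identity, which I would establish first, is that for any fixed POI set $\mathcal{P}_i$ and a collection $\mathcal{R}$ of RR sets generated from uniformly sampled users in $\mathcal{V}_s$, the coverage $\Lambda_{\mathcal{R}}(\mathcal{P}_i)$ is a sum of $|\mathcal{R}|$ i.i.d.\ Bernoulli indicators whose expectation satisfies $\mathbb{E}[\Lambda_{\mathcal{R}}(\mathcal{P}_i)] = \frac{|\mathcal{R}|}{|\mathcal{V}_s|}\cdot \mathbb{I}_p(\mathcal{P}_i)$. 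This follows because the event that a reverse BFS from a random $u_i$ reaches at least one user in $\mathcal{S}_r(\mathcal{P}_i)$ in a random possible world has exactly the probability $\mathbb{I}_p(\mathcal{P}_i)/|\mathcal{V}_s|$, by a direct rewriting of Eq.~\ref{equa:inf_possible} as an expectation over possible worlds and over the sampled root. Given this identity, both target bounds reduce to Chernoff/Hoeffding-type deviation bounds whose inversion yields the peculiar $(\sqrt{\cdot}\pm\sqrt{\cdot})^{2}$ algebraic form in the statement.

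For the lower bound on $\mathbb{I}_p(\mathcal{P}_b^{*})$, I would exploit the fact that $\mathcal{R}_2$ is generated independently of $\mathcal{R}_1$, so $\mathcal{P}_b^{*}$ (which is determined entirely from $\mathcal{R}_1$) can be treated as a fixed set when conditioning on $\mathcal{R}_1$. A one-sided multiplicative Chernoff bound applied to the sum $\Lambda_{\mathcal{R}_2}(\mathcal{P}_b^{*})$ of i.i.d.\ Bernoullis then states that $\mathbb{E}[\Lambda_{\mathcal{R}_2}(\mathcal{P}_b^{*})]$ is, with probability at least $1-\delta_l$, at least a quantity obtained by inverting the standard relation $\Pr[\Lambda \le (1-\gamma)\mu]\le \exp(-\gamma^{2}\mu/2)$. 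Solving this quadratic in $\sqrt{\mu}$ with $\eta_l=\ln(1/\delta_l)$ produces exactly the expression $(\sqrt{\Lambda_{\mathcal{R}_2}(\mathcal{P}_b^*)+2\eta_l/9}-\sqrt{\eta_l/2})^2-\eta_l/18$; multiplying by $|\mathcal{V}_s|/|\mathcal{R}_2|$ converts expected coverage back to influence via the identity above, giving Eq.~\ref{equa:inf_lower}.

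For the upper bound on $\mathbb{I}_p(\mathcal{P}_b^{opt})$, the situation is subtler because $\mathcal{P}_b^{opt}$ is itself a random object (it depends on the underlying influence landscape, not on the samples, but it is unknown), and we need the bound to hold simultaneously for every size-$b$ subset in order to dominate the unknown optimum. I would therefore apply the other-side Chernoff bound to each of the $\binom{|\mathcal{P}_c|}{b}$ candidate subsets and take a union bound; the failure probability $\delta_u$ in the statement is precisely the total budget that is implicitly distributed across all candidates, so that $\eta_u=\ln(1/\delta_u)$ absorbs the $\log\binom{|\mathcal{P}_c|}{b}$ factor in the customary RIS way. Inverting $\Pr[\Lambda\ge(1+\gamma)\mu]\le\exp(-\gamma^{2}\mu/(2+\gamma))$ and scaling yields the form $(\sqrt{\Lambda^u_{\mathcal{R}_1}(\mathcal{P}_b^{opt})+\eta_u/2}+\sqrt{\eta_u/2})^{2}\cdot|\mathcal{V}_s|/|\mathcal{R}_1|$ of Eq.~\ref{equa:inf_upper}, where $\Lambda^u_{\mathcal{R}_1}(\mathcal{P}_b^{opt})$ itself must be an a priori upper bound on the coverage of the (unknown) optimum in $\mathcal{R}_1$—obtainable, e.g., from the greedy estimate on $\mathcal{R}_1$ via the $(1-1/e)$ bound.

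The main technical obstacle I anticipate is the algebraic inversion of the two Chernoff relations into the non-symmetric $(\sqrt{\cdot}\pm\sqrt{\cdot})^{2}$ forms; this is a routine but error-prone quadratic manipulation in $\sqrt{\mu}$, and it is where the constants $2/9$ and $1/18$ appear. A secondary concern is justifying that the union-bound factor can be folded cleanly into $\eta_u$ without affecting the shape of the expression—this only requires that the per-candidate failure probability be set to $\delta_u/\binom{|\mathcal{P}_c|}{b}$, exactly as in~\cite{tang2018sigmod,guo2020sigmod}, so I would invoke their accounting verbatim once the setting has been translated to our heterogeneous graph $\mathcal{G}_H$.
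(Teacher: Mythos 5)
The paper does not actually prove this lemma---it simply defers to Lemmas 4.2 and 4.3 of Tang et al.~\cite{tang2018sigmod}---so your reconstruction goes well beyond what the authors wrote, and its overall architecture is the right one: the identity $\mathbb{E}[\Lambda_{\mathcal{R}}(\mathcal{P}_i)] = \tfrac{|\mathcal{R}|}{|\mathcal{V}_s|}\cdot\mathbb{I}_p(\mathcal{P}_i)$ on the heterogeneous graph $\mathcal{G}_H$, the use of the independent collection $\mathcal{R}_2$ so that $\mathcal{P}_b^{*}$ can be treated as fixed, and the inversion of concentration inequalities into the $(\sqrt{\cdot}\pm\sqrt{\cdot})^2$ forms. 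However, two concrete steps in your plan would fail as written. First, you have the two tails swapped. The lower bound on $\mathbb{I}_p(\mathcal{P}_b^{*})$ guards against the event that the observed coverage \emph{overestimates} its mean, so it must invert the \emph{upper}-tail inequality $\Pr[\Lambda\ge(1+\gamma)\mu]\le\exp\left(-\tfrac{\gamma^2}{2+\frac{2}{3}\gamma}\mu\right)$; it is precisely the $\tfrac{2}{3}\gamma$ term that produces the constants $\tfrac{2\eta_l}{9}$ and $\tfrac{\eta_l}{18}$ in Eq.~\ref{equa:inf_lower}. Symmetrically, the upper bound on $\mathbb{I}_p(\mathcal{P}_b^{opt})$ guards against the coverage \emph{underestimating} its mean and inverts the \emph{lower}-tail inequality $\Pr[\Lambda\le(1-\gamma)\mu]\le\exp(-\gamma^2\mu/2)$, whose clean denominator $2$ yields the $\tfrac{\eta_u}{2}$ constants in Eq.~\ref{equa:inf_upper}. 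Carrying out your inversion with the tails as you assigned them would produce formulas with the wrong constants on both sides.

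Second, the union bound over all $\binom{|\mathcal{P}_c|}{b}$ subsets is neither needed nor consistent with the lemma as stated. If $\eta_u$ absorbed a $\ln\binom{|\mathcal{P}_c|}{b}$ factor, it could not equal $\ln(1/\delta_u)$ with $\delta_u=\delta/2$, which is what the lemma asserts. The resolution, which you gesture at only in passing, is that $\mathcal{P}_b^{opt}$ is a single deterministic (though unknown) set, so the lower-tail inequality is applied to it exactly once; the fact that we cannot evaluate its coverage is handled deterministically by substituting the surrogate $\Lambda^u_{\mathcal{R}_1}(\mathcal{P}_b^{opt})=\Lambda_{\mathcal{R}_1}(\mathcal{P}_b^{*})/(1-1/e)$, which upper-bounds the optimum's coverage on the realized $\mathcal{R}_1$ by the greedy guarantee. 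The union bound over subsets belongs to the derivation of $\theta_{\max}$ in Lemma~\ref{lemma:rrset-size-bound}, not here.
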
  

\begin{proof}
The lemma is extended from Lemmas $4.2$ and $4.3$ by Tang et al.~\cite{tang2018sigmod} and thus the detailed proof is omitted. \end{proof}

\begin{figure}[t]
\centering
%\vspace*{1.5mm}
\hspace{-2mm}
\includegraphics[width=3.0in]{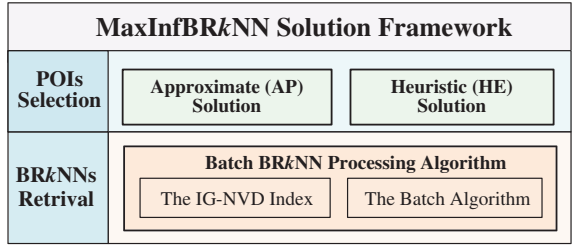}
\vspace{-3mm}
\caption{Framework Overview}
\label{fig:overview}
\vspace{-3mm}
\end{figure}

\begin{figure*}[t]
\centering
\includegraphics[width=6.8in]{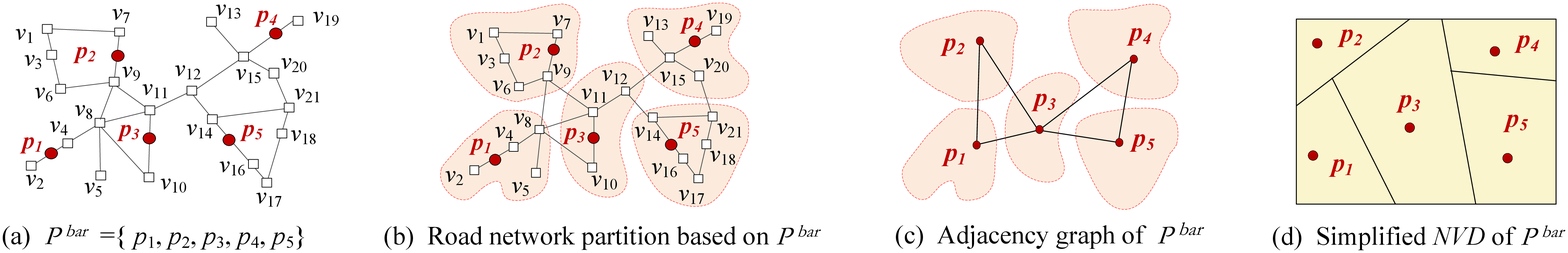}
\vspace*{-3mm}
\caption{{Illustration of NVD structures built for $\mathcal{P}$}}
\label{fig:nvd-IL-nvd-example}
\vspace*{-4mm}
\end{figure*}

\begin{figure}[t]
\centering
%\vspace*{1.5mm}
\hspace{-2mm}
\includegraphics[width=2.8in]{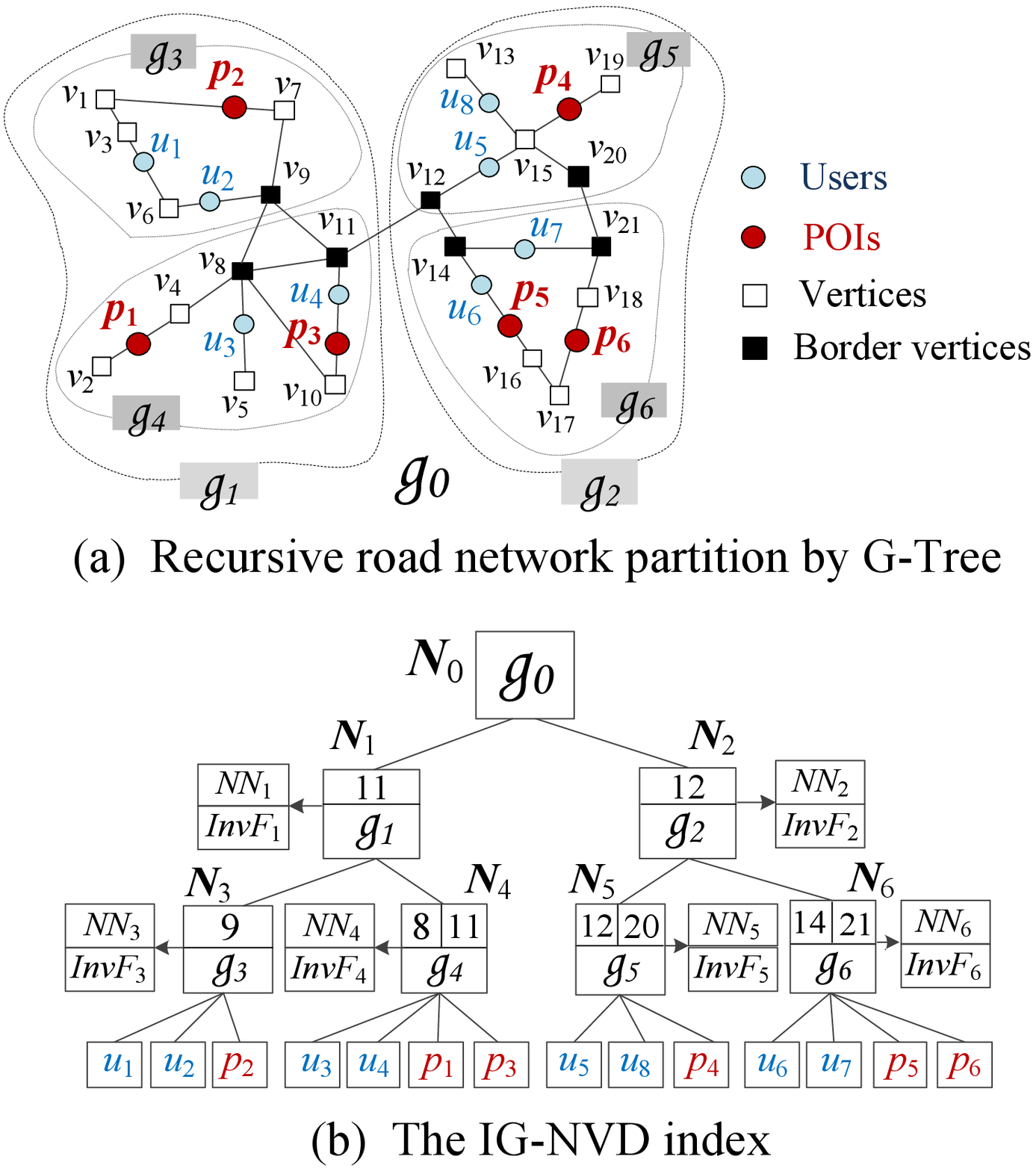}
\vspace{-3.5mm}
\caption{Illustration of the IG-NVD}
\label{fig:IG-NVD}
\vspace{-4.5mm}
\end{figure}

\textbf{Discussion.} Although the baseline solution can solve our problem with theoretical guarantees, it is still infeasible in practice due to three reasons. i) The time complexity of existing BR$k$NNs retrieval in geo-social networks is $O(\mathcal{|P|}\cdot\mathcal{|U|}^2)$~\cite{zhao2017icde}, which fails to scale to large geo-social networks. ii) The index (i.e., the GIM-Tree) used in the baseline is a memory consuming index. This index has to maintain two large matrices \cite{zhao2018icde,zhao2017icde} for social relevance computation, whose space cost of is $O(|\mathcal{P}|\cdot |\mathcal{U}|)$, which might yield prohibitive indexing costs. iii) Directly using RIS to evaluate the influence of POIs would lead to unstable running performance, especially when $b$ is small in real-life settings (as indicated by our empirical results). To ensure the accuracy, more RR sets have to be generated by BA to qualify the results, which results in expensive sampling costs and lower query efficiency.

\section{The Overview of Our Framework}
\label{sec:framework_overview}

To overcome the deficiencies in BA, we propose a framework as shown in Fig. \ref{fig:overview}. This framework composes two levels of technical contributions. Specifically, the first level in our framework is to support efficient and scalable batch BR$k$NN retrieval (to be introduced in Section~\ref{sec:batch}). Retrieving BR$k$NNs for multiple POIs is a prerequisite in solving MaxInfBR$k$NN with approximation guarantees, but is still very costly in practice, hence, the techniques in this level ensures the scalability of all our proposals. The second level offers strategies to effectively evaluate the influence of users in BR$k$NNs and selects influential POIs. Two alternative solutions are presented, including the approximation solution (denoted as AP) that returns theoretically guaranteed results (to be elaborated in Section~\ref{subsec:AP}), and the heuristic solution (denoted as HE) that solves the problem more efficiently without theoretical guarantees but still ensures highly qualified results (to be introduced in Section~\ref{sec:heuristic}).

\section{Batch BR$k$NN Processing}
\label{sec:batch}
We proceed to introduce a new index scheme with accompanying pruning techniques. Based on these, we develop efficient and scalable algorithms for batch BR$k$NN processing.

\subsection{Indexing Scheme}

Given that the social information in a geo-social network is dynamic and incurs extra update costs if indexed, we index only the geo-textual information that is relatively stable in real scenarios. Our index is, called IG-NVD, which is a variant of the \underline{i}nverted \underline{G}-Tree~\cite{zhong2015tkde} that incorporates \underline{N}etwork \underline{V}oronoi \underline{D}iagram (NVD) structures~\cite{mohammad2004vldb}.

As shown in Fig.~\ref{fig:IG-NVD}, the backbone of IG-NVD is a G-Tree~\cite{zhong2015tkde}, which recursively partitions a road network into sub-networks. The root node of a G-Tree represents the whole road network $\mathcal{G}_{r}$, and each sub-network $\mathcal{G}_{r}^{i} =(\mathcal{V}_{r}^{i}, \mathcal{E}_{r}^{i}, \mathcal{B}_{i}, \mathcal{P}_i,\mathcal{U}_i)$ is represented by a tree node $N_i$. Here, $\mathcal{V}_r^i$, $ \mathcal{E}_r^i$, and $\mathcal{B}_i$ denote the sets of vertices, edges, and border vertices in  $\mathcal{G}_r^i$, and  $\mathcal{P}_i$ ($\mathcal{U}_i$) is the set of POIs (users) located in $\mathcal{G}_r^i$.  Each tree node $N_i$ is associated with an inverted file and several NVDs. The inverted file of each node (e.g., \textit{Inv}$F_1$ of $N_1$ in Fig.~\ref{fig:IG-NVD}(b)) describes the textual information of users and POIs in the child nodes, which facilitates keyword-aware road network access.

The NVDs are constructed by dividing the road network according to the locations of POIs in the road network $\mathcal{G}_{r}$. Specifically, for a set $\mathcal{P}^{t}$ of POIs located on $\mathcal{G}_{r}$ covering the keyword $t$, the NVD for $\mathcal{P}^{t}$ is a data structure divides $\mathcal{G}_{r}$ into disjoint partitions, such that each partition corresponds to a specific POI $p\in\mathcal{P}^{t}$ and the set of vertices having $p$ as their nearest neighbor (NN). Let $\textit{NVD}^{t}$ denote the NVD built for $\mathcal{P}^{t}$. $\textit{NVD}^{t}$ can be modeled as an adjacency graph, whose node $\textit{NVP}(p_i)$ corresponds to a network Voronoi partition (NVP) for each $p\in\mathcal{P}^{t}$, and the adjacent edge connects two nodes $\textit{NVP}(o_i)$ and $\textit{NVP}(o_j)$ if an edge $(v_i,v_j)\in \mathcal{E}_r$ exists that connects $v_i\in \textit{NVP}(p_i)$ and $v_j\in \textit{NVP}(p_j)$. Figs.~\ref{fig:nvd-IL-nvd-example}(a) to \ref{fig:nvd-IL-nvd-example}(d) show an NVD example built for $\mathcal{P}^{\textit{bar}}=\{p_1,p_2,p_3,p_4,p_5\}$. As shown in Figs.~\ref{fig:nvd-IL-nvd-example}(b) and~\ref{fig:nvd-IL-nvd-example}(c), $\mathcal{G}_{r}$ is divided into five regions (NVPs), The NVPs of $\{p_1,p_2,p_3\}$ are mutually adjacent in the NVD. Each NVP contains the vertices with the same NN. By using NVDs, the NN of any vertex $v\in \mathcal{G}_r$ can be obtained directly. For example, the NN of a user located on $v_1$ is $p_2$, since $v_1 \in \textit{NVP}(p_2)$.

Let \textit{Voc} be the vocabulary of keywords in $\mathcal{P}$. For each $t_i\in\textit{Voc}$, we build NVDs for each $\mathcal{P}^{t_i}$ and associate the NN information with each vertex in $\mathcal{G}_{r}$. By using the properties of NVDs~\cite{mohammad2004vldb}, $k$ closest relevant POIs can be obtained to derive tightened score bounds to facilitate pruning (to be introduced in the Section~\ref{subsec:pruning}). With these  components, a nonleaf node $N_i$ of IG-NVD, is of the form ($\mathcal{G}_r^i$, $Ptr$, $\textit{InvF}_i$, $\mathcal{B}_i$, $\textit{NN}_i$). Here, (i) $\mathcal{G}_r^i$ is the subgraph indexed by $N_i$. (ii) $Ptr$ is a set of pointers pointing to the child nodes of $N_i$. (iii) $\textit{InvF}_i$ is the inverted file of $N_i$ indexing keyword information of users and POIs in child nodes, (iv) $\mathcal{B}_i$ is the set of border vertices of $N_i$, {and (v) $\textit{NN}_i$ is a map maintained for $N_i$ associating each $b\in\mathcal{B}_i$ with the corresponding NN in the NVDs. For example, in Figs~\ref{fig:IG-NVD}(a) and \ref{fig:IG-NVD}(b), given a tree node $N_1$ with $\mathcal{B}_1=\{v_{11}\}$, the NN of $v_{11}$ in $\textit{NVD}^{\textit{bar}}$ is $p_3$; so the record $\textit{NN}[v_{11}][\textit{`bar'}]=p_3$ is maintained in $\textit{NN}_1$. A leaf node $N_j$ is in a form ($\mathcal{G}_r^j$, $\textit{InvF}_j$, $\mathcal{B}_j$, $\mathcal{V}_r^j$, $\textit{NN}_j$), with $\mathcal{V}_r^j$ being the set of vertices in $\mathcal{G}_r^j$, $\textit{NN}_j$ associating each $v\in\mathcal{V}_r^i$ with NN information, and the remaining of elements being similar to those in  nonleaf nodes.}

\textbf{Optimizations.} Indexing all NVDs and NN information without excessive space costs is non-trivial, especially when the road network and the keyword vocabulary are large. To reduce the space cost, we introduce two optimizations. i) To reduce the number of required NVDs, we only build NVDs for frequent keywords and keep the information of infrequent keyword in the inverted posting lists~\cite{tenindra2020tkde}.
ii) As most of the frequent keywords are distributed unevenly in the road network, many vertices in $\mathcal{G}_r$ have the identical NN results. For such keywords, we only index compressed NN information by sharing their redundant records, thus reduce the average space cost required by each single NVD. Finally, instead of using the G-Tree to compute shortest path distances (whose time complexity is $O(|\mathcal{V}_r|)$~\cite{zhong2015tkde}), we utilize the pruned highway labeling (PHL) technique~\cite{akiba2014alenex} for network distance computation, whose response time is much smaller and nearly constant on massive road networks.

 \vspace{-1mm}
\subsection{Pruning Schemes}
\label{subsec:pruning}

Existing techniques~\cite{choudhury2016vldb,lu2011sigmod} derive separate bounds on spatial, textual, and social similarity scores~\cite{zhao2017icde}, and then combine them to obtain overall estimated bounds. This simple bound aggregation may yield loose overall bounds, which cause poor pruning abilities, especially for sparse social data. To overcome this, according to the definitions of BR$k$NN, we group relevant users of each $p\in\mathcal{P}_c$ into two categories: the set $\mathcal{U}_{S}(p)$ of users with social relevance to $p$ and the set $\mathcal{U}_{T}(p)$ of users with textual relevance. 
Users in different groups are evaluated separately with tailored pruning rules.
\vspace{1mm}

\subsubsection{Pruning Socially Relevant Users}

\begin{observation}
\label{obser:social_spacity}
In real-life settings, check-in data is sparse due to the limited mobility of users (i.e., users often visit nearby places). Thus, most POIs would have only a limited number of check-ins, and the average number of socially relevant users for each POI is relatively small w.r.t. $|\mathcal{U}|$.
\end{observation}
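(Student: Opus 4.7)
The plan is to justify Observation~\ref{obser:social_spacity} empirically rather than algebraically, since the claim concerns statistical regularities of real-world geo-social data rather than a deterministic mathematical identity. The proof proposal therefore amounts to a structured argument that the sparsity pattern is a robust, well-documented property of location-based social networks, supported both by cited literature and by direct measurements on the datasets used in Section~\ref{sec:exp}.

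First, I would appeal to established findings on the distribution of check-ins in LBSNs such as Gowalla, Brightkite, Foursquare, and Yelp. These studies consistently report that the per-POI check-in count $|p.\mathcal{CK}|$ follows a heavy-tailed (power-law or log-normal) distribution: a small fraction of ``celebrity'' POIs accumulates most check-ins, while the overwhelming majority attract only a handful. I would cite these measurements and, to make the claim concrete for this paper, summarize basic statistics (mean and median of $|p.\mathcal{CK}|$, tail quantiles, ratio to $|\mathcal{U}|$) on each real dataset used in the experiments.

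Second, I would ground the sparsity in the spatial structure of human mobility. Empirical studies of trajectories show that a user's visit probability decays rapidly with distance from the user's routine activity zone, so each $u\in\mathcal{U}$ generates check-ins only within a limited geographic radius. Because the road network $\mathcal{G}_r$ spans a much larger area than any individual user's mobility envelope, a given POI $p$ receives check-ins only from the small subset of users whose envelopes overlap its neighborhood, which structurally bounds $|p.\mathcal{CK}|$ away from $|\mathcal{U}|$.

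Third, I would lift check-in sparsity to socially relevant users. By the paper's definition, a user contributes to $\mathcal{U}_S(p)$ through a social-relevance term $f_s(u,p)$ that is driven by $p.\mathcal{CK}$ together with friend links in $\mathcal{G}_s$. Since the average degree of typical social graphs is a small constant relative to $|\mathcal{U}|$, even when each check-in ``fans out'' to its immediate social neighborhood, the product of a small $|p.\mathcal{CK}|$ with a bounded fan-out still yields $|\mathcal{U}_S(p)|\ll|\mathcal{U}|$ on average. The main obstacle is that the observation is not a crisp mathematical statement: ``relatively small'' has no prescribed threshold, so the most defensible presentation combines citations to known statistical regularities with a brief empirical table computed on the evaluation datasets, leaving the precise quantitative exploitation of the bound to the pruning analysis in Section~\ref{subsec:pruning}.
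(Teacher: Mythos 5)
Your proposal is correct and matches the paper's treatment: the paper gives no formal proof of this observation at all, asserting it as an empirical property of real geo-social data that is implicitly backed by the dataset statistics in Table II (average check-ins per POI of 63.3, 5.0, and 5.3 against user populations of roughly 29K, 196K, and 1.1M) and later validated by the pruning experiments in Section VIII. Your structured empirical argument via heavy-tailed check-in distributions, limited user mobility, and bounded social fan-out is the same kind of justification, merely spelled out more explicitly than the paper bothers to do.
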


The sparsity of social data may impede bound estimation. Thus, instead of combining social relevance bounds with other score bounds, we extract socially relevant users, and bound their top-$k$ result scores separately with only geo-textual information. Due to Observation~\ref{obser:social_spacity}, this process can be efficient in practice (to be verified in Section~\ref{sec:exp}).

\vspace{1mm}
\begin{lemma}
\label{prune:user}
Given a user $u_i$, the scoring lower bound list $\mathcal{L}^{\downarrow}_{\textit{SB}}(u_i)$ of $u_i$ is a list of $k$ tuples $(p_j$, $s_j)$  $(1 \le j \le k)$ sorted in descending order of $s_j$. Here, $s_j$ is the geo-textual similarity score between $p_j$ and $u_i$, i.e.,  $s_j=(1-\alpha)\cdot \frac{f_t (u_i,p_j)}{ f_g(u_i,p_j)}$. For a POI $p$, if $F_{\textit{GST}}(u_i,p)  \textless  s_k$, %( $s_k$ is the score of $k$-th element in $\mathcal{L}^{\downarrow}_{\textit{SB}}(u_i)$),
we have $u_i \notin \mathcal{S}_r(p)$.
\end{lemma}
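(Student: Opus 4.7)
The plan is to exploit the fact that the scoring function $F_{GST}$ decomposes into a nonnegative social contribution plus a geo-textual contribution, so $s_j$ is a provable lower bound on $F_{GST}(u_i,p_j)$. First I would rewrite Eq.~\ref{equ:gsk} as
\begin{equation*}
F_{GST}(u_i,p_j) \;=\; \alpha\cdot\frac{f_s(u_i,p_j)}{f_g(u_i,p_j)} \;+\; (1-\alpha)\cdot\frac{f_t(u_i,p_j)}{f_g(u_i,p_j)},
\end{equation*}
and observe that the social term is nonnegative (since $f_s\ge 0$, $f_g>0$, and $\alpha\in[0,1]$). Therefore $s_j=(1-\alpha)\cdot f_t(u_i,p_j)/f_g(u_i,p_j)\le F_{GST}(u_i,p_j)$ for every POI $p_j$ recorded in $\mathcal{L}^{\downarrow}_{\textit{SB}}(u_i)$.

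Next I would use the ordering of the list: since $s_1\ge s_2\ge\dots\ge s_k$, we have $F_{GST}(u_i,p_j)\ge s_j\ge s_k$ for each $j=1,\dots,k$. Hence there are at least $k$ distinct POIs in $\mathcal{P}$ whose true geo-social-textual score with $u_i$ is at least $s_k$.

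Now suppose, for contradiction, that $u_i\in \mathcal{S}_r(p)$ while $F_{GST}(u_i,p)<s_k$. By Definition~\ref{dfn:rkgsk}, $u_i\in \mathcal{S}_r(p)$ iff $p\in S_t(u_i)$, i.e., $p$ is among the top-$k$ POIs of $u_i$ under $F_{GST}$. But the previous step exhibits $k$ POIs $p_1,\dots,p_k$ with $F_{GST}(u_i,p_j)\ge s_k > F_{GST}(u_i,p)$, which forces $p$ to be strictly dominated by each of them and therefore excluded from the top-$k$. This contradicts $p\in S_t(u_i)$, completing the argument.

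The steps are essentially routine once the nonnegativity of the social term is noticed; the only mild obstacle is handling the case $p\in\{p_1,\dots,p_k\}$, but this is immediately ruled out because $F_{GST}(u_i,p_j)\ge s_j\ge s_k > F_{GST}(u_i,p)$ would force $p\neq p_j$ for every $j$. A secondary subtlety is ensuring the list is well-defined, i.e., that at least $k$ POIs with nonzero geo-textual score to $u_i$ exist; I would note this as an implicit assumption consistent with the problem setting (otherwise $S_t(u_i)$ itself has fewer than $k$ elements and the pruning rule is vacuously applicable).
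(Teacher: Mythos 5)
Your proof is correct and follows essentially the same route as the paper's: a contradiction via Definition~\ref{dfn:rkgsk}, showing that $u_i \in \mathcal{S}_r(p)$ would force $s_k \le F_{\textit{GST}}(u_i,p)$. The paper compresses this into one line, whereas you usefully make explicit the step it glosses over --- that the non-negativity of the social term guarantees $s_j \le F_{\textit{GST}}(u_i,p_j)$, so the $k$ listed POIs all strictly dominate $p$ and exclude it from $S_t(u_i)$.
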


\begin{proof}
Assume that $u_i \in \mathcal{S}_r(p)$ (i.e., $u_i$ is in the BR$k$NN result set of $p$). According to Definition~\ref{dfn:rkgsk}, $p$ should be in the top-$k$ results of $u_i$, and thus we have $s_k \le F_{\textit{GST}}(u_i, p)$, which contradicts the condition of the lemma. \end{proof}

Algorithm~\ref{algo:usr_lbsl} shows the procedures of computing $\mathcal{L}^{\downarrow}_{\textit{SB}}(u)$ with the IG-NVD index. The algorithm  initializes $\mathcal{L}^{\downarrow}_{\textit{SB}}(u)$ and $\mathcal{PQ}$ as empty (line 1), where $\mathcal{PQ}$ is a priority queue to sort the keywords $t \in u.\textit{key}$ in descending order of their geo-textual scores. For each $t\in u.key$, a separate max-priority $\mathcal{H}_t$ is maintained to store relevant POIs $p$ (lines 2--9). Specifically, if $t$ is a frequent keyword, the nearest POI $p_\textit{NN}$ of $u$ in $\mathcal{P}^{t}$ is first assessed and inserted into $\mathcal{H}_t$ (lines 4--7). If $t$ is infrequent, all POIs in the inverted list of $t$ are assessed (line 9). When $|\mathcal{L}^{\downarrow}_{\textit{SB}}(u)|\leq$ $k$, a while loop continues to assess more local relevant POIs by moving the top POI $p_n$ from $\mathcal{PQ}$ to $\mathcal{L}^{\downarrow}_{\textit{SB}}(u)$ (lines 12--15), and inserting each adjacent POI of $p_n$ in $\textit{NVD}^{t_m}$ into $\mathcal{H}_{t_m}$ for further evaluation (lines 16--19). Finally, $\mathcal{L}^{\downarrow}_{\textit{SB}}(u)$ is returned when it contains $k$ POIs (line 20). Next, we give a running example to illustrate how Algorithm~\ref{algo:usr_lbsl} works.

%\iffalse
\begin{example}
\label{examp:sb_list}
As shown in Fig.~\ref{fig:LBSN-dataset-example}, the socially relevant users of $p_1$ are $\{u_2, u_8\}$. Given $k$ = 2, we first compute $\mathcal{L}^{\downarrow}_{\textit{SB}}(u_2)$. Since $u_2.key$ = $\{'bar'\}$, which is a frequent keyword, the nearest POI $p_{\textit{NN}}$ of $u_2$ in $\textit{NVD}^{\textit{bar}}$ is $p_2$ (shown in Fig~\ref{fig:nvd-IL-nvd-example}(b)), and $( p_2, 0.41 )$ is inserted into $\mathcal{H}_{\textit{bar}}$ (lines 4--7) with $\langle (p_2, 0.41), bar\rangle$ added to $\mathcal{PQ}$ (line 11). Next, a while-loop is conducted to complete $\mathcal{L}^{\downarrow}_{\textit{SB}}(u_2)$. In the first iteration, $\langle (p_2, 0.41), bar\rangle$ is popped out from $\mathcal{PQ}$. As the current best keyword is `bar', the top element of $\mathcal{H}_{\textit{bar}}$ , $( p_2, 0.41)$, is removed from $\mathcal{H}_{\textit{bar}}$ and is inserted into $\mathcal{L}^{\downarrow}_{\textit{SB}}(u_2)$ (lines 12--15), and we have  $\mathcal{L}^{\downarrow}_{\textit{SB}}(u_2)=\{( p_2,$ $0.41)\}$. Thereafter, the adjacent POIs of $p_2$ in $\textit{NVD}^{\textit{bar}}$ (i.e., $p_1$ and $p_3$ in Fig~\ref{fig:nvd-IL-nvd-example}(c)) are evaluated and inserted into $\mathcal{H}_{\textit{bar}}$ (i.e., $\mathcal{H}_{\textit{bar}}=\{( p_1, 0.26 ), ( p_3, 0.17)\}$), and $\langle (p_1, 0.26), bar \rangle$ is added to $\mathcal{PQ}$ (lines 16--19). In the second iteration, $(p_1, 0.26)$ is removed from $\mathcal{H}_{bar}$, and is inserted into $\mathcal{L}^{\downarrow}_{\textit{SB}}(u_2)$. As $|\mathcal{L}^{\downarrow}_{\textit{SB}}(u_2)|=2$, the while-loop stops, with $\mathcal{L}^{\downarrow}_{\textit{SB}}(u_2)=\{( p_2,$ $0.41),$ $( p_1, 0.26)\}$. Similarly, we obtain $\mathcal{L}^{\downarrow}_{\textit{SB}}(u_8)=\{(p_4, 0.33),$ $(p_5,$ $0.18) \}$. According to Lemma~\ref{prune:user},  $u_8$ can be pruned for $p_1$, as  $F_{\textit{GST}}(u_8,$ $p_1)=0.08<\mathcal{L}^{\downarrow}_{\textit{SB}}(u_8).s_2=0.18$.
\end{example}
%\fi

% ----------------------computing scoring lower bound list
\begin{algorithm}[t]
\caption{$\mathcal{L}^{\downarrow}_{\textit{SB}}(u)$  computation algorithm}
\label{algo:usr_lbsl}
\begin{small}
\KwIn{a user $u$, the IG-NVD index, an integer $k$}
\KwOut{$\mathcal{L}^{\downarrow}_{\textit{SB}}(u)$ }
\vspace{0.5mm}
initialize $\mathcal{L}^{\downarrow}_{\textit{SB}}(u)$ and $\mathcal{PQ}$ as empty\;
\For {each keyword $t \in u.key$}{
        initialize $\mathcal{H}_t$ as empty\;
        \If{$t$ is a frequent keyword}{
             $p_{\textit{NN}}\gets$ the POI $p$ in $\textit{NVD}^{t}$ with $u.loc\in \textit{NVP}^{t}(p)$\;
             $s_{\textit{NN}}\gets (1-\alpha)\cdot f_t (u,p_{\textit{NN}}) \cdot f_g(u,p_{\textit{NN}})^{-1}$\;
             insert $( p_{\textit{NN}}, s_{\textit{NN}} )$ into $\mathcal{H}_t$\;
        }
         \Else{
            \hspace{2mm}insert all $p_{i}\in inv\_list(t)$ into $\mathcal{H}_{t}$\;

        }
        %\textbf{else} 
        push the top element of $\mathcal{H}_t$ into $\mathcal{PQ}$;

}

\While{$|\mathcal{L}^{\downarrow}_{\textit{SB}}(u)| < k$}{
    $\left\langle (p_n,s_{n}),t_m \right \rangle \gets \mathcal{PQ}.dequeue()$\;
    \If{$p_n \notin \mathcal{L}^{\downarrow}_{\textit{SB}}(u)$ }{
       insert $( p_{n}, s_{n} )$ into $\mathcal{L}^{\downarrow}_{\textit{SB}}(u)$\;
    }
    remove $( p_n, s_n )$ from  $\mathcal{H}_{t_m}$\;
    \If{$t_m$ is a frequent keyword }{
        \For {each $p_x$ adjacent to  $p_n$ in $\textit{NVD}^{t_{m}}$}{
               insert $( p_x, s_{x} )$ into $\mathcal{H}_{t_m}$\;
        }
    }

    insert the top entry of $\mathcal{H}_{t_m}$ back into $\mathcal{PQ}$\;
}

\Return $\mathcal{L}^{\downarrow}_{\textit{SB}}(u)$\;
\end{small}
\vspace{-0mm}
\end{algorithm}

\vspace{1mm}
\subsubsection{Pruning Textually Relevant Users}
\label{subsub: prune_textual_user}

\indent

Let $\mathcal{U}_T(p)$ denote the set of users with textual relevance to a POI $p$. Intuitively, we can also utilize Algorithm~\ref{algo:usr_lbsl} to prune the users in $\mathcal{U}_T(p)$. Nonetheless, the number of users with textual relevance to $p$ could be considerably large, which means $|\mathcal{S}_r(p)| \ll |\mathcal{U}_{T}(p)|$. Therefore, if we evaluate each  $u\in\mathcal{U}_T(p)$ one by one, it would cause unnecessary computation costs. In view of this, we develop pruning schemes tailored for textually relevant users with the notion of pseudo-users.

\vspace{1mm}
\begin{definition}\label{defn:pseudo-user}
{\bf \textit{Pseudo-user}}. Given a POI $p$, a G-Tree node $N_i$, and a border vertex $b_i\in\mathcal{B}_i$, a pseudo-user $u_{ps}=(\textit{loc}, \textit{Area}$, $\textit{key})$ is a synthetic data point built on $b_i$, with  $u_{\textit{ps}}.\textit{loc}=b_i$, $u_{\textit{ps}}. \textit{Area}=N_i$, and $u_{\textit{ps}}.\textit{key}=\mathcal{U}_i.\textit{key}_u\cap p.\textit{key}$. Here $\mathcal{U}_i$ is the set of users in $N_i$, and $\mathcal{U}_i.key_u$ is the union of the keywords covered by all users $u\in\mathcal{U}_i$. 
\end{definition}

Unlike pruning Euclidean spaces~\cite{choudhury2016vldb}, the notion of pseudo-users is proposed to prune road network spaces (which is more complex) by verifying a few landmark border vertices along shortest paths. By utilizing pseudo-users, tighter bounds can be derived on road networks to facilitated the pruning lemmas presented as below.

% ---------------------scoring lower bound map computation algorithm
\begin{algorithm}[t]
\caption{$\mathcal{M}^{\downarrow}_{\textit{SB}}(u_{ps})$ computation algorithm}
\label{algo:sb_map}
\small
\KwIn{a POI $p$, an integer $k$, a pseudo-user $u_{ps} = (b, N_i,$ $\mathcal{U}_i.key_u\cap p.key)$, the IG-NVD index}
\KwOut{$\mathcal{M}^{\downarrow}_{\textit{SB}}(u_{ps})$}
%The scoring lower bound map of $u_{ps}$ }
%Initialize $Key'$ as empty\;
initialize $\mathcal{L}_{\textit{SB}}^{\downarrow}(t)$ as empty for each $t\in u_{\textit{ps}}.\mathit{key}$\;
\vspace{0.5mm}
\For {each  $t \in u_{\textit{ps}}.key$}{
     \If{$t$ is a frequent keyword}
    {
        $p_{\textit{NN}} \gets$ $p_i$ in $\textit{NVD}^{t}$ with $u_{ps}.loc\in \textit{NVP}^{t}(p_i)$\;
        $s_{\textit{NN}}\gets (1-\alpha)\times \textit{TS}(t,p_n) / dist(u_{ps},p_{\textit{NN}})$\;
        insert $( p_{\textit{NN}}, s_{\textit{NN}}) $ into $\mathcal{H}_t$\;
        \While{$|\mathcal{L}_{\textit{SB}}^{\downarrow}(t)|< k$}{
            $( p_n,s_n ) \gets \mathcal{H}_t.dequeue()$\;
            %\State add $\left\langle o_n,t_m, MinF_{GSK}(b,o_n) \right \rangle $ into $\mathcal{L}_t(b)$
            \If{$p_n$ not already evaluated}{
                insert $( p_n,s_{n} ) $ into $\mathcal{L}_{\textit{SB}}^{\downarrow}(t)$ \;
            }
            \For {each $p_{a}$ adjacent to $p_n$ in $\textit{NVD}^{t}$}{
              insert $( p_{a}, F_{\textit{GST}}(u_{ps},p_{a}) )$ into $\mathcal{H}_{t}$\;
            }

        }
    }
    \Else
    {
        insert top-$k$ POIs from $inv\_list(t)$ into $\mathcal{L}_{\textit{SB}}^{\downarrow}(t)$\;

    }
    maintain $\langle t, \mathcal{L}_{\textit{SB}}^{\downarrow}(t) \rangle$ into $\mathcal{M}^{\downarrow}_{\textit{SB}}(u_{\textit{ps}})$\;
    %$\mathcal{M}^{\downarrow}_{\textit{SB}}(u)[t]$=$\mathcal{L}_{\textit{SB}}^{\downarrow}(t)$\;
}
\Return $\mathcal{M}^{\downarrow}_{\textit{SB}}(u_{\textit{ps}})$\;
\end{algorithm}

\begin{lemma}
\label{prune:upperbound}
Given a POI $p$, a pseudo-user $u_{\textit{ps}}=(b_i, N_i, \mathcal{U}_i \\.\textit{key}_u\cap p.\textit{key})$, where $b_i$ is one of border vertices of $N_i$ (i.e., $b_i\in \mathcal{B}_i$). Let $\mathcal{U}_i^{b}(p)$ be the users in $\mathcal{U}_i$, who are textually relevant to $p$ and whose shortest paths to $p$ pass through $b$. Then, for $\forall{u \in \mathcal{U}_i^{b}(p)}$, we have $F_{\textit{GST}}(u,p)\leq F_{\textit{GST}}(u_{ps},p)$.
\end{lemma}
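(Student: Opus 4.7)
The goal is to show that the pseudo-user $u_{ps}$ built on the border vertex $b$ of $N_i$ dominates every genuine textually-relevant user $u \in \mathcal{U}_i^{b}(p)$ in the geo-social-textual score with respect to $p$. My plan is to decouple $F_{GST}$ into its three constituent factors (geographical, textual, social), bound each separately under the hypotheses of the lemma, and then recombine using the monotonicity of the ratio $\frac{\alpha f_s + (1-\alpha)f_t}{f_g}$ in its numerator (increasing) and denominator (decreasing).

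\emph{Step 1 (the social term).} Since $u \in \mathcal{U}_i^{b}(p)$ is a \emph{textually} relevant user (by the classification introduced before Lemma~\ref{prune:upperbound}, such users live outside $\mathcal{U}_S(p)$), we have $f_s(u,p)=0$. By construction in Definition~\ref{defn:pseudo-user}, the pseudo-user $u_{ps}$ carries only a location and a keyword set but no social links, so $f_s(u_{ps},p)=0$ as well. Thus the social contribution vanishes on both sides and we only need to compare $(1-\alpha)f_t/f_g$.

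\emph{Step 2 (the geographical term).} Because $b$ lies on a shortest path from $u$ to $p$, the network distance decomposes additively as $dist(u,p) = dist(u,b) + dist(b,p)$. Since $u_{ps}.loc = b$, this gives $f_g(u,p) = dist(u,b) + f_g(u_{ps},p) \ge f_g(u_{ps},p)$, with equality iff $u$ is collocated with $b$. So the denominator for $u_{ps}$ is no larger than for $u$.

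\emph{Step 3 (the textual term).} Here I need $f_t(u,p) \le f_t(u_{ps},p)$. By Definition~\ref{defn:pseudo-user}, $u_{ps}.key = \mathcal{U}_i.key_u \cap p.key$, while $u \in \mathcal{U}_i$ contributes only $u.key \subseteq \mathcal{U}_i.key_u$; hence the set of terms shared between $u$ and $p$ is a subset of those shared between $u_{ps}$ and $p$. Under the standard TF-IDF matching score used in the paper (a non-negative sum over matched terms), adding more matching keywords can only increase $f_t$, giving $f_t(u,p) \le f_t(u_{ps},p)$.

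\emph{Step 4 (combine).} Since $(1-\alpha) \ge 0$ and both numerator and denominator are positive, Steps 2 and 3 together yield
\[
F_{GST}(u,p) \;=\; \frac{(1-\alpha)\,f_t(u,p)}{f_g(u,p)} \;\le\; \frac{(1-\alpha)\,f_t(u_{ps},p)}{f_g(u_{ps},p)} \;=\; F_{GST}(u_{ps},p),
\]
which is the desired bound. The main subtlety — and the step I expect to require the most care — is Step 3: justifying the monotonicity $f_t(u,p) \le f_t(u_{ps},p)$ really relies on the specific TF-IDF variant in use, because certain normalizations (e.g.\ cosine-style length normalization) could in principle shrink the score when extra keywords are added. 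A clean argument likely needs to invoke the concrete unnormalized TF-IDF formulation inherited from~\cite{choudhury2018vldbj,salton1988inf}, where the score is a sum of non-negative per-term contributions and thus monotone in the matched-keyword set. The geographical Step 2 is an immediate consequence of shortest-path additivity, and Step 1 is definitional.
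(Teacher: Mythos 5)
Your proof is correct and follows essentially the same route as the paper's: lower-bound the network distance via the additive decomposition through the border vertex $b$, upper-bound the textual score via the keyword-superset argument $u.\mathit{key}\cap p.\mathit{key}\subseteq \mathcal{U}_i.\mathit{key}_u\cap p.\mathit{key}$, and combine through the ratio in Eq.~(1). You are in fact slightly more careful than the paper, which silently ignores the social term $f_s$ (your Step 1 makes the needed disjointness of $\mathcal{U}_T(p)$ from $\mathcal{U}_S(p)$ explicit) and does not flag the dependence of Step 3 on the TF-IDF score being monotone in the matched-keyword set.
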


\begin{proof}
The shortest path from any $u\in\mathcal{U}_i^{b}$ to $p$ must pass through $b$; thus,   $f_g(u_{ps},p)\leq f_g(u,p)$ for each $u\in \mathcal{U}_i^{b}(p)$. Next, as $\mathcal{U}_i^{b}(p).\textit{key}_u \subset \mathcal{U}_i.\textit{key}_u$, $f_t(u_{\textit{ps}},p)\geq f_t(u,p)$ holds for each $u\in \mathcal{U}_i^{b}(p)$. Combined  with Eq.~\ref{equ:gsk}, for each $u\in \mathcal{U}_i^{b}(p)$, we have $F_{\textit{GST}}(u,p)$ $\leq  F_{\textit{GST}}(u_{ps},p)$. \end{proof}

Given a pair of a POI $p$ and a pseudo-user $u_{ps}=(b,N_i,$ $\mathcal{U}_i.key_u\cap p.key)$, we build a hash map $\mathcal{M}^{\downarrow}_{\textit{SB}}(u_{ps})$ associating each keyword $t\in u_{ps}.key$ with a sorted list $\mathcal{L}^{\downarrow}_{\textit{SB}}(t)$ that contains $k$ tuples $\langle p_i$, $s_i\rangle$ ($1 \le i \le k$). Here, $p_i$ is a POI with keyword $t$, and $s_i = (1-\alpha)\cdot TS(t,p_i) / dist(b,p_i)$, where $TS(t,p_i)$ is the TF-IDF score of $t$ in $p_i$.

\begin{lemma}
\label{lemma:prune_keyword}
Given a pseudo-user $u_{\textit{ps}}=(b,N_i,$ $\mathcal{U}_i.\mathit{key}_u\cap p.\mathit{key})$, if $\mathcal{L}^{\downarrow}_{\textit{SB}}(t).s_k\geq F_{\textit{GST}}(u_{ps},p)$ holds for each ${\langle t,\mathcal{L}^{\downarrow}_{\textit{SB}}(t)\rangle} \in \mathcal{M}^{\downarrow}_{\textit{SB}}(u_{ps})$, none of textually relevant users in $\mathcal{U}_i^{b}(p)$ can be influenced by $p$, and thus $u_{ps}$ can be discarded.
\end{lemma}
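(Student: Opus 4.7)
My plan is to argue by contradiction. Suppose some textually relevant user $u \in \mathcal{U}_i^{b}(p)$ has $p \in S_t(u)$, i.e., $p$ is among the top-$k$ most relevant POIs for $u$ under $F_{\textit{GST}}$. I will exhibit $k$ POIs drawn from a single entry of $\mathcal{M}^{\downarrow}_{\textit{SB}}(u_{ps})$ that are all ranked at least as high as $p$ by $u$, thereby forcing $p \notin S_t(u)$.

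First, since $u$ is textually relevant to $p$, at least one keyword lies in $u.\mathit{key} \cap p.\mathit{key}$; pick such a $t$. Because $u \in \mathcal{U}_i$, we have $u.\mathit{key} \subseteq \mathcal{U}_i.\mathit{key}_u$, so $t \in \mathcal{U}_i.\mathit{key}_u \cap p.\mathit{key} = u_{ps}.\mathit{key}$, which guarantees $\langle t,\mathcal{L}^{\downarrow}_{\textit{SB}}(t)\rangle \in \mathcal{M}^{\downarrow}_{\textit{SB}}(u_{ps})$ by the construction in Algorithm~\ref{algo:sb_map}. I then invoke Lemma~\ref{prune:upperbound} to transfer the comparison from $u$ to the pseudo-user: $F_{\textit{GST}}(u,p) \leq F_{\textit{GST}}(u_{ps},p)$.

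The core of the proof is to show that every entry $(p_j,s_j)$ in $\mathcal{L}^{\downarrow}_{\textit{SB}}(t)$ is a genuine lower bound on $F_{\textit{GST}}(u,p_j)$. Two ingredients are required. Textually, since $t\in u.\mathit{key}\cap p_j.\mathit{key}$, the TF-IDF contribution of $t$ alone satisfies $f_t(u,p_j)\geq TS(t,p_j)$. Geographically, the candidates in $\mathcal{L}^{\downarrow}_{\textit{SB}}(t)$ are grown outward from $b$ by the NVD-guided traversal in Algorithm~\ref{algo:sb_map}, and for $u \in \mathcal{U}_i^{b}(p)$ whose access from $N_i$ is routed through $b$, one shows $dist(b,p_j)\leq dist(u,p_j)$ for these NVD-adjacent POIs, hence $1/dist(b,p_j)\geq 1/f_g(u,p_j)$. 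Dropping the non-negative social term $\alpha f_s(u,p_j)/f_g(u,p_j)$ then yields $s_j \leq F_{\textit{GST}}(u,p_j)$.

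Chaining the inequalities, for each of the $k$ distinct POIs $p_j \in \mathcal{L}^{\downarrow}_{\textit{SB}}(t)$ one obtains $F_{\textit{GST}}(u,p_j) \geq s_j \geq s_k \geq F_{\textit{GST}}(u_{ps},p) \geq F_{\textit{GST}}(u,p)$, which ranks $k$ POIs no lower than $p$ for $u$ and contradicts $p \in S_t(u)$. Since $u$ was arbitrary in $\mathcal{U}_i^{b}(p)$, no user routed through $b$ can have $p$ in its top-$k$, so $p$'s BR$k$NN result contains none of them and $u_{ps}$ is safely discarded. I expect the main obstacle to be the geographical sub-claim: on a general road network, the shortest path from $u$ to an arbitrary $p_j$ need not pass through $b$, and this gap must be closed by carefully exploiting the locality of the NVD expansion from $b$ together with the routing property baked into the definition of $\mathcal{U}_i^{b}(p)$.
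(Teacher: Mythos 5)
Your overall strategy coincides with the paper's: use Lemma~\ref{prune:upperbound} to get $F_{\textit{GST}}(u,p)\le F_{\textit{GST}}(u_{ps},p)\le \mathcal{L}^{\downarrow}_{\textit{SB}}(t).s_k$, and then exhibit the $k$ POIs of $\mathcal{L}^{\downarrow}_{\textit{SB}}(t)$ as witnesses that crowd $p$ out of $u$'s top-$k$; phrasing it as a contradiction rather than directly is cosmetic. The one step you try to prove rigorously --- that each $s_j$ is a genuine lower bound on $F_{\textit{GST}}(u,p_j)$ --- is precisely where your argument breaks. Since $s_j=(1-\alpha)\,TS(t,p_j)/dist(b,p_j)$ and, after dropping the social term and keeping only keyword $t$, $F_{\textit{GST}}(u,p_j)\ge(1-\alpha)\,TS(t,p_j)/dist(u,p_j)$, the sufficient geographic condition is $dist(u,p_j)\le dist(b,p_j)$: the user must be at least as close to $p_j$ as the border is. You instead posit $dist(b,p_j)\le dist(u,p_j)$, which gives $1/dist(b,p_j)\ge 1/f_g(u,p_j)$ and hence makes $s_j$ \emph{larger} than the single-keyword estimate $(1-\alpha)TS(t,p_j)/f_g(u,p_j)$. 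Since $F_{\textit{GST}}(u,p_j)$ is also only bounded \emph{below} by that same quantity, both sides sit above a common value and no comparison between them follows; the chain $F_{\textit{GST}}(u,p_j)\ge s_j\ge s_k\ge F_{\textit{GST}}(u,p)$ is not established, and the contradiction does not materialize.

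To your credit, you explicitly flagged this geographic sub-claim as the obstacle, and you correctly handled the textual side (choosing $t\in u.\mathit{key}\cap p.\mathit{key}$ so that $f_t(u,p_j)\ge TS(t,p_j)$, with $t\in p_j.\mathit{key}$ by construction of the list). Be aware, though, that the paper's own proof does not close this gap either: it simply asserts $F_{\textit{GST}}(u,p_i)\ge s_k$ with no justification, and the inequality $dist(u,p_i)\le dist(b,p_i)$ that would be needed is not automatic --- for a POI $p_i$ outside $N_i$ whose shortest path from $u$ exits through $b$, one has $dist(u,p_i)=dist(u,b)+dist(b,p_i)\ge dist(b,p_i)$, i.e.\ the opposite direction. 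So you have isolated the genuine crux of the lemma, but your resolution of it is oriented the wrong way, and as written the proof does not go through.
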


\begin{proof}
By Lemma~\ref{prune:upperbound}, $F_{\textit{GST}}(u_{ps},p) \ge F_{\textit{GST}}(u,p)$ holds for any  $u\in\mathcal{U}_i^{b}(p)$. If all $t\in u_{ps}.key$ having $\mathcal{L}^{\downarrow}_{\textit{SB}}(t).s_k\geq F_{\textit{GST}}(u_{ps},p)$, it must hold that  $\mathcal{L}^{\downarrow}_{\textit{SB}}(t).s_k\geq F_{\textit{GST}}(u,p)$ for all users in $\mathcal{U}_i^{b}(p)$. Hence, there are at least $k$ POIs $p_i (1 \le i \le k)$ in $\mathcal{L}^{\downarrow}_{\textit{SB}}(t)$ with $F_{\textit{GST}}(u, p_i) \ge s_k \ge F_{\textit{GST}}(u_{ps},p) \ge F_{\textit{GST}}(u,p)$. Therefore, no user $u\in\mathcal{U}_i^{b}(p)$ can be influenced by $p$, and thus $u_{ps}$ can be pruned away. The proof completes. \end{proof}

\begin{lemma}
\label{lemma:prune-early-termination-beyond}
Given a tree node $N_i$, if all pseudo-users $u_{psi}$ generated for $p$ on each border $b_i\in N_i$ satisfy the condition in Lemma~\ref{lemma:prune_keyword}, all the textually relevant users with shortest paths to $p$ passing any vertex in ${N}_i$ cannot be influenced by $p$.
\end{lemma}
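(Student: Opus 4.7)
The plan is to reduce this statement to a per-border application of Lemma~\ref{lemma:prune_keyword}, leveraging the defining property of the G-Tree decomposition: any shortest path that crosses into or out of the sub-network $\mathcal{G}_r^i$ must traverse one of its border vertices in $\mathcal{B}_i$. First I would fix an arbitrary textually relevant user $u$ whose shortest path to $p$ passes through some vertex $v \in N_i$, and argue that this path necessarily also crosses at least one border $b_j \in \mathcal{B}_i$ — either as the entry point into $\mathcal{G}_r^i$ or as the exit point toward $p$. This step invokes the structural property of G-Tree borders from~\cite{zhong2015tkde}, and handles uniformly the cases $u \in \mathcal{U}_i$ with $p$ outside $N_i$ and $u \notin \mathcal{U}_i$ with the path merely passing through $N_i$.

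Once such a border $b_j$ is identified, the user $u$ lies by definition in the set $\mathcal{U}_i^{b_j}(p)$ used in Lemma~\ref{prune:upperbound} and Lemma~\ref{lemma:prune_keyword}. By hypothesis, the pseudo-user $u_{ps_j}$ built on $b_j$ satisfies $\mathcal{L}^{\downarrow}_{\textit{SB}}(t).s_k \ge F_{\textit{GST}}(u_{ps_j},p)$ for every $t \in u_{ps_j}.\textit{key}$, so Lemma~\ref{lemma:prune_keyword} applies directly at $b_j$ and rules out $u$ as a candidate that $p$ can influence. Taking the union over all borders $b_j \in \mathcal{B}_i$ then sweeps in every textually relevant user whose shortest path to $p$ crosses $N_i$, which is exactly the statement of the lemma.

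The step I expect to require the most care is the first one: verifying that a shortest path touching any interior vertex of $N_i$ must also cross at least one border vertex. For users outside $N_i$ this is immediate from the border-cut property, but for users inside $N_i$ one has to argue either that $p$ lies outside $N_i$ (forcing the path to exit through a border) or, in the degenerate case where $p$ too lies within $N_i$, that the pseudo-user construction already subsumes such users via their own incident borders. A secondary check — already handled when proving Lemma~\ref{prune:upperbound} and Lemma~\ref{lemma:prune_keyword} — is that the pseudo-user keyword set $\mathcal{U}_i.\textit{key}_u \cap p.\textit{key}$ is large enough to upper-bound any individual user's textual contribution, so the per-border bound remains valid; no new computation is needed here beyond pointing back to those lemmas.
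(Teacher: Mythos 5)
Your core argument is exactly what the paper intends: the paper's own ``proof'' is a single sentence deferring to Lemma~\ref{lemma:prune_keyword}, and your expansion --- every textually relevant user $u\in\mathcal{U}_i$ has a shortest path to $p$ that must exit $\mathcal{G}_r^i$ through some border $b_j\in\mathcal{B}_i$, hence $u\in\mathcal{U}_i^{b_j}(p)$, hence the hypothesis lets you apply Lemma~\ref{lemma:prune_keyword} at $b_j$, and the union over borders covers all of $\mathcal{U}_i$ --- is the right way to fill in that sentence. The degenerate case $p\in N_i$ is moot because Algorithm~\ref{algo:core_function} never generates pseudo-users for $\textit{Leaf}(p)$ or its ancestors.

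The one step that does not go through is your claim to handle ``uniformly'' the case $u\notin\mathcal{U}_i$ whose shortest path merely transits $N_i$. The upper bound of Lemma~\ref{prune:upperbound} has two ingredients: the distance bound $f_g(u_{ps},p)\le f_g(u,p)$, which indeed survives for transiting users, and the textual bound $f_t(u_{ps},p)\ge f_t(u,p)$, which rests on $u.\textit{key}\cap p.\textit{key}\subseteq\mathcal{U}_i.\textit{key}_u\cap p.\textit{key}=u_{ps}.\textit{key}$ --- and that containment holds only for users located in $N_i$. A user outside $N_i$ sharing a keyword with $p$ that no user inside $N_i$ carries is simply invisible to the pseudo-users on $\mathcal{B}_i$, so no per-border application of Lemma~\ref{lemma:prune_keyword} can rule it out; such users are (and must be) pruned when their own containing node is processed. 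So you should restrict the statement you actually prove to users in $\mathcal{U}_i$ (equivalently, read ``users with shortest paths passing through $N_i$'' as the paper's loose phrasing for ``users located in $N_i$''), rather than asserting the literal superset, and drop the remark that the keyword check ``needs no new computation'' for the transiting case --- for that case it is not a check that can be discharged at all with this construction.
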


\begin{proof}
Lemma~\ref{lemma:prune-early-termination-beyond} is an extension of Lemma~\ref{lemma:prune_keyword}. \end{proof}

\vspace{-1mm}
The pseudo-code of Algorithm~\ref{algo:sb_map} summarizes the procedure of computing  $\mathcal{L}_{\textit{\textit{SB}}}^{\downarrow}(t)$ for each frequent keyword $t\in u_{\textit{ps}}.\textit{key}$ when generate  $\mathcal{M}^{\downarrow}_{\textit{SB}}(u_{\textit{ps}})$, the workflow resembles that in Algorithm~\ref{algo:usr_lbsl}, so we omit its details for brevity.

\subsection{Batch Processing Algorithms}

Exploiting the above pruning rules, the pseudo-code of the batch BR$k$NN processing algorithm is presented in Algorithm~\ref{algo:batch}. Two important functions of Algorithm~\ref{algo:batch} are shown in Algorithm~\ref{algo:core_function}. Algorithm~\ref{algo:batch} initializes a priority queue $\mathcal{Q}$ to maintain pairs of each generated pseudo-user and the corresponding POI in ascending order of the shortest path distances. A set $\mathcal{U}_c$ stores all BR$k$NN result candidates, and a vector $\mathbb{N}_{\textit{max}}$ maintains the current uppermost nodes explored for each POIs. A map $\mathcal{AL}$ associates each $u\in\mathcal{U}_c$ with the POIs in $\mathcal{P}_c$ that may influence $u$ (line 1). Let $\mathcal{U}_{T}(p, \textit{Leaf}(p))$ denote the set of local textually relevant users of a POI $p$ (i.e., the users located in the same leaf node of the POI). Algorithm~\ref{algo:batch} first evaluates socially relevant users and users in $\mathcal{U}_{T}(p, \textit{Leaf}(p))$ for each $p\in\mathcal{P}_c$ (lines 2--7). Then, it iteratively expands the road network and evaluate remaining textually relevant users to find additional BR$k$NN candidates (lines 8--13). During the process, pseudo-users are generated on the border vertices in different levels of the G-Tree (lines 6--10 of Algorithm~\ref{algo:core_function}) to effectively prune users by Lemmas~\ref{lemma:prune_keyword} and~\ref{lemma:prune-early-termination-beyond} (lines 12 and 17 of Algorithm~\ref{algo:core_function}). Users who cannot be pruned are added to $\mathcal{U}_c$ (line 6) and further verified by the top-$k$ algorithm extended from~\cite{tenindra2020tkde} (lines 15). Note that we modify the existing top-$k$ algorithm \cite{tenindra2020tkde} to incorporate social relevance. Finally, batch results for each $p\in\mathcal{P}_c$ are returned (line 18). To better illustrate the procedures in Algorithm~\ref{algo:batch}, we give a running example as below.

%\iffalse

\begin{example}
\label{examp:bath}
Given $\mathcal{P}_c=\{p_1,p_3\}$ and $k=2$.
Algorithm~\ref{algo:batch} first evaluates socially and local textual relevant users (lines 2--7). Since each step is similar to Example~\ref{examp:sb_list}, we omit its details for brevity. After that, we obtain $\mathcal{U}_c=\{u_2, u_3, u_4, u_5\}$, $\mathbb{N}_{\textit{max}}=\{\langle p_1,N_4 \rangle,\langle p_3,N_4 \rangle\}$, and $\mathcal{AL}=\{(u_2, \langle p_1\rangle), (u_3, \langle p_1, p_3\rangle),$ $(u_4, \langle p_3 \rangle), (u_5, \langle p_3 \rangle)\}$. The rest of textually relevant users are evaluated as follows.

\vspace{1mm}
\textbf{Loop 1:} As $\mathcal{Q}$ is empty, $\mathbb{N}_{\textit{max}}=\{\langle p_1,N_4 \rangle, \langle p_3,N_4 \rangle \}$, Extend\_Range function is invoked to explore the G-tree (line 9). In Fig~\ref{fig:IG-NVD}, the  nearest ancestor node of $N_4$ with textual relevance to $p_1$ and $p_3$ is $N_1$ (line 4--5 of Algorithm~\ref{algo:core_function}). And the child nodes of $N_1$, (which is also the sibling node of $N_4$) with textual relevance to $p_1$ and $p_3$ is $N_3$. Hence, two pseudo-users $u_{ps1}=(v_{9}, N_3, \{\textit{`bar'}, \textit{`wine'}\})$ and $u_{ps2}=(v_{9}, N_3, \{\textit{`bar'}\})$ are generated on the border vertex $v_9$ for $p_1$ and $p_3$, respectively, with $(u_{ps1},p_1)$ and $(u_{ps2},p_3)$ inserted into $\mathcal{Q}$ (i.e., $\mathcal{Q}=\{ \langle (u_{ps2},p_3), 5\rangle, \langle (u_{ps1},p_1), 10\rangle\}$) (line 6--10 of Algorithm~\ref{algo:core_function}). Now, $N_1$ becomes the current uppermost nodes explored for $p_1$ and $p_3$. Since $N_1$ can be pruned for $p_1$ but cannot be pruned for $p_3$ via Lemma~\ref{lemma:prune-early-termination-beyond}, we update $\mathbb{N}_{\textit{max}}=\{\langle p_3,N_1 \rangle\}$ (lines 11--13 of Algorithm~\ref{algo:core_function}).

Next, it proceeds to pop out $(u_{ps2},p_3)$ from $\mathcal{Q}$ (line 10), where $u_{ps2}=(v_{9}, N_3,$ $\{\textit{`bar'}\})$. $\mathcal{M}^{\downarrow}_{\textit{SB}}(u_{ps2})$ is generated by computing $\mathcal{L}^{\downarrow}_{\textit{SB}}('bar')$(lines 12). As the keyword of $u_{ps2}$, $'bar'$, can not be pruned by lemma~\ref{lemma:prune_keyword} and $N_3$ is a leaf node (lines 15--18 of Algorithm~\ref{algo:core_function}), the textually relevant users $u_1$ and $u_2$ in $N_3$ are evaluated for $p_3$ (line 21 of Algorithm~\ref{algo:core_function}). As $\mathcal{L}^{\downarrow}_{\textit{SB}}(u_{1}).s_2$ $\geq F_{\textit{GST}}(u_1,p_3)$ and  $\mathcal{L}^{\downarrow}_{\textit{SB}}(u_{2}).s_2$ $\geq F_{\textit{GST}}(u_2,p_3)$, both $u_1$ and $u_2$ can be pruned for $p_3$ by Lemma~\ref{prune:user} (line 6).

%\vspace{-1.5mm}
\textbf{Loop 2:} As $\mathcal{Q}$ is not empty, we proceed to pop out $(u_{ps1},p_1)$ from $\mathcal{Q}$, where $u_{ps1}=(v_{9}, N_3, \{\textit{`bar',`wine'}\})$. Similarly, $u_{ps2}$ can not be pruned, and thus textually relevant users in $N_3$ (i.e., $u_1$ and $u_2$) are evaluated for $p_1$. As $\mathcal{L}^{\downarrow}_{\textit{SB}}(u_{1}).s_2\leq F_{\textit{GST}}(u_1,p_3)$ and $\mathcal{L}^{\downarrow}_{\textit{SB}}(u_{2}).s_2\leq F_{\textit{GST}}(u_2,p_3)$, $u_1$ and $u_2$ are added to $\mathcal{U}_c$ for further verification.

Again, $\mathcal{Q}$ becomes empty, since $\mathbb{N}_{\textit{max}}\neq\varnothing$, we continue to explore the road network, update $\mathcal{Q}$, and pop out the remaining entries until $\mathcal{Q}=\varnothing$ and $\mathbb{N}_{\textit{max}}=\varnothing$. When the while loop stops, we get $\mathcal{U}_c=\{u_1,u_2,u_3,u_4,u_5\}$, which is further examined by the top-$k$ algorithm. Details of top-$k$ algorithm can be found in~\cite{tenindra2020tkde}. Finally, we obtain batch BR$k$NN results as  $S_r(p_1)=\{u_1,u_2,u_3\}$ and $ S_r(p_3)=\{u_3,u_4,u_5\}$.

\end{example}

%-----------------batch processing algorithm /Batch BR$k$NN Processing
\begin{algorithm}[t]
\caption{Batch processing algorithm}
\label{algo:batch}
\small
\KwIn{a query set $\mathcal{P}_c$, the IG-NVD index, an integer $k$}
\KwOut{batch BR$k$NN result set for each $p\in\mathcal{P}_c$}
initialize priority queue $\mathcal{Q}$, user set $\mathcal{U}_c$, a tree node vector $\mathbb{N}_{\textit{max}}$, and hash map $\mathcal{AL}$\;
\For{each $p \in\mathcal{P}_c$ }{
    \For {each user $u \in \mathcal{U}_{S}(p) \cup \mathcal{U}_{T}(p$, $\textit{Leaf}(p))$}{
         compute $\mathcal{L}^{\downarrow}_{\textit{SB}}(u)$ by using Algorithm ~\ref{algo:usr_lbsl}\;
         \If{$\mathcal{L}^{\downarrow}_{\textit{SB}}(u).s_{k}<F_{\textit{GST}}(u,p)$}{
                insert $u$ into $\mathcal{U}_c$, add $p$ to $\mathcal{AL}\left[u\right]$; // Lemma 4
         }
   
    }
    add $\langle p,$ $\textit{Leaf}(p)\rangle$ to $\mathbb{N}_{\textit{max}}$\;
}

\While{$\mathcal{Q} \neq \varnothing$ \textbf{or} $\mathbb{N}_{\textit{max}} \neq \varnothing$}{
    \textbf{if} \hspace*{0.01in} $\mathcal{Q}=\varnothing$ \hspace*{0.02in} \textbf{then} \hspace*{0.02in} Extend\_Range($\mathbb{N}_{\textit{max}}$,$\mathcal{Q}$, $k$)\;

    $\langle u_i, p_j \rangle \gets$ DEQUEUE$(\mathcal{Q})$\;
    \If {$u_i$ is not processed for $p_j$}{
       compute $\mathcal{M}^{\downarrow}_{\textit{SB}}(u_i)$\;
       UpdateQueue($\mathcal{M}^{\downarrow}_{\textit{SB}}(u_i)$, $p_j$, $\mathcal{Q}$, $k$)\;
    }
}
\For{each user $u\in\mathcal{U}_c$}{

   $\mathcal{S}_t(u)\gets$ T$k$GSKQ($u$, $k$); // algorithm of~\cite{tenindra2020tkde}\;
  
   \For{each $p_i \in \mathcal{AL}[u]$}{
        
         \textbf{if}  $p_i \in \mathcal{S}_t(u)$ \textbf{then} \hspace{2mm} add $u$ into $\mathcal{S}_r(p_i)$\;    
   }
}
\Return $\mathcal{S}_r(p)$ for each $p\in\mathcal{P}_c$\;
\end{algorithm}
\vspace{-1mm}

\vspace{0mm}
\subsection{Discussion.} 

We analyze the time complexity of Algorithm~\ref{algo:batch} which includes three parts: i) evaluating socially and nearby textually relevant users (lines 2--7), ii) evaluating remaining textually relevant users (lines 8--13), and iii) performing top-$k$ verification of users in BR$k$NN candidates (lines 14--17). In the worst case, Algorithm~\ref{algo:batch} at most accesses $|\mathcal{U}|$ users for each POI during part i) and ii), whose time complexity is $O(|\mathcal{U}||\mathcal{P}_c|\delta_{1})$, where $\delta_{1}$ denotes the time complexity of Algorithm~\ref{algo:usr_lbsl}. Let $\tau$ be the largest frequency of an infrequent keyword, and let $\bar{w}_1$ ($\bar{w}_2$) be the average number of frequent (infrequent) keywords of a user. By analyzing Algorithm~\ref{algo:usr_lbsl}, we have $\mathcal{O}(\delta_{1})$=$O(k(\bar{w}_1+\bar{w}_2)+(\tau +log\tau)\bar{w}_2+ klogk)$. Further, at most $|\mathcal{B}|$ (the border vertex set size) pseudo-users are generated for each POI, the complexity of which is $O(|\mathcal{B}||\mathcal{P}_c|\delta_{2})$. Here, $\delta_{2}$ represents the cost of computing  $\mathcal{M}^{\downarrow}_{\textit{SB}}(u_{ps})$ for a pseudo-user $u_{\textit{ps}}$, which is  $O(k\bar{W}_1+\tau\bar{W}_2 + klogk + \tau log\tau)$, with $\bar{W}_1$ ($\bar{W}_2$) being the average number of frequent (infrequent) keywords covered by a POI. Finally, let $\Delta$ be the cost of the top-$k$ algorithm. Then cost of phase iii) is $O(|\mathcal{U}|\mathcal{P}_c|\Delta)$. To sum up, the total time complexity of Algorithm~\ref{algo:batch} is $O(\xi_1|\mathcal{U}|+\xi_2)$, with $\xi_1=(\Delta+\delta_1)|\mathcal{P}_c|$ and $\xi_2=\delta_2|\mathcal{P}_c||\mathcal{B}|$. As $\tau$, $\bar{w}_1$, $\bar{w}_1$, $\bar{W}_1$, $\bar{W}_2$, $|\mathcal{P}_c|$ and $k$ are often small values, while $|\mathcal{B}|$ and $\Delta$ are also small in real settings~\cite{tenindra2020tkde,sibo2016vldb}, we have that, $\xi_1$ and $\xi_2$ are reasonably small as well (i.e., $\xi_1(\xi_2) \ll |\mathcal{P}|$).

% %-----------------key functions batch processing algorithm
\begin{algorithm}[t]
\caption{Key Functions in Algorithm~\ref{algo:batch}}
\label{algo:core_function}
\small
\hspace{-1.2mm}
\textbf{Function:} Extend\_Range$(\mathbb{N}_{\textit{max}}$, $\mathcal{Q}$, $k)$ \\

\For{each $\langle p_i, N_{i} \rangle \in \mathbb{N}_{\textit{max}}$}{
    $N_{k} \gets N_{i}.\textit{father}$\;
    \While{$\mathcal{U}_{k}.\textit{key}_{u} \cap p_i.\textit{key} = \varnothing$ and $N_{k} \neq root$ }{
        $N_{k} \gets N_{k}.\textit{father}$ \;
    }
     \For{each child $N_j$ of $N_k$ $(\mathcal{U}_j.key_u\cap u_{ps}.key =  \varnothing)$}
     {
         \If{$N_j$ is not Leaf$(p)$ and not its any ancestor}{
                            \For{each border vertex $b'\in N_j$}{
                                $u_{ps}' \gets (b',N_j, \mathcal{U}_j.key_u\cap p.key$\;
                                {\rm ENQUEUE}$(\mathcal{Q}$, $\langle u_{ps}', p\rangle$, $dist(b',p))$\;
                            }
        }
    }

    \If{$N_k$ can be discarded for $p_i$}{
        remove $\langle p_i, N_{i} \rangle$ from $\mathbb{N}_{\max}$; // {\rm Lemma~\ref{lemma:prune-early-termination-beyond}}
    }
    \textbf{else} \hspace{0.5mm} replace $\langle p_i, N_{i} \rangle$ with $\langle p_i, N_{k} \rangle$ into $\mathbb{N}_{\max}$\;

}

\hspace{-1.2mm}
  {\bf Function:} UpdateQueue($\mathcal{M}_{\textit{SB}}^{\downarrow}(u_{ps}=\langle b_i, N_i, key \rangle), p, \mathcal{Q}, k$)\\

 %\For{each keyword $t_m\in u_i.key$}{
 \For{each $ \langle t,  \mathcal{L}_{\textit{SB}}^{\downarrow}(t) \rangle \in \mathcal{M}_{\textit{SB}}^{\downarrow}(u_{ps})$}{

                  \If { $\mathcal{L}^{\downarrow}_{\textit{SB}}(t).s_{k} \geq      F_{GST}(u_{ps},p)$}{
                    
                        \hspace{-1mm} remove $t$ from $u_{ps}.key$; // {\rm Lemma 6}%\tcp{lemma~\ref{lemma:prune_keyword}}
                  }
}

\If{ $u_{ps}.key \neq \varnothing$}{
\If{$N_i$ is not leaf node}
{
execute lines 6--10 for $N_i$\;}
 %\Else{ Execute lines 4--6 in Algorithm~\ref{algo:batch} for each $u\in\mathcal{U}_{T}(p,N_i)$}
 \textbf{else} \hspace{0.5mm} execute lines 4--6 in Algorithm~\ref{algo:batch} for each $u\in\mathcal{U}_{T}(p,N_i)$\;
}
\vspace{0mm}
\end{algorithm}

\vspace{-2mm}
\section{POI Selection Policies}
\label{sec:poi}

Next, we present methods to effectively evaluate the influence of users in BR$k$NNs, and we give alternative policies to answer MaxInfBR$k$NN queries with qualified results.

\vspace{-1.5mm}
\subsection{Effective Influence Evaluation}
\label{sec:hybrid-estimator}

Given a POI set $\mathcal{P}_s$, and a RR set collection $\mathcal{R}$, instead of only using RR set coverage to derive the influence estimation $\hat{\mathbb{I}_p}(\mathcal{P}_{i})$, we evaluate the influence in a hybrid manner: 
\begin{equation}
\label{equa:hybrid-evaluate}
    \hat{\mathbb{I}_p}(\mathcal{P}_{s})=\mathbb{I}^{L}_{h}(\mathcal{P}_{s})+ \hat{\mathbb{I}}^{R}_{h}(\mathcal{P}_{s})=\mathbb{I}^{L}_{h}(\mathcal{P}_{s})+ \Lambda_{\mathcal{R}}\left(\mathcal{P}_{s}\right)\cdot|\mathcal{V}_s|/|\mathcal{R}|
\end{equation}

\noindent In Eq.~\ref{equa:hybrid-evaluate}, $\mathbb{I}^{L}_{h}(\mathcal{P}_{s})$ is the local influence of $\mathcal{P}_{s}$, which is propagated from users in $\mathcal{S}_r(\mathcal{P}_{s})$ at most $h$-hops. $\hat{\mathbb{I}}^{R}_{h}(\mathcal{P}_{i})$ is the remote influence estimation of $\mathcal{P}_{s}$, which approximates the number of users influenced by $\mathcal{S}_r(\mathcal{P}_{s})$ further than $h$-hops away, and is measured by the number (i.e., $\Lambda_{\mathcal{R}}(\mathcal{P}_{i})$) of covered RR sets. Using this formulation, though the exact social influence computation is hard to capture, the exact $h$-hop based local influence can be well and efficiently computed~\cite{tang2017asonam,tang2018snam}. As computing exact local influence is expensive when $h\geq3$, we fix $h=2$ as a trade-off between efficiency and accuracy~\cite{tang2017asonam,tang2018snam}. To estimate the remote influence with RR sets, it worth mentioning that the RR set should be modified to capture the influence beyond $2$-hops, so, any POI $p$ is removed from the original RR set when users in $\mathcal{S}_r(p)$ are reached during the sampling within 2 hops.

\vspace{-1mm}
\subsection{Approximation Solution}
\label{subsec:AP}

\textbf{Revisiting the bounds in Lemma~\ref{lemma:poi-inf-upper-lower-inf}.} We first greedily select $b$ POIs to maximize the objective function in Eq.~\ref{equa:hybrid-evaluate}, and denote this POI set as $\mathcal{P}_{b}^{H^{*}}$. We also extend an existing 2-hop based greedy algorithm~\cite{tang2017asonam} to generate a POI set with approximately optimal local influence, and we denote this POI set as $\mathcal{P}_{b}^{L^{*}}$. Let $\mathbb{P}_{b}$ be the collection of all size-$b$ subsets of $\mathcal{P}_c$, (i.e., $\mathbb{P}_{b}=\{\mathcal{P}_i|\mathcal{P}_{i} \subset \mathcal{P}_c \wedge |\mathcal{P}_{i}|=b\}$), and let  $\mathcal{P}_{b}^{H^{\textit{opt}}}$ ($\mathcal{P}_{b}^{L^{\textit{opt}}}$) be the POI set with the largest influence (local influence) among all elements in $\mathbb{P}_{b}$. Next, we revise the influence bounds as follows.

\vspace{1mm}
\textbf{i) Derivation of  $\mathbb{I}^{-}_{p}(\mathcal{P}_b^{H^{*}})$.} Combining Eq.~\ref{equa:inf_lower} of Lemma~\ref{lemma:poi-inf-upper-lower-inf} with Eq.~\ref{equa:hybrid-evaluate}, we get the following revised lower bound:
% revising influence lower bound estimation
\vspace{-0.5mm}
\begin{small}
\begin{equation}
\begin{split}
\label{equa:hybrid-lowerbound-evaluate}
&\mathbb{I}^{-}_{p}\left(\mathcal{P}_{b}^{H^{*}}\right) = \mathbb{I}^{L}_{h}(\mathcal{P}_{b}^{H^{*}}) + \\
&\qquad  \left(\left(\sqrt{\Lambda_{\mathcal{R}_{2}}\left(\mathcal{P}_{b}^{H^{*}}\right)+\frac{2 \eta_{l}}{9}}-\sqrt{\frac{\eta_{l}}{2}}\right)^{2}-\frac{\eta_{l}}{18}\right) \cdot \frac{|\mathcal{V}_s|}{|\mathcal{R}_{2}|}
\end{split}
\end{equation}
\end{small}
% revising influence upper bound estimation
\textbf{ii) Derivation of  $\mathbb{I}^{+}_{p}(\mathcal{P}_b^{H^{\textit{opt}}})$.}
As submodularity also holds for the local influence computation \cite{tang2018snam}, we have $\mathbb{I}^{L}_{h}(\mathcal{P}_{b}^{H^{\textit{opt}}})\leq\mathbb{I}^{L}_{h}(\mathcal{P}_{b}^{L^{\textit{opt}}})\leq\mathbb{I}^{L}_{h}(\mathcal{P}_{b}^{L^{*}})/(1-1/e)$. By combining this with Eqs.~\ref{equa:hybrid-evaluate} and~\ref{equa:inf_upper}, we derive the following  upper bound:
%of Lemma~\ref{lemma:poi-inf-upper-lower-inf}
\vspace{-0.5mm}
\begin{small}
\begin{equation}
\begin{split}
\label{equa:hybrid-upperbound-evaluate}
&\mathbb{I}^{+}_p\left(\mathcal{P}_{b}^{H^{opt}}\right) = \mathbb{I}^{L}_{h}(\mathcal{P}_{b}^{L^{*}})/(1-1/e) + \\
& \qquad \left(\sqrt{\Lambda_{\mathcal{R}_{1}}^{u}\left(\mathcal{P}_{b}^{H^{opt}}\right)+\frac{\eta_{u}}{2}}+\sqrt{\frac{\eta_{u}}{2}}\right)^{2} \cdot \frac{|\mathcal{V}_s|}{|\mathcal{R}_{1}|}
\end{split}
\end{equation}
\end{small}

%\textbf{ Modifying greedy algorithm.}
The maximum number of RR sets ensuring the approximation ratio is formulated in Lemma \ref{lemma:rrset-size-bound}, revised from \cite{tang2018sigmod}. %is no worse than $1-1/e-\epsilon$, .

\begin{lemma}[\cite{tang2018sigmod}]
\label{lemma:rrset-size-bound}
Given a set $\mathcal{R}$ of RR sets and parameters $\epsilon$ and $\delta$, if $|\mathcal{R}|$ is larger than $\theta_{\textit{max}}$:\\
\vspace{0.5mm}
\begin{small}
$ |\mathcal{R}| \geq \theta_{\textit{max}}=  \frac{2 |\mathcal{V}_s|\cdot \left((1-1 / e) \sqrt{\ln \frac{6}{\delta}}+\sqrt{(1-1/e)\left(\ln \left(\begin{array}{l} |\mathcal{P}_c| \\ b \end{array}\right)+\ln \frac{6}{\delta}\right)}\right)^{2}}{\varepsilon^{2} \mathbb{I}_p(\mathcal{P}_{b}^{H^{opt}})}$,
\end{small}

\noindent then the approximation ratio of $\mathcal{P}_{b}^{H^{*}}$ is no smaller than $1-1/e-\epsilon$ with at least $1-\delta/3$ probability.

\end{lemma}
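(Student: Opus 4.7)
The plan is to follow the RIS analysis of Tang et al.~\cite{tang2018sigmod} (explicitly cited as the source of this lemma) but to substitute our hybrid estimator $\hat{\mathbb{I}}_p = \mathbb{I}^{L}_{h} + \hat{\mathbb{I}}^{R}_{h}$ for the pure RR-set coverage used in their theorems. The key structural observation is that only the remote term $\hat{\mathbb{I}}^{R}_{h}$ is stochastic, so all concentration analysis can be confined to that component; the deterministic local term $\mathbb{I}^{L}_{h}$ carries through unchanged in both the lower and upper influence bounds.

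I would proceed in three steps. First, I would verify that the modified RR-set construction (which excludes any POI whose influence already reaches the sampled user within $h=2$ hops) makes $\hat{\mathbb{I}}^{R}_{h}$ an unbiased estimator of the true remote influence $\mathbb{I}^{R}_{h}$, so that $\mathbb{E}[\hat{\mathbb{I}}^{R}_{h}(\mathcal{P}_i)] = |\mathcal{V}_s|\cdot\Pr[\mathcal{P}_i \cap R \neq \varnothing]$ for a random modified RR set $R$. Second, I would invoke the martingale--Chernoff bounds from Lemmas~4.2 and~4.3 of~\cite{tang2018sigmod} to control two failure events: a lower-tail event on $\hat{\mathbb{I}}^{R}_{h}(\mathcal{P}_b^{H^{*}})$ that feeds $\mathbb{I}^{-}_p$ (cf.\ Eq.~\ref{equa:hybrid-lowerbound-evaluate}) and an upper-tail event on $\hat{\mathbb{I}}^{R}_{h}(\mathcal{P}_b^{H^{opt}})$, union-bounded over all $\binom{|\mathcal{P}_c|}{b}$ size-$b$ subsets of $\mathcal{P}_c$, that feeds $\mathbb{I}^{+}_p$ (cf.\ Eq.~\ref{equa:hybrid-upperbound-evaluate}). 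Each tail is allocated failure probability $\delta/6$, which is precisely the source of the $\ln(6/\delta)$ terms appearing inside $\theta_{\max}$. Third, using the fact that both $\mathbb{I}^{L}_{h}$ and $\mathbb{I}^{R}_{h}$ are monotone and submodular (Lemma~\ref{submodular} for the remote part, the well-known analog in~\cite{tang2017asonam} for the $h$-hop local part), the greedy selection maximizing $\hat{\mathbb{I}}_p$ certifies a $(1-1/e)$ factor on the sampled objective, which combined with the two tail bounds yields the targeted $(1-1/e-\epsilon)$ ratio on the true influence with probability at least $1-\delta/3$.

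The explicit expression for $\theta_{\max}$ then emerges by inverting the two concentration inequalities: one solves for the smallest $|\mathcal{R}|$ such that the combined loss is at most $\epsilon\cdot\mathbb{I}_p(\mathcal{P}_b^{H^{opt}})$ on the joint success event. The main obstacle will be the algebraic bookkeeping in this inversion, namely splitting $\epsilon$ between the lower and upper tails so that the $(1-1/e)$ approximation error of greedy, the sampling error on $\hat{\mathbb{I}}^{R}_{h}(\mathcal{P}_b^{H^{*}})$, and the union-bounded sampling error on $\hat{\mathbb{I}}^{R}_{h}(\mathcal{P}_b^{H^{opt}})$ compose cleanly into the single square-root expression in the lemma. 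Because $\mathbb{I}^{L}_{h}$ appears additively in both bounds and is deterministic, it cancels in this composition and does not perturb the $\theta_{\max}$ formula; this is ultimately why our hybrid estimator inherits, up to constants, the same $\theta_{\max}$ form as the pure RIS estimator of~\cite{tang2018sigmod}.
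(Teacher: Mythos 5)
The paper does not actually prove Lemma~\ref{lemma:rrset-size-bound}: it is imported from \cite{tang2018sigmod} (``revised from''), with no proof following the statement, so there is no in-paper argument to compare yours against. Your sketch is a faithful reconstruction of the OPIM-C derivation of $\theta_{\max}$ and correctly identifies the three adaptation points needed to make the imported formula valid for the hybrid estimator: unbiasedness of the modified RR sets for the remote influence, the $\delta/6$-per-tail allocation that produces the $\ln(6/\delta)$ terms together with the union bound over the $\binom{|\mathcal{P}_c|}{b}$ size-$b$ subsets, and monotone submodularity of the sampled hybrid objective so that greedy certifies the $(1-1/e)$ factor. One step deserves more care than saying the local term ``cancels'': the denominator of $\theta_{\max}$ is the \emph{total} influence $\mathbb{I}_p(\mathcal{P}_b^{H^{opt}})$, whereas your concentration analysis is confined to the remote component. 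The correct justification is that the deviation of $\hat{\mathbb{I}}^{R}_{h}$ is governed by the remote influence, which is dominated by the total influence, so the sample size that drives the absolute sampling error below $\epsilon\cdot\mathbb{I}_p(\mathcal{P}_b^{H^{opt}})$ is no larger than in the pure-RIS case; the deterministic $\mathbb{I}^{L}_{h}$ then passes through Eqs.~\ref{equa:hybrid-lowerbound-evaluate} and~\ref{equa:hybrid-upperbound-evaluate} without contributing error \emph{provided} your step~1 establishes the exact complementarity $\mathbb{E}[\hat{\mathbb{I}}^{R}_{h}(\mathcal{P}_i)]=\mathbb{I}_p(\mathcal{P}_i)-\mathbb{I}^{L}_{h}(\mathcal{P}_i)$, which is what the $2$-hop removal rule in Section~\ref{sec:hybrid-estimator} is designed to guarantee. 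Finally, the $1/(1-1/e)$ slack in the local part of Eq.~\ref{equa:hybrid-upperbound-evaluate} affects only the algorithm's early-termination check, not the worst-case guarantee asserted by this lemma, so it rightly does not enter your inversion for $\theta_{\max}$.
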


To approximate the value of $\mathbb{I}_p(\mathcal{P}_{b}^{H^{opt}})$ in  Lemma~\ref{lemma:rrset-size-bound}, an efficient and tightened lower bound computation can be  $\mathbb{I}^{L}_{h}(\mathcal{P}_{b}^{L^{*}})$ (denoted as $\Psi$). According to $\theta_{\max}$, we can first sample $\theta_{0}$ random RR sets to initiate the RIS process, where $\theta_{0}$ is computed as follows.
\begin{small}
\begin{equation}
\label{equa:theta0}
 \hspace{15mm} \theta_{0} = \theta_{\textit{max}} \cdot \epsilon^{2} \cdot \Psi /|\mathcal{V}_s| \\
\end{equation}
\end{small}

%By the formulation of $\mathbb{I}^{-}_p\left(\mathcal{P}_{b}^{H^{*}}\right)$,  $\mathbb{I}^{+}_p\left(\mathcal{P}_{b}^{H^{opt}}\right)$, $\theta_{\textit{max}}$, and $\theta_{\textit{0}}$, Algorithm \ref{alg:poi_selection} shows the approximation algorithm for POI selection with theoretical guarantees.

\begin{algorithm} [t]
%\fontsize{7.8pt} \selectfont
\label{alg:approx}
  \caption{Approximation (AP) algorithm}
  
\small
  \KwIn{the IG-NVD index, query parameters   $\mathcal{P}_c$, $k$, and $b$, and the system parameters $\delta$ and $\epsilon$}
  \KwOut{a size-$b$ POI set $\mathcal{P}_{s}$}
  
  obtain $S_r(\mathcal{P}_c)$ by using Algorithm~\ref{algo:batch};  // retrieve the batch BR$k$NN results for each $p\in\mathcal{P}_c$\;
 construct the heterogeneous graph $\mathcal{G}_{H}$\;
initialize $\mathcal{P}_{b}^{L^{*}}$ and $\mathcal{P}_{b}^{H^{*}}$ as empty\;
generate $\mathcal{P}_{b}^{L^{*}}$ using the 2-hop based greedy algorithm \cite{tang2017asonam}\;
initialize $\theta_{\textit{max}}$ and  $\theta_{0}$ based on Lemmas~\ref{lemma:rrset-size-bound} and Equation~\ref{equa:theta0}\;
generate two RR set collections $\mathcal{R}_1$ and $\mathcal{R}_2$ with size $\theta_{0}$\;

$i_{\max} \leftarrow\left\lceil\log _{2} \frac{\theta_{\max }}{\theta_{0}}\right\rceil$, set $\eta_{l}$ and $\eta_{u}$ in Eqs.~\ref{equa:hybrid-lowerbound-evaluate} and~\ref{equa:hybrid-upperbound-evaluate} as $\ln(3i_\textit{max}/\delta)$\;
% set $\delta_1$ and $\delta_2$ in Eqs.~\ref{equa:hybrid-lowerbound-evaluate} and~\ref{equa:hybrid-upperbound-evaluate} as $\delta/3 i_{\max}$\;
\For{$i = 1$ \textbf{to} $i_{\max} $}{
    get $\mathcal{P}_{b}^{H^{*}}$ by greedy algorithm revised by  Eq.~\ref{equa:hybrid-evaluate} on $\mathcal{R}_1$\;

    %lower bound
    compute lower bound  $\mathbb{I}^{-}_p\left(\mathcal{P}_{b}^{H^{*}}\right)$ using $\mathcal{R}_2$ by Eq.~\ref{equa:hybrid-lowerbound-evaluate}\;

    %upper bound
    compute upper bound  $\mathbb{I}^{+}_p\left(\mathcal{P}_{b}^{H^{opt}}\right)$ using  $\mathcal{R}_1$ and $\mathcal{P}_{b}^{L^{*}}$ by Eq.~\ref{equa:hybrid-upperbound-evaluate}\;

    %accuracy bound
    $ratio \gets \mathbb{I}^{-}_p\left(\mathcal{P}_{b}^{H^{*}}\right)/ \mathbb{I}^{+}_p\left(\mathcal{P}_{b}^{H^{opt}}\right)$\;
    \If{$ratio\geq 1- 1/e-\epsilon$  or  $i=i_{\textit{max}}$}{
        $\mathcal{P}_{s} \gets \mathcal{P}_{b}^{H^{*}}$ and break\;

    }
    double the sizes of $\mathcal{R}_1$ and $\mathcal{R}_2$ by merging new RR sets\;
}
\textbf{return} $\mathcal{P}_{s}$\;
%\end{footnotesize}
\end{algorithm}

% \begin{small}
% \begin{equation}
% \label{equa:theta_max}
% \begin{split}
%   &\theta_{\textit{max}} = \\
%   &\frac{2 |\mathcal{V}_s|\cdot \left((1-1 / e) \sqrt{\ln \frac{6}{\delta}}+\sqrt{(1-1/e)\left(\ln \left(\begin{array}{l}
% |\mathcal{P}_c| \\
%  b
% \end{array}\right)+\ln \frac{6}{\delta}\right)}\right)^{2}}{\varepsilon^{2} \Psi }  %\mathbb{I}_{h}^{L}(\mathcal{P}_{b}^{L^{\textit{*}}})
% \end{split}
% \end{equation}
% \end{small}
% \hspace{-2mm} According to $\theta_{\max}$, we can first sample $\theta_{0}$ random RR sets to initiate the RIS process, where $\theta_{0}$ is computed as:
% \begin{small}
% \begin{equation}
% \label{equa:theta0}
%  \hspace{15mm} \theta_{0} = \theta_{\textit{max}} \cdot \epsilon^{2} \cdot \Psi /|\mathcal{V}_s| \\
% \end{equation}
% \end{small}

With above formulations, the approximation solution (\textbf{denoted as AP}) to MaxInfBR$k$NN is presented in Algorithm \ref{alg:approx}. AP first invokes Algorithm~\ref{algo:batch} to retrieve batch results for each $p\in\mathcal{P}_c$, and then constructs the  heterogeneous graph $\mathcal{G}_{H}$ (lines 1--2). Next, it obtains $\mathcal{P}_{b}^{L^{*}}$ to derive a tight lower bound on $\mathbb{I}_p(\mathcal{P}_{b}^{H^{opt}})$, and then it initializes $\theta_{\max}$ and $\theta_{0}$ by Lemma~\ref{lemma:rrset-size-bound} and Eq.~\ref{equa:theta0}, respectively (lines 3--5). After this, AP samples two RR set collections $\mathcal{R}_1$ and $\mathcal{R}_2$ with $|\mathcal{R}_1|=|\mathcal{R}_2|=\theta_0$ to initiate the sampling phase with at most $i_{\max}=\left\lceil\log _{2} \frac{\theta_{\max }}{\theta_{0}}\right\rceil$ iterations (lines 8--15). %to ensure the theoretical guarantees}.
In each iteration, AP utilize $\mathcal{R}_{1}$ to obtain $\mathcal{P}^{H^{*}}_{b}$ which approximately maximizes the objective function in Eq.~\ref{equa:hybrid-evaluate} (line 9). Next, it utilizes $\mathcal{R}_2$ to derive $\mathbb{I}^{-}_p(\mathcal{P}^{H^{*}}_{b})$ by Eq.~\ref{equa:hybrid-lowerbound-evaluate} (line 10). It further computes  $\mathbb{I}^{+}_p(\mathcal{P}_{b}^{H^{opt}})$ using  $\mathcal{P}^{L^{*}}_{b}$ and $\mathcal{R}_1$ by Eq.~\ref{equa:hybrid-upperbound-evaluate} (line 11), and sets the approximation ratio to  $\mathbb{I}^{-}_p(\mathcal{P}^{H^{*}}_{b})$/ $\mathbb{I}^{+}_p(\mathcal{P}_{b}^{H^{opt}})$ (line 12). AP terminates and returns $\mathcal{P}^{H^{*}}_{b}$, if a ($1- 1/e-\epsilon$) approximation ratio is achieved (lines 13--14). Otherwise, it doubles the RR sets in $\mathcal{R}_1$ and $\mathcal{R}_2$ and perform a new iteration (line 15).

\textbf{Discussion.} Algorithm \ref{alg:approx} initializes parameters $\eta_{l}$ and $\eta_{u}$ in Eqs.~\ref{equa:hybrid-lowerbound-evaluate} and~\ref{equa:hybrid-upperbound-evaluate} as $\ln(3i_{\max}/\delta)$ (line 7), so parameters $\delta_{l}$ and $\delta_{u}$ in Lemma~\ref{lemma:poi-inf-upper-lower-inf} are fixed at $\delta/3i_{\max}$. Under this setting, the probability of deriving an incorrect approximation ratio in each iteration is at most $2\delta/3i_{\max}$. By the union bound for the first $(i_{\max}-1)$ iterations, the probability that Algorithm \ref{alg:approx} fails to find a qualified result is less than $2/3\delta$. Also, in the last iteration, by Lemma~\ref{lemma:rrset-size-bound},  Algorithm~\ref{alg:approx} ensures that $\mathcal{P}^{H^{*}}_{b}$ is a ($1- 1/e-\epsilon$) approximation result with a probability of ($1-\delta/3$) . Overall, AP guarantees a ($1- 1/e-\epsilon$) approximate solution with probability at least $1-\delta$.

\iffalse
\vspace{1mm}
\textbf{Time Complexity.} The computation cost of Algorithm \ref{alg:poi_selection} includes three parts: i) the 2-hop based greedy algorithm to select a size-$b$ POI set which has the approximately optimal local influence (line 2), ii) generating RR sets (lines 4 and 13), and iii) invoking greedy algorithm to select $b$ POIs, and computing lower and upper influence bounds for evaluating the selected POI set (lines 7--9). Similar to the analysis in \cite{tang2017asonam}, the time complexity of part 1 (\textit{cost}$_1$) is $O\left(b(|\mathcal{V}_s|+|\mathcal{E}_s|+ \sum_{u\in \mathcal{S}_r(\mathcal{P}_c) } \sum_{v \in N^{\textit{o}}(u)} |N^{\textit{i}}(v)|\cdot|N^{\textit{o}}(v)|) \right)$, where $|N^{\textit{i}}(v)|$ ($|N^{\textit{o}}(v)|$) denotes the number of in (out) neighbors of a user $v$. The complexity of generating RR sets (\textit{cost}$_2$) is $O(b\cdot$ (log$(|\mathcal{E}_s|/|\mathcal{V}_s|)+|\mathcal{P}_c|)\cdot|\mathcal{V}_s|\cdot$log$|\mathcal{V}_s|/\epsilon^2)$, which is achieved by using RR set optimizations presented by Guo et al.~\cite{guo2020sigmod}. Finally, when $\delta\leq0.5$, the time complexity of part 3 (\textit{cost}$_3$) is $O(b\cdot|\mathcal{P}_c|\cdot i_{\max}+2\mathbb{E}[|\mathcal{R}_1 \cup \mathcal{R}_2|]\cdot\mathbb{E}[|R_i|])$ \cite{tang2018sigmod}. Hence, the total time complexity of Algorithm \ref{alg:poi_selection} is O(\textit{cost}$_1$ + $\textit{cost}_2$ + $\textit{cost}_3$). 
\fi

%the number of POIs in $\mathcal{P}_c$ is large
\vspace{0mm}
\subsection{Heuristic Solution}
\label{sec:heuristic}

As an alternative solution to MaxInfBR$k$NN, we further develop a heuristic approach (\textbf{denoted as HE}) to trade accuracy for efficiency by introducing several optimizations in AP.

\vspace{1mm}
\textbf{Aggressive pruning and compensation.} When a POI in $\mathcal{P}_c$ contains some less frequent keywords, the conditions in line 18 of Algorithm \ref{algo:core_function} may be too strict to prune any pseudo-users. To enable pruning, instead of computing $\mathcal{L}^{\downarrow}_{\textit{SB}}(t)$ for each $t\in u_{ps}.key$, we only consider the keywords that occur frequently in users. In other words, keywords rarely contained by users are pruned aggressively. To ensure the correctness of batch results, users with such keywords are evaluated separately by Algorithm \ref{algo:usr_lbsl} for compensation. This modification improves the pruning efficiency of  Algorithm~\ref{algo:batch}.

\textbf{Assign priorities to BR$k$NN candidates.}
Since our problem only cares about most influential POIs, it is unnecessary to exactly retrieve BR$k$NNs for each $p\in\mathcal{P}_c$. Hence, we revise lines 14--17 of Algorithm \ref{algo:batch} as follows. Let $\hat{S}_r(p)$ ($\check{S}_r(p)$) be the set of candidate users (potential users) currently obtained (verified) for $p\in\mathcal{P}_c$. T$k$GSKQ verification is performed for each $u\in\mathcal{U}_c$ in descending order of user priority. The priority of a user $u$ is given as $|F(u)|\cdot|\mathcal{AL}(u)|$, since a candidate user $u$ associated with more friends and POIs is likely to be more influential. As more influential users are verified earlier in lines 14--17 of Algorithm \ref{algo:batch}, the difference between $\hat{S}_r(p)$ and $\check{S}_r(p)$ decreases, then we can find highly qualified results without verifying all $u\in\mathcal{U}_c$ with the strategies covered next.

%\vspace{0.5m}
\textbf{POI selection by approximating local influence.}
About 90\% of influence propagation occurs within 3 hops~\cite{tang2017asonam,tang2018snam}, . Thus, the 2-hop local influence can serve as a hint to obtain high-quality POI sets. Let $\hat{\mathcal{G}}_H$ and $\check{\mathcal{G}}_H$ be the heterogeneous graph built on $\hat{S}_r(p)$ and $\check{S}_r(p)$ for each $p\in\mathcal{P}_c$. Then, each time when a small batch of candidate users is verified, we perform the 2-hop based greedy algorithm on $\hat{\mathcal{G}}_H$ and $\check{\mathcal{G}}_H$ to get sets  $\hat{\mathcal{P}}_{b}^{L^{*}}$ and $\check{\mathcal{P}}_{b}^{L^{*}}$, respectively. If the accuracy satisfies $\mathbb{I}_{h}^{L}(\check{\mathcal{P}}_{b}^{L^{*}})/\mathbb{I}_{h}^{L}(\hat{\mathcal{P}}_{b}^{L^{*}})\geq1-1/e$, we return $\check{\mathcal{P}}_{b}^{L^{*}}$ as the result. $\check{\mathcal{P}}_{b}^{L^{*}}$ is expected to have good quality, due to the effectiveness of 2-hop based influence estimation~\cite{tang2018snam}. 

\section{Performance Study}
\label{sec:exp}
This section evaluates the proposed methods experimentally. All the methods are implemented in C++, and are tested on a PC with an Intel i7 2.90GHz CPU and 64GB RAM.

\subsection{Experimental Settings}
\label{setup}

\textbf{Datasets}. We employ three datasets: LasVegas, Gowalla, and Twitter. In particular, LasVegas is a real-world dataset obtained from the Yelp\footnote{http://www.yelp.com/dataset $\_$challenge/}, which contains POIs and users located in the city of Las Vegas. Each POI contains an ID, a textual description, and latitude and longitude. Each user contains check-ins, reviews of POIs, and IDs of  friends. The road network of LasVegas is obtained from OpenStreetMap\footnote{https://www.openstreetmap.org/}, and we map each POI and user to the closest vertex on the road network. Gowalla and Twitter are synthetic datasets generated by combining the road networks\footnote{http://www.dis.uniroma1.it/challenge9/} with geo-tagged social networks extracted from existing datasets of~\cite{li2014sigmod}. The keywords in Gowalla and Twitter are created to follow Zipf distributions~\cite{wu2012tkde}. Following the IM literature~\cite{huang2020vldbj,kempe2003kdd,li2014sigmod,tang2018sigmod}, we utilize the weighted cascade model~\cite{li2014sigmod,tang2015sigmod} to set the weight of each edge $(u,v)$ in $\mathcal{G}_s$ to $1/d_{in}(v)$, where $d_{in}(v)$ is the in-degree of user $v$ in $\mathcal{G}_s$.
The statistics of all the datasets are given in Table \ref{dataset-statistic}.

% dataset information
\begin{table}[t]
\centering
\begin{small}
\caption{Statistics of the datasets}
\label{dataset-statistic}
\vspace{-2mm}
\begin{tabular}{|p{3.92 cm}|p{1.1cm}<{\centering} |p{1.05cm}<{\centering} |p{1.1cm}<{\centering}  |}
\hline
&\textbf{LasVegas}&\textbf{Gowalla}&\textbf{Twitter} \\
\hline
\# of road network vertices            &  43,401       & 1,070,376      &1,890,815    \\ \hline
\# of road network edges           &    109,160     & 2,712,798      & 4,630,444    \\ \hline
\# of social network vertices            &    87,990     & 196,228      & 2,062,280    \\ \hline
\# of social network edges            &    552,486     & 1,900,654      & 4,803,440    \\ \hline
\# of POIs                       &   28,851       &  218,659    & 326,662    \\ \hline
\# of users(with reviews)                       &   28,867       &  196,228     &  1,114,261   \\ \hline
average \# of keywords per POI         &   350.2     &  8.1    & 4.4    \\ \hline
average \# of keywords per user         &   50.2     & 3.1   &  2.8     \\ \hline
average \# of check-ins per POI             &   63.3      &  5.0     &  5.3 \\ \hline
average \# of friends per user             &   6.3      &  9.7     &  8.7 \\ \hline
\end{tabular}
\end{small}
\vspace{-3mm}
\end{table}

\begin{table}[t]
\centering
\begin{small}
\caption{Parameter settings}
\label{paras}
\vspace{-2mm}
\begin{tabular}{|p{5.0cm}|p{3.0cm}<{\centering}|}
\hline
\textbf{Parameter}& \textbf{Range}    \\ \hline
$k$ (\# of top-$k$ ranking results)     &     10,  15, \textbf{20}, 25, 30           \\ \hline
$|\mathcal{P}_c|$ (\# of candidate POIs)     &     20, 40, \textbf{60}, 80, 100           \\ \hline
$b$ (the size of POI seed set)      & 1, 3, \textbf{5}, 7, 9             \\ \hline
$\alpha $ (parameter in the ranking function)      & 0.0, 0.2, 0.4, \textbf{0.6}, 0.8             \\ \hline
\hspace{-0.05in}  $|\mathcal{U}|$ (the cardinality of users)     & 0.6, 0.7, 0.8, 0.9, \textbf{1.0}         \\ \hline
\hspace{-0.05in} $|\mathcal{P}|$ (the cardinality of POIs)    & 0.6, 0.7, 0.8, 0.9, \textbf{1.0}          \\ \hline
\end{tabular}
\end{small}
\vspace{-4mm}
\end{table}

\textbf{Query set generation.}
We obtain the POIs in the query sets as follows. First, we randomly select two frequent keywords (such as 'bar' and 'shopping mall') from the dataset vocabulary. Then, for each keyword, we extract the group of POIs covering this keywords and eliminate those without any check-ins. Each group of POIs is further divided into two parts based on the popularity of POIs among users (whether the number of check-ins exceeds the average check-ins per POI). Finally, we have four clusters of POIs ($\mathcal{P}_{c1}$ through $\mathcal{P}_{c4}$), we randomly select $|\mathcal{P}_c|$ POIs from each cluster into $\mathcal{P}_c$, and with  $\mathcal{P}_{c1}$ (i.e., the most popular POI set) as the default query set.

% index comparison table
\begin{table}[t]
\centering
\begin{small}
\caption{Index size \& construction time}
\label{tab:index}
\vspace{-2mm}
\begin{tabular}{|p{1.1cm}<{\centering}|p{1.4cm}<{\centering}|p{1.4cm}<{\centering}|p{1.35cm}<{\centering}|p{1.4cm}<{\centering}|}
\hline
\multirow{2}{*}{\textbf{Dataset}} & \multicolumn{2}{c|}{\textbf{GIM-Tree}} & \multicolumn{2}{c|}{\textbf{IG-NVD}} \\ \cline{2-5}
                                                & Size (MB)          & Time (sec)          & Size (MB)         & Time (sec)         \\ \hline
LasVegas                        & 112           & 63           & 60           & 56           \\ \hline
Gowalla                        & 4,533           & 4,390          & 2,633          & 3460          \\ \hline
Twitter                        & 42,438          & 49,633         & 3,913         & 5,868         \\ \hline
\end{tabular}
\end{small}
\vspace{-3mm}
\end{table}

\begin{table}[t]
\begin{small}
%\scriptsize
\caption{Effectiveness of MaxInfBR$k$NN on the LasVegas dataset}
\vspace{-2.5mm}
%\vspace{-1.5em}
\label{case}
\begin{tabular}{|p{0.29cm}|p{1.08cm}<{\centering}|p{1.1cm}<{\centering}|p{1.3cm}<{\centering}|p{0.85cm}<{\centering}|p{1.7cm}<{\centering}  |}
\hline
  & \textit{Relevance} & \textit{Influencer} & \textit{MaxBRkNN} & \textit{Random} & \textit{MaxInfBRkNN} \\  \hline
$\mathcal{P}_{c1}$ & 118  & 201 & 147 & 60& \textbf{255} \\ \hline
$\mathcal{P}_{c2}$ & 117   & 238 & 191 & 63 & \textbf{270}  \\ \hline
$\mathcal{P}_{c3}$ & 212   & 385 & 352 & 101 & \textbf{450}  \\ \hline
$\mathcal{P}_{c4}$ & 204   & 296 & 277 & 72 & \textbf{447}  \\ \hline
\end{tabular}
\vspace{-3mm}
\end{small}
\end{table}

\textbf{Competitors.}
We first compare the performance of our index (i.e., IG-NVD) with the existing state-of-the-art index (i.e., the GIM-Tree)~\cite{zhao2017icde}. {Then, to investigate the effectiveness of our problem in finding highly influential POIs, we compare the POI sets retrieved by MaxInfBR$k$NN queries with those retrieved by several simple but smart POI selection policies:}
\begin{itemize}\setlength{\itemsep}{-\itemsep}
\item{} \textbf{Relevance.} This policy greedily selects $b$ POIs from $\mathcal{P}_c$ having textual or social relevance to the largest number of users located in the local area (i.e., within 5$km$ away).
%from the nearest POI).
\item{}  \textbf{Influencer.} This policy selects the POIs whose BR$k$NNs include the largest number of influencers. Specifically, it disregards users without textual or social relevance to $\mathcal{P}_c$, and utilizes existing IM algorithms~\cite{tang2018sigmod,tang2014sigmod} to find top-$x$ (i.e., $x=200$) users as influencers. Then, it computes the top-$k$ results for each influencer, and selects $b$ POIs which are top-$k$ relevant to the largest number of influencers.
\item{} \textbf{MaxBR$k$NN}. This policy maximizes the cardinality of BR$k$NNs for any size-$b$ POI set. We revise an existing algorithm~\cite{choudhury2016vldb} by considering road network distance and social relevance in the similarity function, and greedily select $b$ POIs having the largest size of BR$k$NN results.

\item{} \textbf{Random.} As the most intuitive competitor, this policy simply selects $b$ random POIs from $\mathcal{P}_c$.
\end{itemize}

\textbf{Parameter Settings.} The parameter settings are listed in Table \ref{paras}, with default values in bold. Note that in Table \ref{paras}, $|\mathcal{U}|$ ($|\mathcal{P}|$) denotes the percentage of all users (POIs) w.r.t. the  number of users (POIs) for testing. In BA and AP, a trade-off between accuracy and efficiency is controlled by parameters $\epsilon$ and $\delta$, respectively. Following settings extensively used in recent studies~\cite{guo2020sigmod,yuchen2015vldb,tang2018sigmod}, we set $\epsilon$ and $\delta$ in Eq.~\ref{lemma:rrset-size-bound} to 0.1 and $1/|\mathcal{V}_s|$, respectively.

\begin{figure*}[t]
\centering
\includegraphics[width=0.7\textwidth]{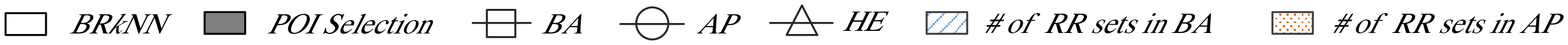}\\
  \hspace{ -1 mm}
  \subfigure[Total Runtime]{
   \label{effect-a} \vspace{-5mm}
   \raisebox{-0.5cm}{\includegraphics[width=1.4in]{{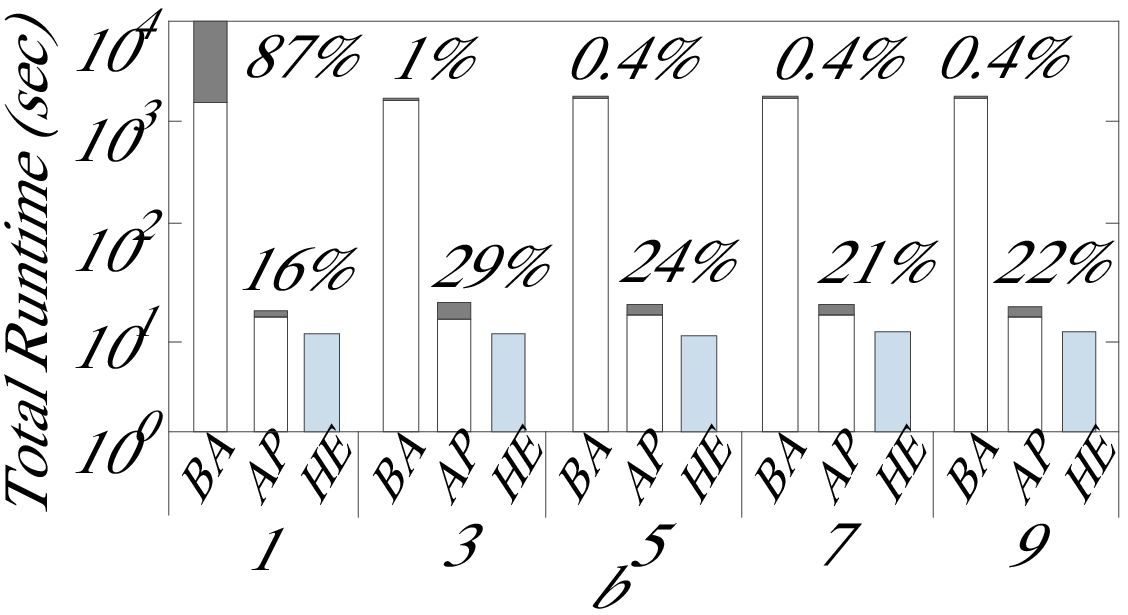}}} 
  }\hspace{ 1 mm}
  \subfigure[Influence of $\mathcal{P}_s$]{
   \label{effect-b} \vspace{-5mm}
   \raisebox{-0.5cm}{\includegraphics[width=1.08 in]{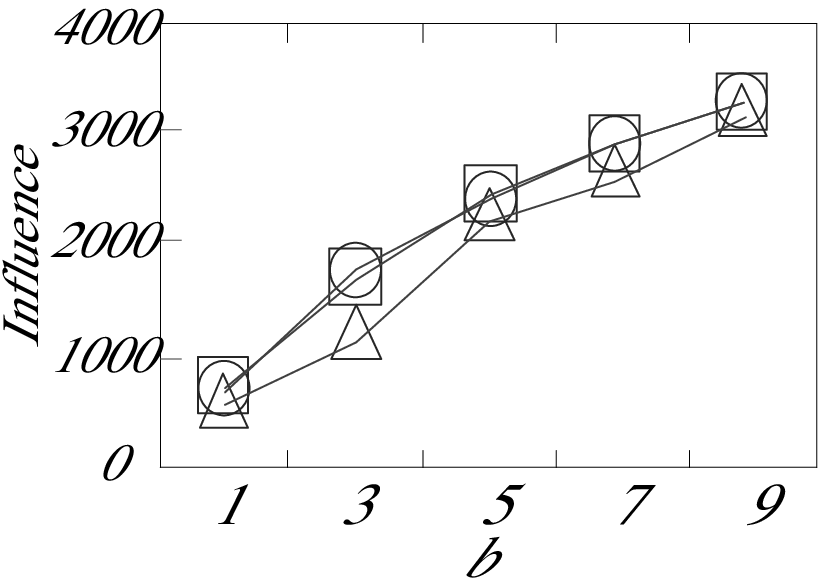}}
  }\hspace{ 1 mm}
  \subfigure[BR$k$NN Retrieval]{
   \label{effect-a} \vspace{-5mm}
   \raisebox{-0.5cm}{\includegraphics[width=1.08 in]{{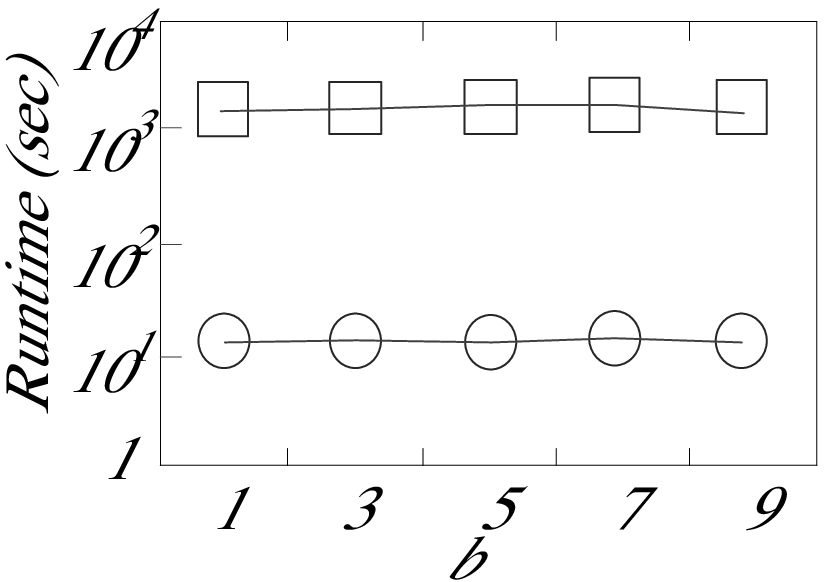}}}
  }\hspace{ 1 mm}
  \subfigure[POI Selection]{
   \label{effect-b} \vspace{-5mm}
   \raisebox{-0.5cm}{\includegraphics[width=1.25in]{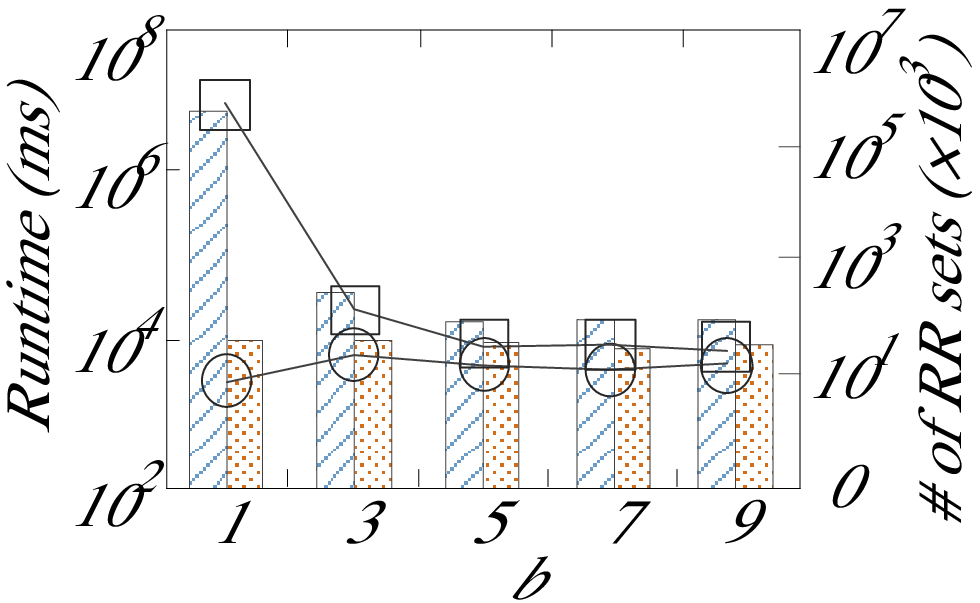}}
  }\hspace{ 1 mm}
  \subfigure[Approximation ratio]{
   \label{effect-b} \vspace{-5mm}
   \raisebox{-0.5cm}{\includegraphics[width=1.1 in]{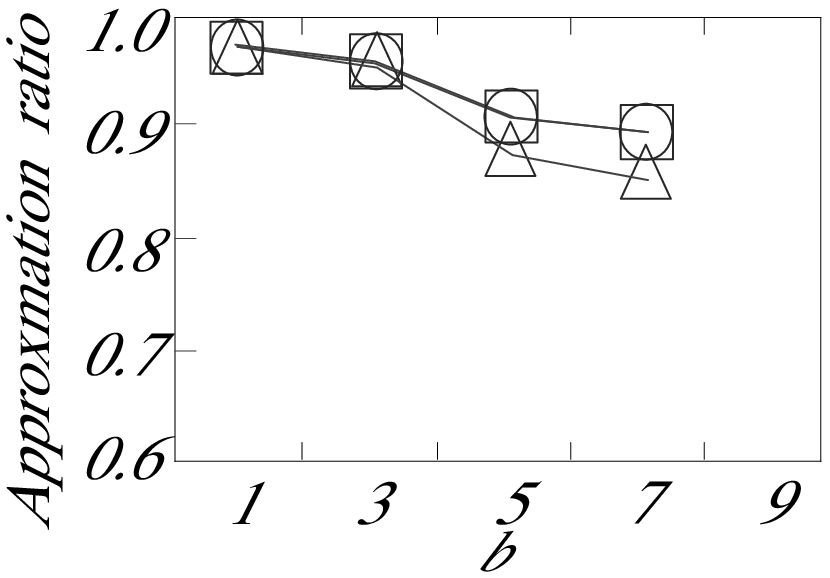}}
  }
\vspace{-2.7mm}
\caption{Performance comparison when varying $b$ on the Gowalla dataset}
\label{fig:varying_b}
\vspace{-2.2mm}
\end{figure*}

\begin{figure*}[t]
\centering
  \hspace{ -1 mm}
  \subfigure[Total Runtime]{
   \label{effect-a} \vspace{-5mm}
   \raisebox{-0.5cm}{\includegraphics[width=1.4in]{{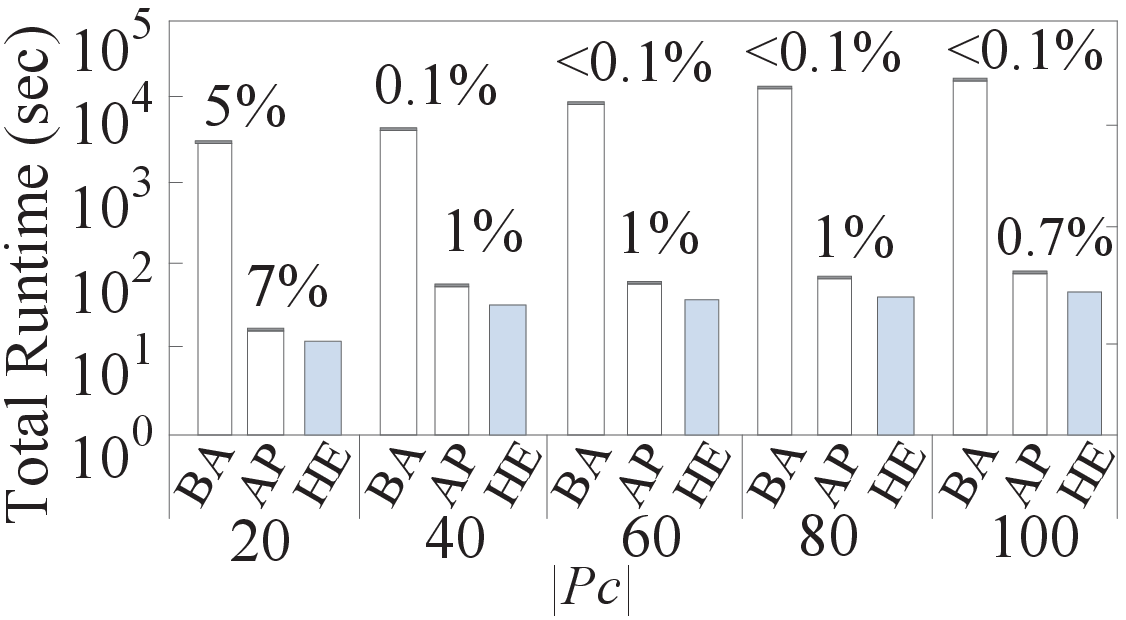}}} 
  }\hspace{ 1 mm}
  \subfigure[Influence of $\mathcal{P}_s$]{
   \label{effect-b} \vspace{-5mm}
   \raisebox{-0.5cm}{\includegraphics[width=1.1 in]{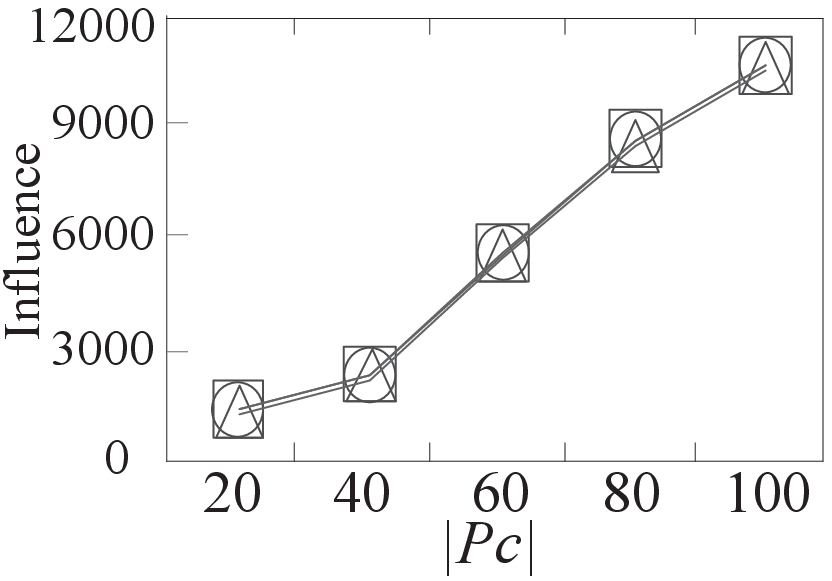}}
  }\hspace{ 1 mm}
  \subfigure[BR$k$NN Retrieval]{
   \label{effect-a} \vspace{-5mm}
   \raisebox{-0.5cm}{\includegraphics[width=1.08 in]{{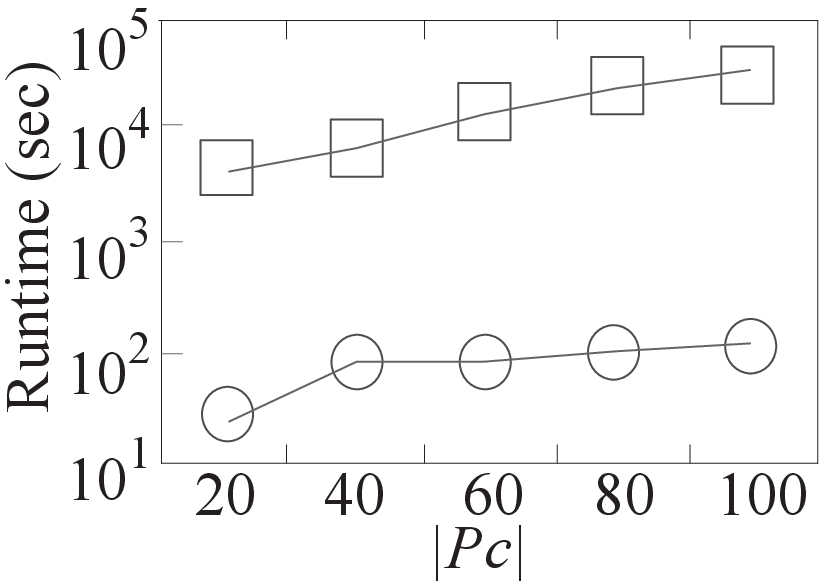}}}
  }\hspace{ 1 mm}
  \subfigure[POI Selection]{
   \label{effect-b} \vspace{-5mm}
   \raisebox{-0.5cm}{\includegraphics[width=1.25in]{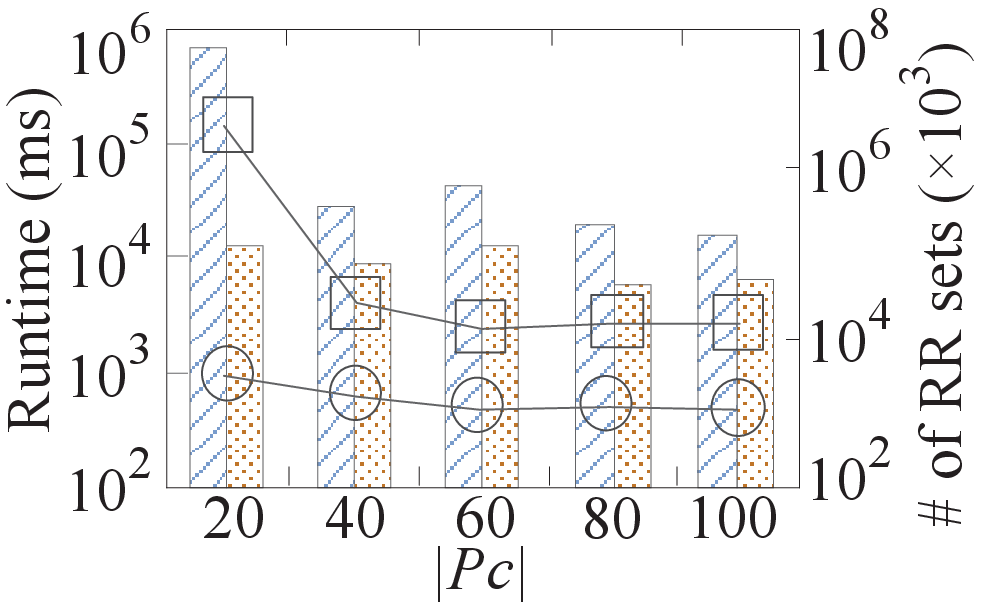}}
  }\hspace{ 1 mm}
  \subfigure[Approximation ratio]{
   \label{effect-b} \vspace{-5mm}
   \raisebox{-0.5cm}{\includegraphics[width=1.1 in]{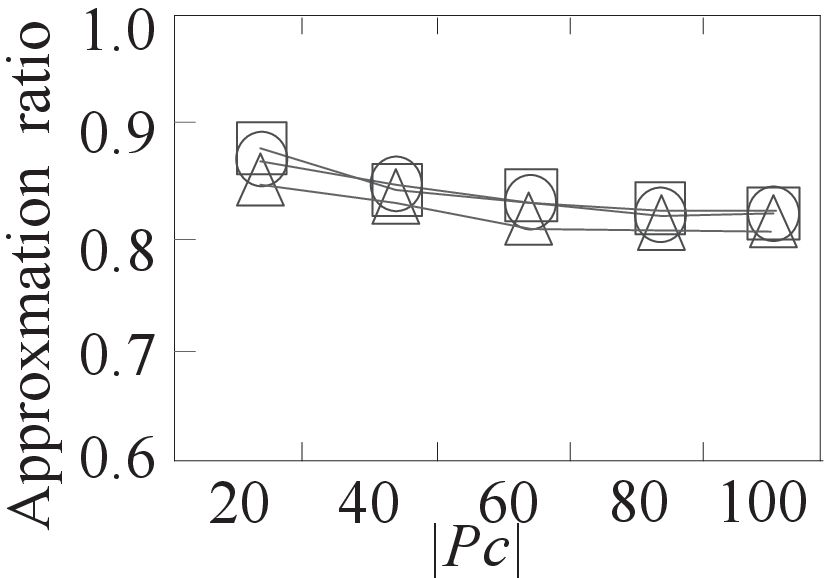}}
  }
\vspace{-2.7mm}
\caption{Performance comparison when varying $|\mathcal{P}_c|$ on the Twitter  dataset}
\label{fig:varying_Pc}
\vspace{-2.5mm}
\end{figure*}

\begin{figure*}[t]
\centering
  \hspace{ -1 mm}
  \subfigure[Total Runtime]{
   \label{effect-a} \vspace{-5mm}
   \raisebox{-0.5cm}{\includegraphics[width=1.4in]{{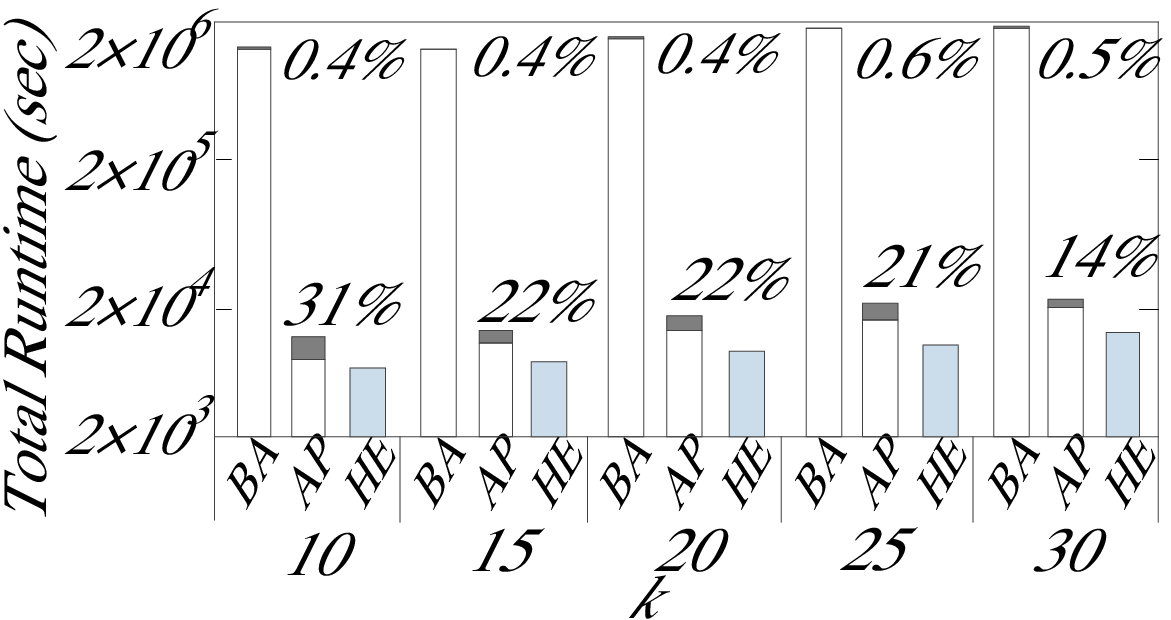}}} 
  }\hspace{ 1 mm}
  \subfigure[Influence of $\mathcal{P}_s$]{
   \label{effect-b} \vspace{-5mm}
   \raisebox{-0.5cm}{\includegraphics[width=1.08 in]{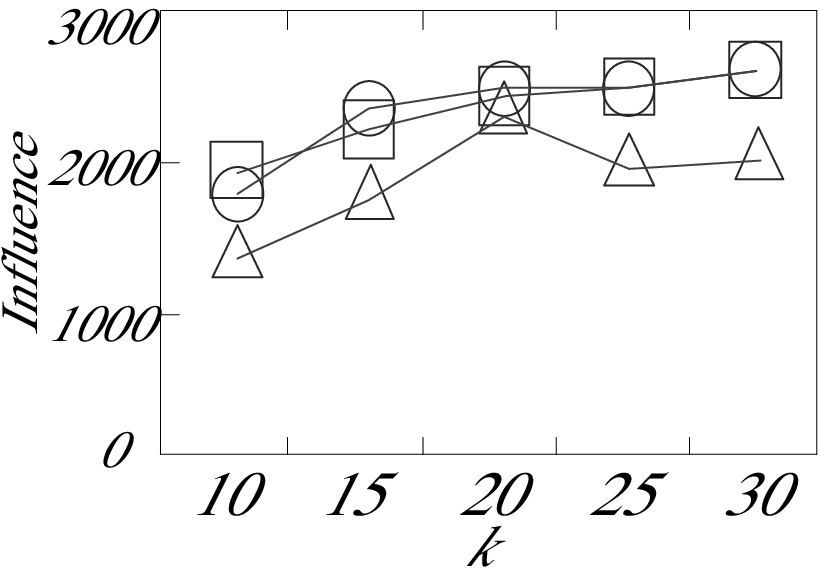}}
  }\hspace{ 1 mm}
  \subfigure[BR$k$NN Retrieval]{
   \label{effect-a} \vspace{-5mm}
   \raisebox{-0.5cm}{\includegraphics[width=1.08 in]{{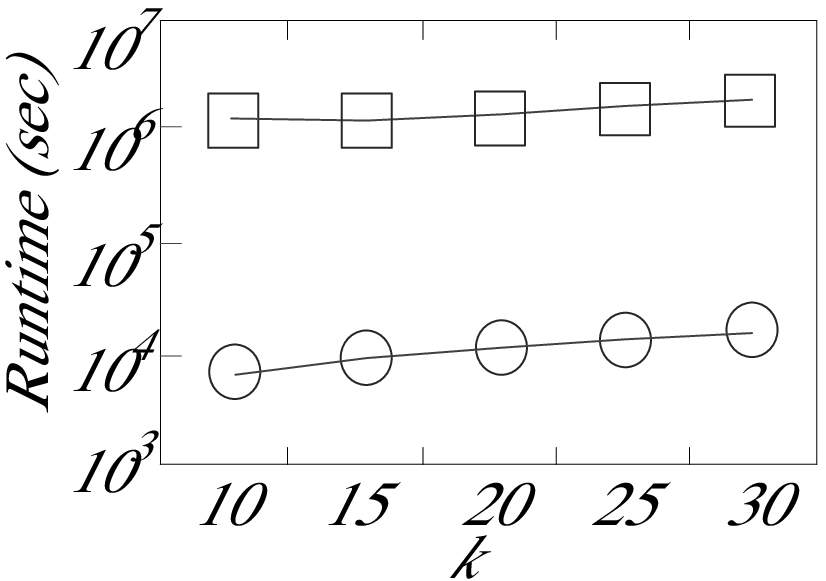}}}
  }\hspace{ 1 mm}
  \subfigure[POI Selection]{
   \label{effect-b} \vspace{-5mm}
   \raisebox{-0.5cm}{\includegraphics[width=1.25in]{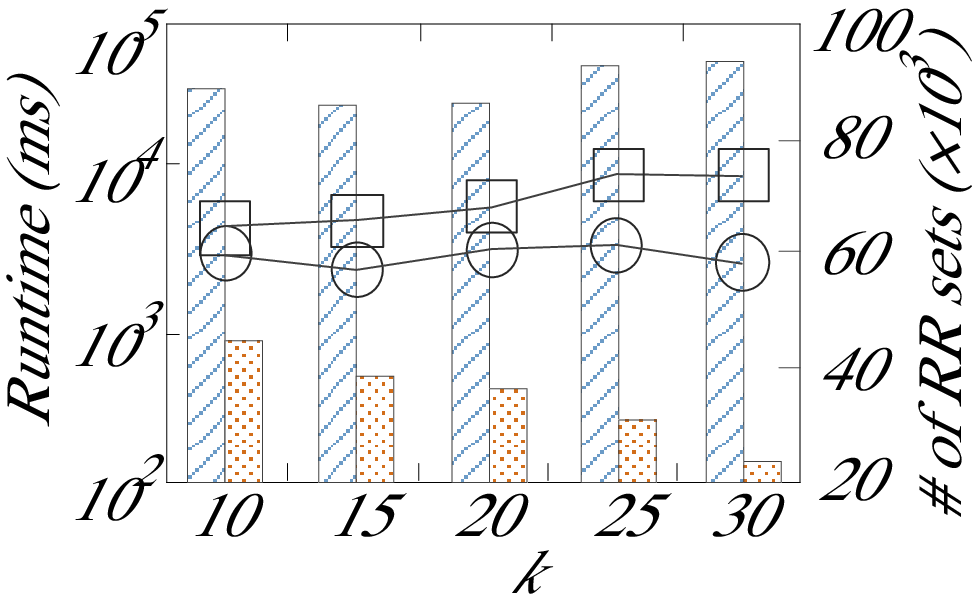}}
  }\hspace{ 1 mm}
  \subfigure[Approximation ratio]{
   \label{effect-b} \vspace{-5mm}
   \raisebox{-0.5cm}{\includegraphics[width=1.1 in]{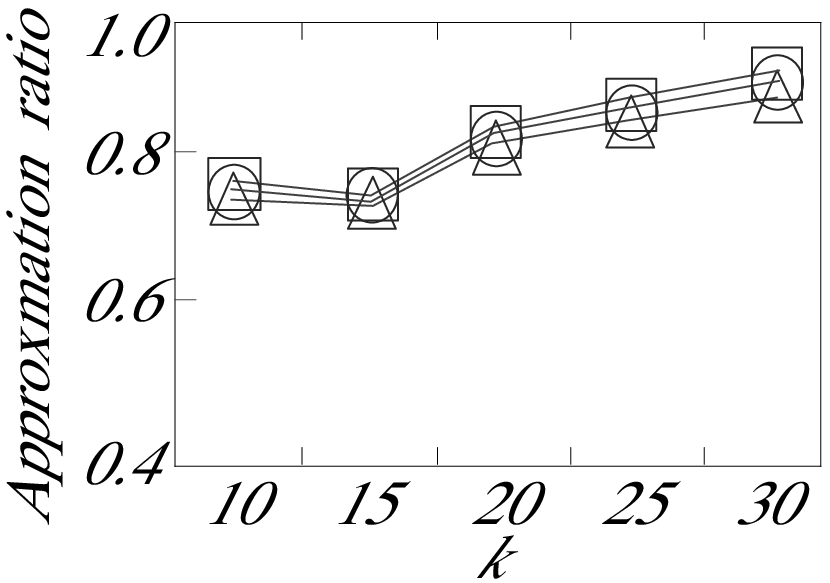}}
  }
\vspace{-2.7mm}
\caption{Performance comparison when varying $k$ on the Gowalla dataset}
\label{fig:varying_k}
\vspace{-4mm}
\end{figure*}

\subsection{Indexing Performance}

We compare the performance of our index (i.e., IG-NVD) with existing indexes (i.e., GIM-Tree~\cite{zhao2017icde}) for retrieving BR$k$NNs on geo-social networks. Statistics in Table~\ref{tab:index} illustrate that, our index needs much less storage space than the GIM-Tree when the dataset becomes large. Specifically, the index size of the GIM-Tree is nearly 5 GB on Gowalla and increases to more than 40 GB on the largest dataset, Twitter, which may not fit in the main memory on a commodity computer. 
In contrast, our index is only about 4 GB on Twitter, with a construction time less than 2 hours. The two indexes have similar construction time on the smallest dataset (i.e., LasVegas). This is because there are few users and POIs in LasVegas, thus making the pre-processing costs on social information indexed in GIM-Tree relatively low.

\vspace{-2mm}
\subsection{Effectiveness of POI Selection}
\label{exp:effective-POI}

We report the influence of $\mathcal{P}_s$ returned by the different POI selection policies using the real dataset LasVegas. As shown in Table \ref{case}, for different kinds of POI candidate sets in LasVegas, MaxInfBR$k$NN consistently outperforms the competitors. Also, Influencer performs the best among the competitors, which indicates the importance of taking into account social influencers. However, the gap between Influencer and MaxInfBR$k$NN is large, because Influencer only selects POIs affecting top influencers (top-$x$ influential users), but ignores users with relatively less influence, which leads to considerable influence loss. In contrast, MaxInfBR$k$NN detects all the users in BR$k$NNs and comprehensively evaluates their influence; thus, MaxInfBR$k$NN can find highly  qualified results. 
%Another observation is that, the influence of $\mathcal{P}_{s}$ selected from $\mathcal{P}_{c1}$ and $\mathcal{P}_{c2}$ are smaller than those from $\mathcal{P}_{c3}$ and $\mathcal{P}_{c4}$. This is because, keywords frequently covered by users will also appear frequently in POIs in real settings, and thus more competition exists in the top-$k$ ranking for POIs from $\mathcal{P}_{c1}$ and $\mathcal{P}_{c2}$ than $\mathcal{P}_{c3}$ and $\mathcal{P}_{c4}$. As a result, fewer users can be influenced by a POI from the former ones on average, resulting into smaller influence of $\mathcal{P}_{s}$.

\subsection{Efficiency and Scalability Evaluation}
\label{exp:perform}

%Note that BA and AP can answer MaxInfBR$k$NN with theoretical approximation guarantees, while HE only offers heuristic results. We evaluate the performance of our proposed methods by varying parameter settings (listed in Table \ref{paras}). 
We further utilize larger synthetic datasets (i.e., Gowalla and Twitter) to comprehensively evaluate the scalability and efficiency of our proposed methods (i.e., BA, AP, and HE presented in Sections~\ref{sec:preli},~\ref{subsec:AP}, and~\ref{sec:heuristic}, respectively). In each set of experiment, we only vary a single parameter while fixing the remaining ones at their defaults.

We evaluate the impact of different parameter settings on our methods, in terms of i) the total runtime, ii) the influence of retrieved POI set (i.e., $\mathbb{I}_p(\mathcal{P}_s)$), iii) the BR$k$NN retrieval time, iv) the overhead  incurred in POI selection and qualification (i.e., POI selection time and the number of sampled RR sets), v) the approximation ratio of our methods against the exact solution. Recall that the total runtime of BA and AP consists of the BR$k$NN retrieval time and the POI selection time. The numbers at the top of the columns in Figs.~\ref{fig:varying_b}(a)--\ref{fig:varying_P}(a) denote the percentage of the POI selection time in the total runtime. We do not report the number of sampled RR sets for HE, since HE utilizes heuristics instead of RR sets to select POIs.

\begin{figure*}[t]
\centering
\includegraphics[width=0.7\textwidth]{performance_icon.eps}\\
  \hspace{ -1 mm}
  \subfigure[Total Runtime]{
   \label{effect-a} \vspace{-5mm}
   \raisebox{-0.5cm}{\includegraphics[width=1.4in]{{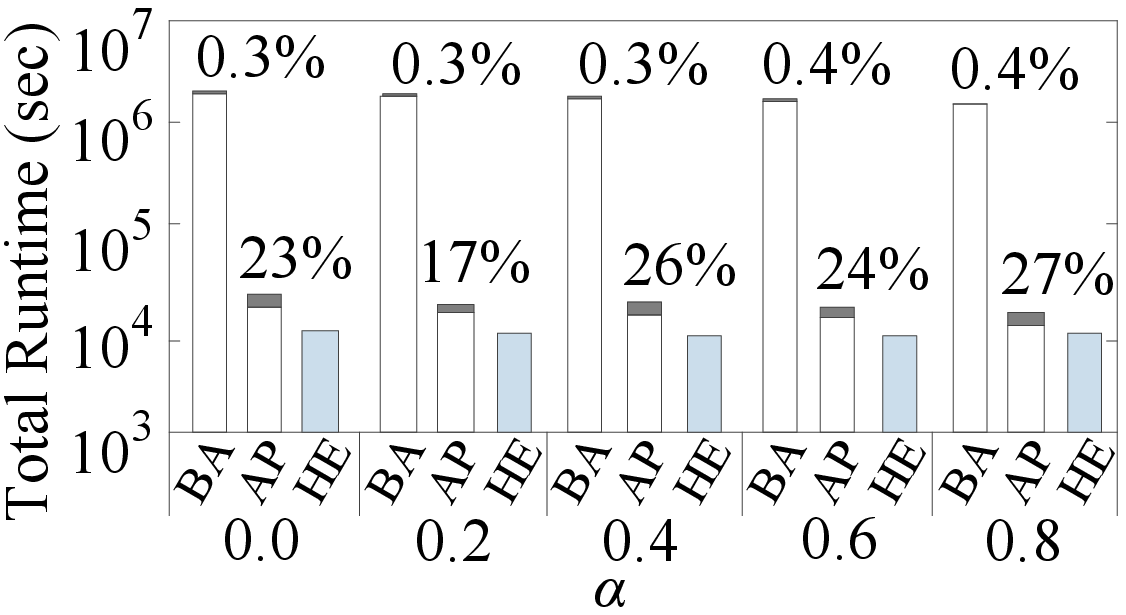}}} 
  }\hspace{ 1 mm}
  \subfigure[Influence of $\mathcal{P}_s$]{
   \label{effect-b} \vspace{-5mm}
   \raisebox{-0.5cm}{\includegraphics[width=1.08 in]{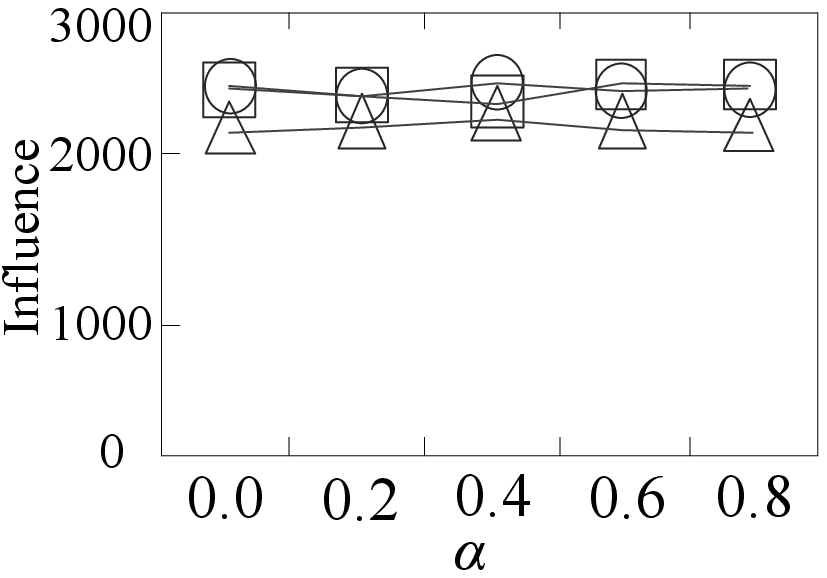}}
  }\hspace{ 1 mm}
  \subfigure[BR$k$NN Retrieval]{
   \label{effect-a} \vspace{-5mm}
   \raisebox{-0.5cm}{\includegraphics[width=1.08 in]{{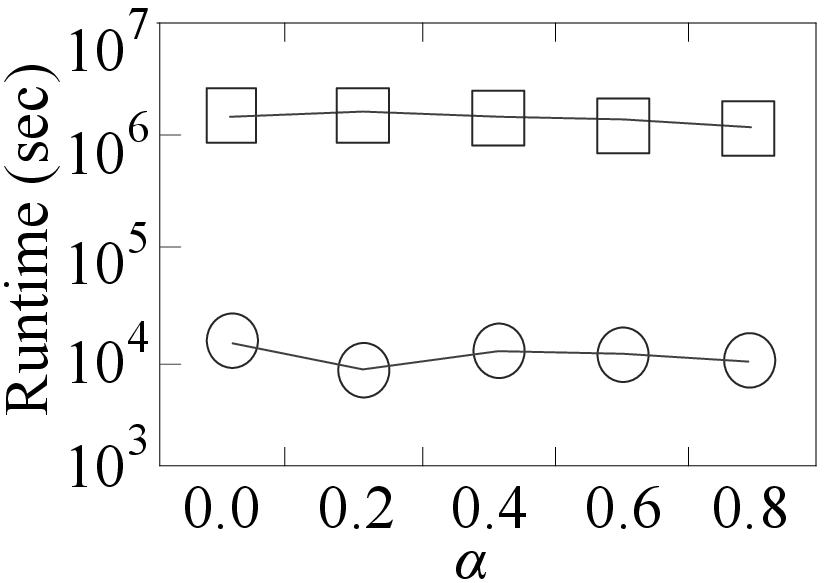}}}
  }\hspace{ 1 mm}
  \subfigure[POI Selection]{
   \label{effect-b} \vspace{-5mm}
   \raisebox{-0.5cm}{\includegraphics[width=1.25in]{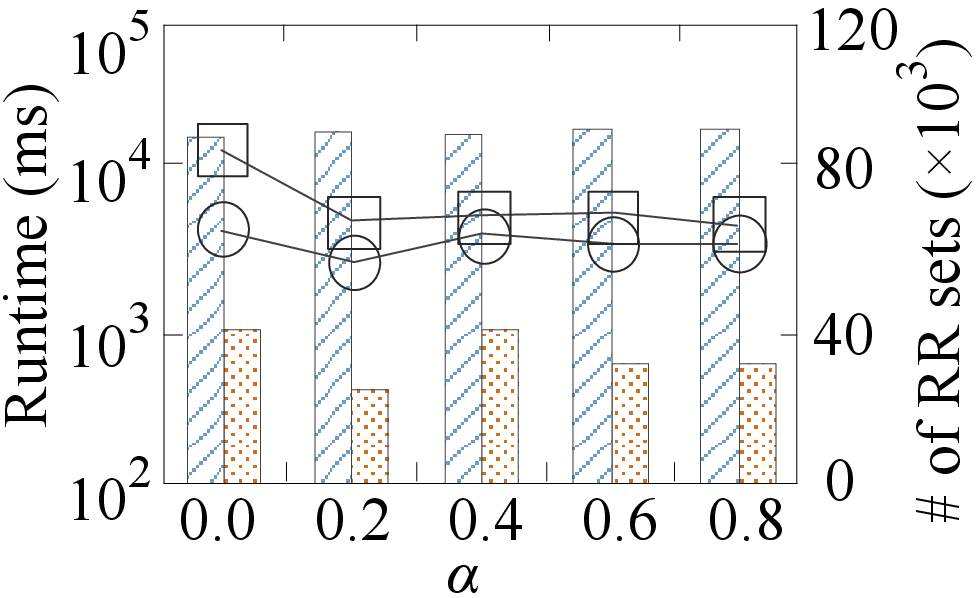}}
  }\hspace{ 1 mm}
  \subfigure[Approximation ratio]{
   \label{effect-b} \vspace{-5mm}
   \raisebox{-0.5cm}{\includegraphics[width=1.1 in]{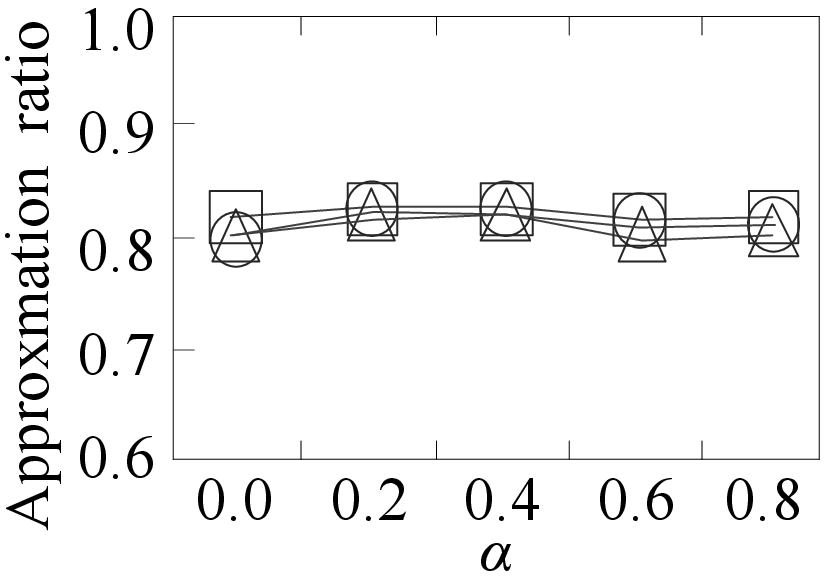}}
  }
\vspace{-2.7mm}
\caption{Performance comparison when varying $\alpha$ on the Gowalla dataset}
\label{fig:varying_alpha}
\vspace{-2.2mm}
\end{figure*}

\begin{figure*}[t]
\centering
  \hspace{ -1 mm}
  \subfigure[Total Runtime]{
   \label{effect-a} \vspace{-5mm}
   \raisebox{-0.5cm}{\includegraphics[width=1.4in]{{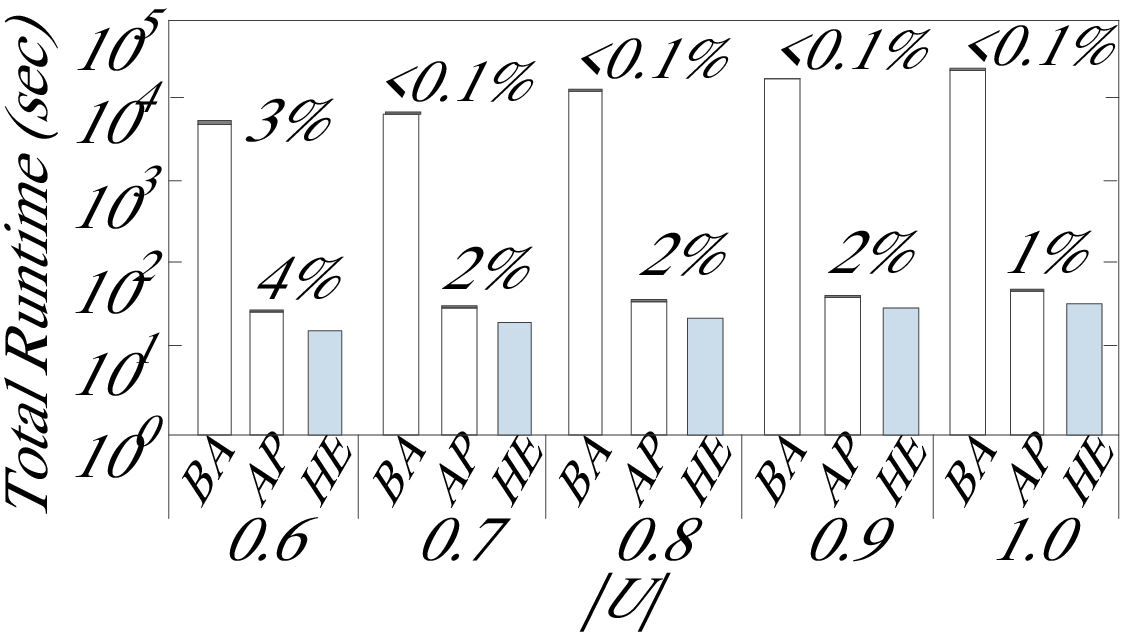}}} 
  }\hspace{ 1 mm}
  \subfigure[Influence of $\mathcal{P}_s$]{
   \label{effect-b} \vspace{-5mm}
   \raisebox{-0.5cm}{\includegraphics[width=1.08 in]{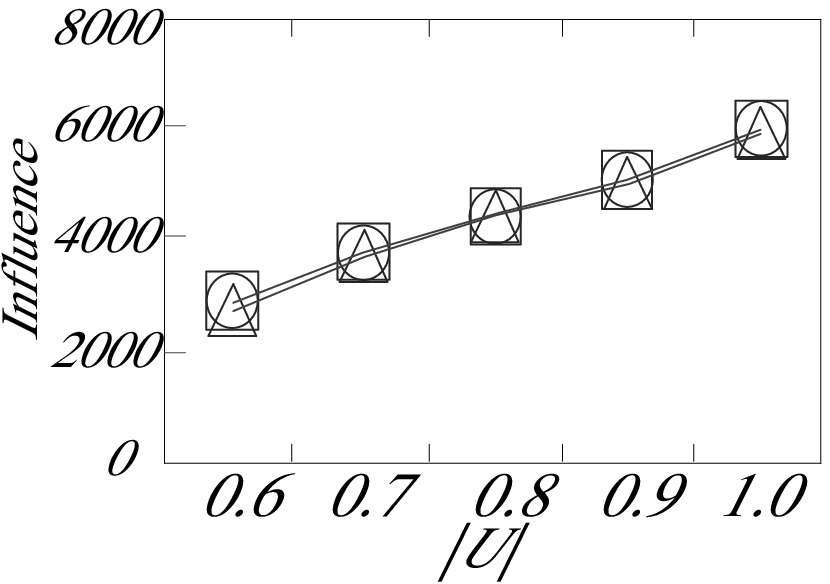}}
  }\hspace{ 1 mm}
  \subfigure[BR$k$NN Retrieval]{
   \label{effect-a} \vspace{-5mm}
   \raisebox{-0.5cm}{\includegraphics[width=1.08 in]{{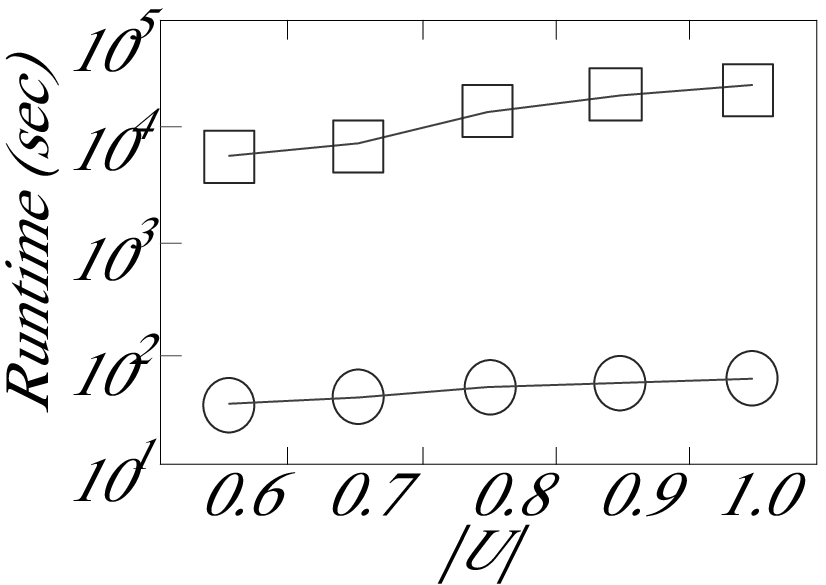}}}
  }\hspace{ 1 mm}
  \subfigure[POI Selection]{
   \label{effect-b} \vspace{-5mm}
   \raisebox{-0.5cm}{\includegraphics[width=1.25in]{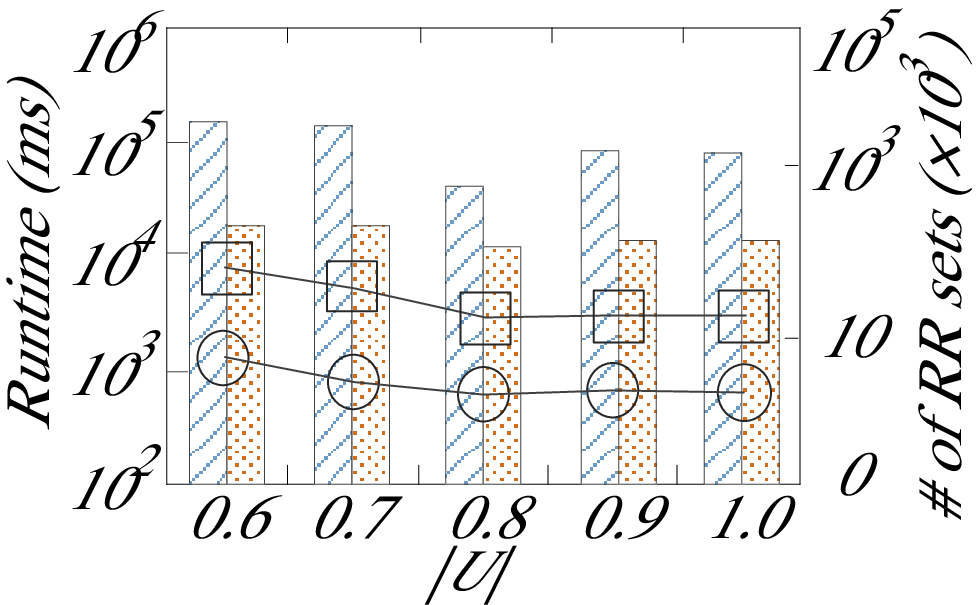}}
  }\hspace{ 1 mm}
  \subfigure[Approximation ratio]{
   \label{effect-b} \vspace{-5mm}
   \raisebox{-0.5cm}{\includegraphics[width=1.1 in]{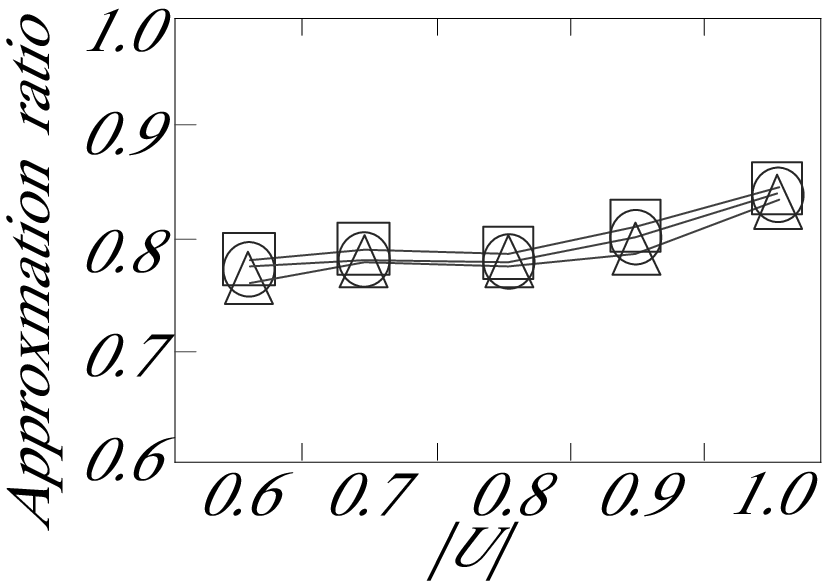}}
  }
\vspace{-2.7mm}
\caption{Performance comparison when varying $|\mathcal{U}|$ on the Twitter  dataset}
\label{fig:varying_U}
\vspace{-2.5mm}
\end{figure*}

\begin{figure*}[t]
\centering
  \hspace{ -1 mm}
  \subfigure[Total Runtime]{
   \label{effect-a} \vspace{-5mm}
   \raisebox{-0.5cm}{\includegraphics[width=1.4in]{{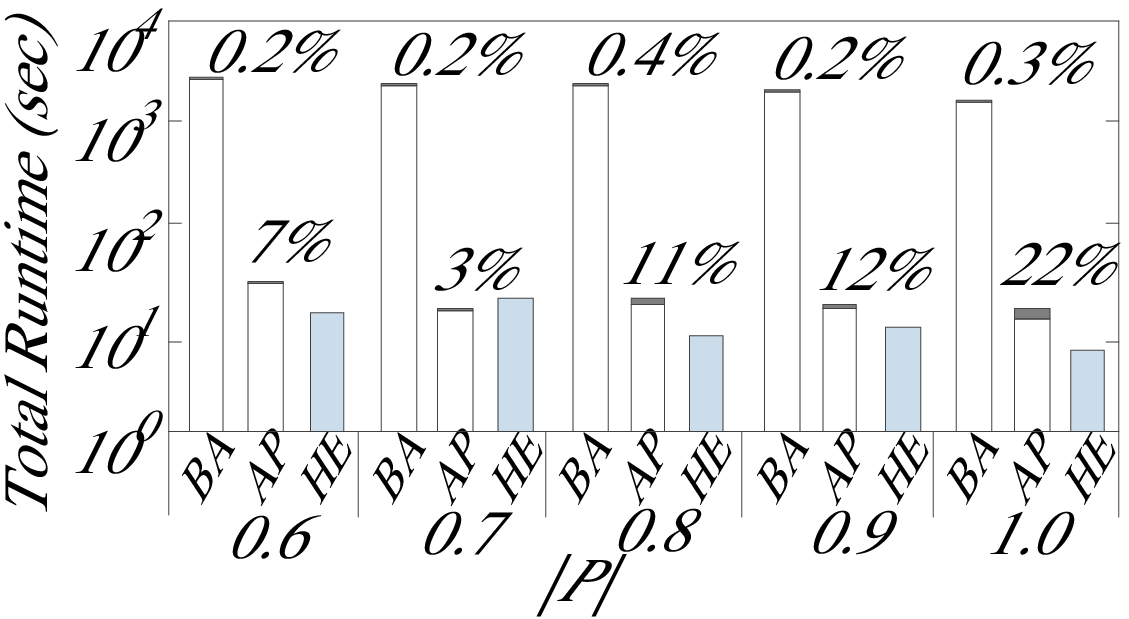}}} 
  }\hspace{ 1 mm}
  \subfigure[Influence of $\mathcal{P}_s$]{
   \label{effect-b} \vspace{-5mm}
   \raisebox{-0.5cm}{\includegraphics[width=1.08 in]{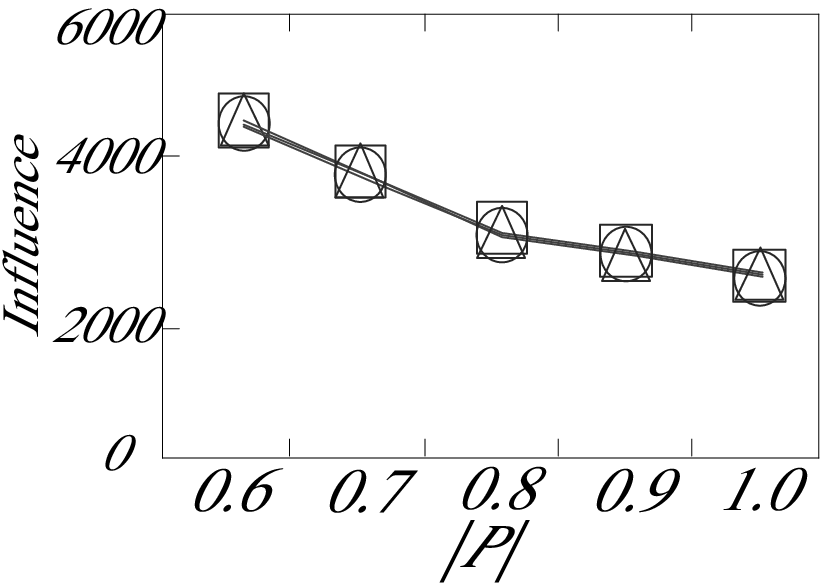}}
  }\hspace{ 1 mm}
  \subfigure[BR$k$NN Retrieval]{
   \label{effect-a} \vspace{-5mm}
   \raisebox{-0.5cm}{\includegraphics[width=1.08 in]{{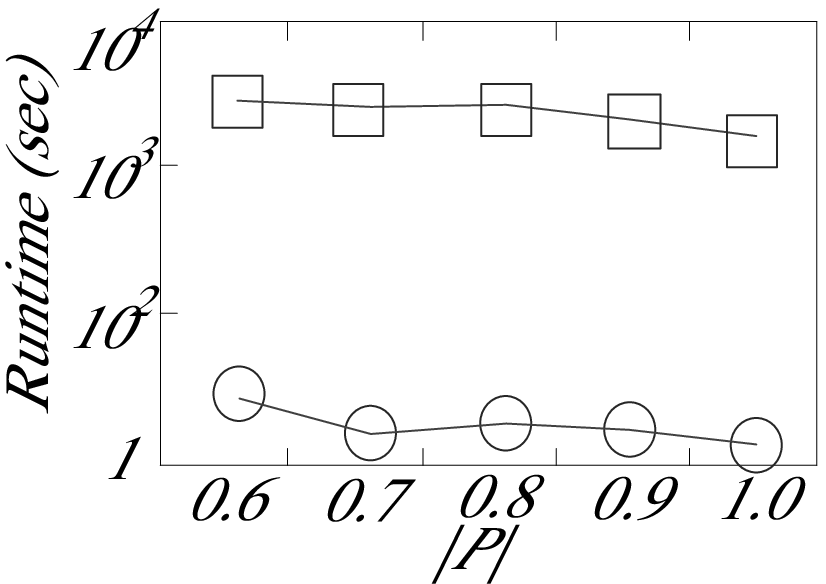}}}
  }\hspace{ 1 mm}
  \subfigure[POI Selection]{
   \label{effect-b} \vspace{-5mm}
   \raisebox{-0.5cm}{\includegraphics[width=1.25in]{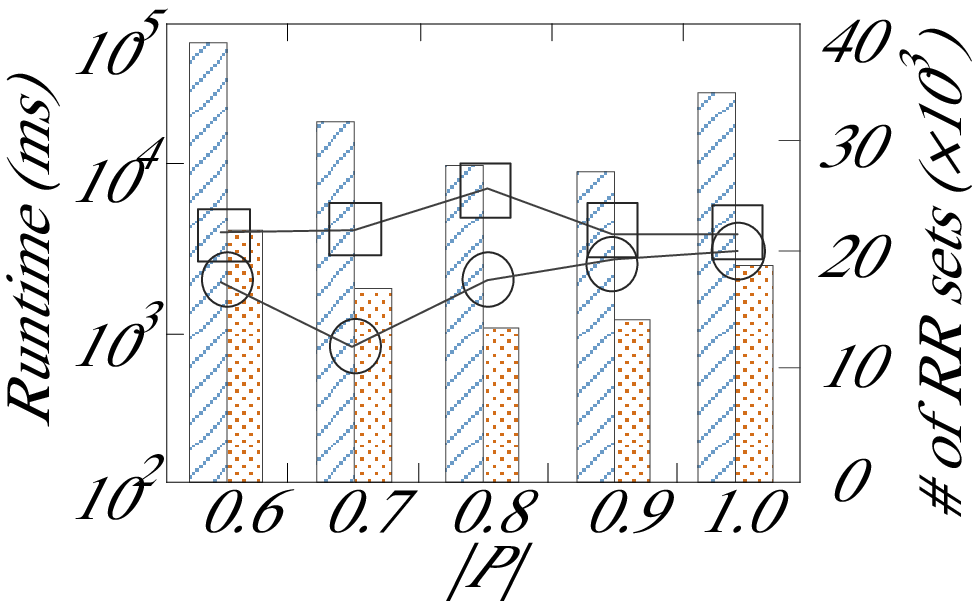}}
  }\hspace{ 1 mm}
  \subfigure[Approximation ratio]{
   \label{effect-b} \vspace{-5mm}
   \raisebox{-0.5cm}{\includegraphics[width=1.1 in]{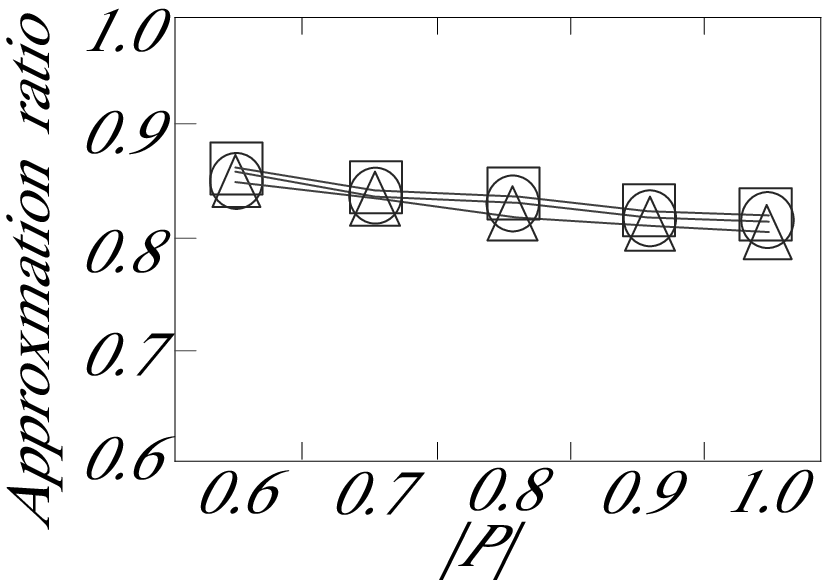}}
  }
\vspace{-2.7mm}
\caption{Performance comparison when varying $|\mathcal{P}|$ on the Gowalla  dataset}
\label{fig:varying_P}
\vspace{-3.5mm}
\end{figure*}

\emph{Effect of $b$}. The experimental results of our methods when varying $b$ are shown in Fig.~\ref{fig:varying_b}.  We observe that, although BA, AP, and HE achieve similar influence (cf.\ Fig.~\ref{fig:varying_b}(b)), the running overhead of BA is significantly larger than those of AP and HE. Also, the performance of BA is not robust to $b$. When $b=1$, the total runtime of BA increases sharply (cf.\ Fig.~\ref{fig:varying_b}(a)), and nearly a billion RR sets are sampled and maintained by BA (cf.\ Fig.~\ref{fig:varying_b}(d)). The reason is that BA only utilizes the coverage of RR sets to estimate the influence. Thus, when $b$ is small (i.e., $b=1$), relative estimation errors in BA grows, which forces BA to sample more RR sets to ensure the accuracy of the results. In contrast, AP and HE are more efficient and robust across different values of $b$. The BR$k$NN retrieval in AP is consistently 2 orders of magnitude faster than that in BA (cf.\ Fig.~\ref{fig:varying_b}(c)). The runtime of BR$k$NN retrieval in both BA and AP are not affected by $b$ since BR$k$NN computation does not change much when $|\mathcal{P}_c|$ is fixed. The influence of each method increases with $b$ while the approximation ratio drops (cf.\ Fig.~\ref{fig:varying_b}(e)). This occurs because more POIs are included in $|\mathcal{P}_s|$, which magnifies the errors in selecting POIs greedily.

\emph{Effect of $|\mathcal{P}_c|$}. Fig.~\ref{fig:varying_Pc} depicts the results when changing $|\mathcal{P}_c|$. The total runtime and the influence of each method increase with $|\mathcal{P}_c|$ (cf.\ Figs.~\ref{fig:varying_Pc}(a) and \ref{fig:varying_Pc}(b)), since more relevant users are evaluated during the process.
Note that the runtime of BR$k$NN retrieval in BA is prohibitively large on the Twitter dataset (more than $10$ hours), and increases sharply with the growth of $|\mathcal{P}_c|$ (cf.\ Fig.~\ref{fig:varying_Pc}(c)). In comparison, the BR$k$NN retrieval in AP runs much faster (only several minutes), and scales well as $|\mathcal{P}_c|$ grows. This illustrates the deficiency of existing methods~\cite{zhao2017icde}, and confirms that our batch processing algorithms are more scalable on larger datasets. As indicated by the percentages on the top of the columns in Fig.~\ref{fig:varying_Pc}(a), the POI selection time in both BA and AP accounts for less than 5\% in most cases, implying that the BR$k$NN retrieval is the query processing bottleneck.
In Fig.~\ref{fig:varying_Pc}(d), again, we find that the performance of BA is not robust and is sensitive to $|\mathcal{P}_c|$, while AP is relatively robust. The approximation ratio of our methods decreases slightly with $|\mathcal{P}_c|$, as more POIs are evaluated which increases the POI selection errors.

\emph{Effect of $k$}. Fig.~\ref{fig:varying_k} shows the results when varying $k$. As expected, AP and HE consistently outperform BA in terms of runtime, and all the methods achieve comparable influence. In particular, with the growth of $k$, the influence of the returned POI set as well as their approximation ratios ascend for each method (cf.\ Figs.~\ref{fig:varying_k}(b) and \ref{fig:varying_k}(e)). This is because, when $k$ becomes larger, more users are influenced by the POIs in $\mathcal{P}_c$, which increases influence and the accuracy of $\mathcal{P}_{s}$.

\emph{Effect of $\alpha$}. Fig.~\ref{fig:varying_alpha} illustrates the results when varying $\alpha$. Recall that  $\alpha$ controls the balance between textual similarity and social relevance in the scoring function, and that a higher value of $\alpha$ implies a higher preference for the social relevance. As expected, AP and HE consistently outperform BA in terms of running efficiency across different values of $\alpha$. The influence and the approximation ratio of each method are affected by $\alpha$ only little, because the largest influence achieved by any size-$b$ POI set only changes little when changing $\alpha$.

\emph{Effect of $|\mathcal{U}|$}. Fig.~\ref{fig:varying_U} depicts the results when varying the cardinality of users in $\mathcal{U}$. It is seen that the runtime of BA increases with $|\mathcal{U}|$, while the runtime of AP and HE are relatively stable. This shows the effectiveness of our pruning techniques, and the superiority of AP and HE in tackling large datasets. The influence and the approximation ratio of each method increase with $|\mathcal{U}|$, because more users are influenced by $\mathcal{P}_s$, which in turn enhances the accuracy of our algorithm when selecting influential POIs.

\emph{Effect of $|\mathcal{P}|$}. Fig.~\ref{fig:varying_P} plots the results when varying the cardinality of the POIs in $\mathcal{P}$.
As $|\mathcal{P}|$ grows, the running time of each method drops. This is because the top-$k$ score of each user increases with the growth of $\mathcal{P}$, enhancing the pruning power in all the methods. In addition, the influence and the approximation ratio of our methods drop as $|\mathcal{P}|$ grows. The reason is that, when $|\mathcal{P}|$ is large, more  competition exists in the top-$k$ ranking among POIs. Therefore, on average, fewer users can be included in the BR$k$NNs of each POI, incurring the results with less influence and lower accuracy.

\subsection{Effectiveness of Pruning Techniques}
\label{exp:effective-technique}

Finally, to determine whether evaluating socially relevant users and textually relevant users separately can achieve better pruning than evaluating all users together with combined score bounds~\cite{zhao2017icde}, we compare the numbers of unpruned users (i.e.,  candidates of BR$k$NN for verification) obtained by the two strategies. Fig.~\ref{fig:pruning-case}(a) presents the results on LasVegas. We observe that the number of unpruned users with separate pruning (performed by AP) is much smaller than the number of unpruned users with combined pruning (performed by BA). Also, Fig.~\ref{fig:pruning-case}(b) shows that  the number of score bound computations in AP is only $5$--$10\%$ of those in BA, which means score bounds derived in AP are much tighter, and thus most of the unnecessary computation in BA can be avoided by AP. This machine independent comparison demonstrates the advantage of Observation 1 when solving our problem.

\begin{figure}[t]
\centering
\vspace{-3mm}
\subfigure[\# of unpruned users]{
 \raisebox{-1.8mm}{\includegraphics[width=1.15in]{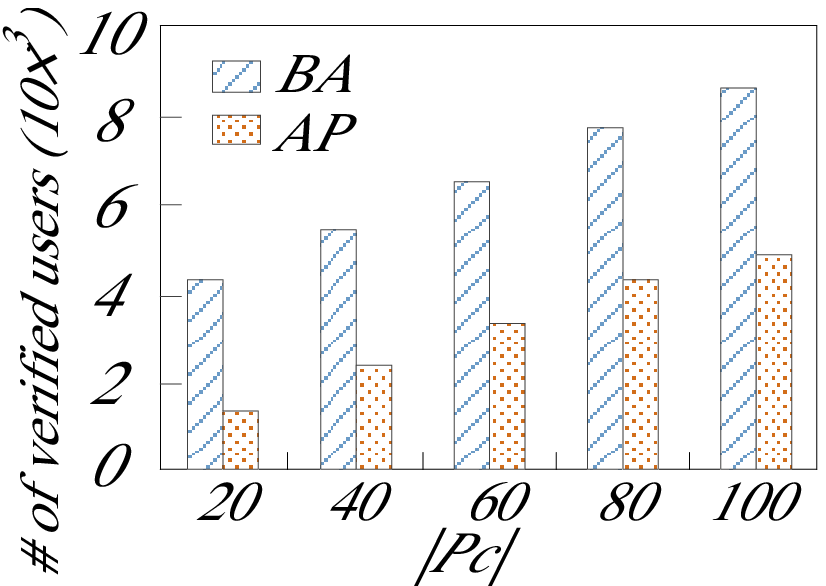}}
}
\hspace{0mm}
\subfigure[\# of bound computation]{
 \raisebox{-1.8mm}{\includegraphics[width=1.15in]{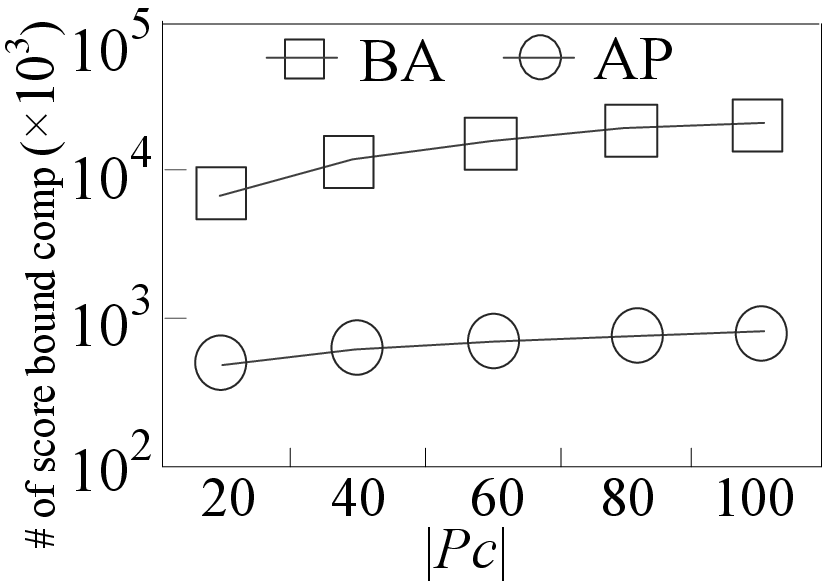}}
}
\vspace{-3.5mm}
\caption{Pruning performance comparison on the LasVegas dataset}
\label{fig:pruning-case}
\end{figure}

\section{Conclusions}
\label{sec:conclu}

We identify a new problem, called MaxInfBR$k$NN in geo-social networks. We prove that the problem is NP-hard, and develop a non-trivial baseline solution with theoretical guarantees by using  state-of-the-art techniques. To support efficient and scalable query processing, we present an efficient batch BR$k$NN processing framework, which encompasses effective POI selection policies to offer approximate and heuristic solutions to the problem. Extensive experiments on both real and synthetic datasets demonstrate the effectiveness and efficiency of our proposed methods. In the future, it is of interest to consider the problem in dynamic and distributed settings, where the data evolves over time.
\label{sec:con}

\balance
\bibliographystyle{abbrv}
\bibliography{ref}
\balance

\end{document}